\providecommand{\algorithmname}{Algorithm}
\theoremstyle{plain}
\newtheorem{thm}{\protect\theoremname}
  \theoremstyle{definition}
  \newtheorem{problem}{\protect\problemname}
  \theoremstyle{definition}
  \newtheorem{defn}{\protect\definitionname}
  \theoremstyle{plain}
  \newtheorem{lem}{\protect\lemmaname}
\newenvironment{lyxlist}[1]
{\begin{list}{}
{\settowidth{\labelwidth}{#1}
 \setlength{\leftmargin}{\labelwidth}
 \addtolength{\leftmargin}{\labelsep}
 }}
{\end{list}}
\newrobustcmd{\MakeTitleCase}[1]{%
  \ifthenelse{\ifcurrentfield{booktitle}\OR\ifcurrentfield{booksubtitle}%
    \OR\ifcurrentfield{maintitle}\OR\ifcurrentfield{mainsubtitle}%
    \OR\ifcurrentfield{journaltitle}\OR\ifcurrentfield{journalsubtitle}%
    \OR\ifcurrentfield{issuetitle}\OR\ifcurrentfield{issuesubtitle}%
    \OR\ifentrytype{REMOVEIFEXCLUDEBOOKbook}\OR\ifentrytype{mvbook}\OR\ifentrytype{bookinbook}%
    \OR\ifentrytype{booklet}\OR\ifentrytype{suppbook}%
    \OR\ifentrytype{collection}\OR\ifentrytype{mvcollection}%
    \OR\ifentrytype{suppcollection}\OR\ifentrytype{manual}%
    \OR\ifentrytype{periodical}\OR\ifentrytype{suppperiodical}%
    \OR\ifentrytype{proceedings}\OR\ifentrytype{mvproceedings}%
    \OR\ifentrytype{reference}\OR\ifentrytype{mvreference}%
    \OR\ifentrytype{report}\OR\ifentrytype{thesis}}
    {#1}
    {\MakeSentenceCase{#1}}}
\renewcommand{\geq}{\geqslant}
\renewcommand{\leq}{\leqslant}
\renewcommand{\epsilon}{\varepsilon}
\renewcommand{\Delta}{\Updelta}
\newcommand{\pav}{\ensuremath{\mathcal{A}_v}}
\newcommand{\column}{\ensuremath{\textup{column}}}
\newcommand{\row}{\ensuremath{\textup{row}}}
\newcommand{\toprow}{\ensuremath{\textup{toprow}}}
\newcommand{\bottomrow}{\ensuremath{\textup{bottomrow}}}
\newcommand{\dist}{\ensuremath{\textup{columndist}}}
\newcommand{\cost}{\ensuremath{\textup{cost}}}
\newcommand{\calA}{\ensuremath{\mathcal{A}}}
\newcommand{\calC}{\ensuremath{\mathcal{C}}}
\newcommand{\calD}{\operatorname{ED}}
\newcommand{\calT}{\ensuremath{\mathcal{T}}}
\newcommand{\LCS}{\ensuremath{\textup{LCS}}}
\newcommand{\len}{\ensuremath{\textup{length}}}
\newcommand{\expected}[1]{\ensuremath{\mathbb{E}[#1]}}
\newcommand{\expecteddisplay}[1]{\ensuremath{\mathbb{E}\left[#1\right]}}
\newcommand{\ED}[1]{\ensuremath{\textup{Edit}(#1)}}
\newcommand{\LCSt}{\ensuremath{\textup{LCS}(F,S_t)}}
\newcommand{\HAMt}{\ensuremath{\textup{Ham}(F,S_t)}}
\newcommand{\Xv}{\ensuremath{\mathcal{S}_v}}
\newcommand{\Xvknown}{\ensuremath{\widetilde{\mathcal{S}}_v}}
\newcommand{\Xvfix}{\ensuremath{\widetilde{s}_v}}
\newcommand{\Yv}{\ensuremath{Y_v}}
\newcommand{\vitset}{\ensuremath{\mathcal{I}_v}}
\newcommand{\vit}{\ensuremath{I_v}}
\newcommand{\vtime}{\ensuremath{R_v}}
\newcommand{\vcap}{\ensuremath{r_v}}
\newcommand{\venc}{\ensuremath{Z_v}}
\newcommand{\pad}{\ensuremath{\rho}} 
\newcommand{\pchar}{\ensuremath{\diamond}} 
\newcommand{\bchar}{\ensuremath{\texttt{\textup{h}}}} 
\newcommand{\schar}{\ensuremath{\texttt{\textup{t}}}} 
\newcommand{\fchar}{\ensuremath{\star}} 
\newcommand{\PA}{well-aligned\xspace}
\newcommand{\pred}{\rho}
\newcommand{\mask}{\textup{predecessor}}
\newcommand{\Dinter}{\ensuremath{\widetilde{D}}}
\newcommand{\progression}{\textup{block}}
\newcommand{\first}{\ensuremath{\alpha}}
\newcommand{\val}[1]{\text{val}(#1)}
\newcommand{\D}{{\ensuremath{D}}} 
\newcommand{\Patrascu}{P{\v a}tra{\c s}cu\xspace}
  \providecommand{\definitionname}{Definition}
  \providecommand{\lemmaname}{Lemma}
  \providecommand{\problemname}{Problem}
\providecommand{\theoremname}{Theorem}
\begin{document}


\title{Cell-Probe Bounds for Online Edit Distance and\\
 Other Pattern Matching Problems}

\author{Raphaël Clifford \quad{}~ Markus Jalsenius \quad{}~ Benjamin
Sach\\
 {\small{}Department of Computer Science}\\
{\small{}University of Bristol}\\
{\small{}Bristol, UK}}
\maketitle
\begin{abstract}
We give cell-probe bounds for the computation of edit distance, Hamming
distance, convolution and longest common subsequence in a stream.
In this model, a fixed string of $n$ symbols is given and one $\delta$-bit
symbol arrives at a time in a stream. After each symbol arrives, the
distance between the fixed string and a suffix of most recent symbols
of the stream is reported. The cell-probe model is perhaps the strongest
model of computation for showing data structure lower bounds, subsuming
in particular the popular word-RAM model. 
\begin{itemize}
\item We first give an $\Omega\big((\delta\log n)/(w+\log\log n)\big)$
lower bound for the time to give each output for both online Hamming distance
and convolution, where $w$ is the word size. This bound relies on
a new encoding scheme and for the first time holds even when $w$
is as small as a single bit. {\small \par}
\item We then consider the online edit distance and longest common subsequence
problems in the bit-probe model ($w=1$) with a constant sized input
alphabet. We give a lower bound of $\Omega(\sqrt{\log{n}}/(\log{\log{n}})^{3/2
})$
 which applies for both problems. This second set of results
relies both on our new encoding scheme as well as a carefully constructed
hard distribution. {\small \par}
\item Finally, for the online edit distance problem we show that there is an $
O((\log^{2}{n})/w)$
upper bound in the cell-probe model. This bound gives a contrast to our new lower bound and also establishes an exponential
gap between the known cell-probe and RAM model complexities. {\small \par}
\end{itemize}
\end{abstract}

\section{Introduction}

The search for lower bounds in general random-access models of computation
provides some of the most important and challenging problems within
computer science. In the offline setting where all the data representing
the problem are given at once, non-trivial, unconditional, time lower
bounds still appear beyond our reach. One area where there has however
been success in proving time lower bounds is in the field of dynamic
data structure problems (see for example~\cite{FS1989:chronogram,PD2006:Low-Bounds,
Pat2008:Thesis,Larsen:2012,Larsen:2012:focs}
and references therein) and online or streaming problems~\cite{CJ:2011,CJS:2013}.

We consider streaming problems where there is a given fixed array
of $n$ values and a separate stream of values that arrive one at
a time. After each value arrives in the stream, a function of the
fixed array and the latest values of the stream is computed and reported.
We consider several fundamental measures: edit distance, Hamming distance,
inner product/convolution and longest common subsequence (LCS). Efficiently
finding patterns in massive and streaming data is a topic of considerable
interest because of its wide range of practical applications.

For each problem we give new cell-probe lower bounds. For the edit
distance problem we also provide a cell-probe upper bound that is
exponentially faster than was previously known. The cell-probe model
is perhaps the strongest model of computation for showing data structure
lower bounds, subsuming in particular the popular word-RAM model.

\paragraph{Hamming distance and convolution}
Our first set of results concern cell-probe lower bounds for Hamming distance and convolution.
For brevity, we write $[q]$ to denote the set $\{0,\dots,q-1\}$,
where $q$ is a positive integer.
\begin{problem}
[\textbf{Online Hamming distance and convolution}]\label{problem:hamming}
For a fixed array $F\in[q]^{n}$ of length $n$, we consider a stream
in which symbols from $[q]$ arrive one at a time. In the online \emph{Hamming~distance~problem},
for each arriving symbol, before the next value
arrives, we output the Hamming distance between $F$ and the array consisting of the latest $n$ values of the stream. In the \emph{convolution~problem}, the output is instead the inner product between $F$ and
the latest $n$ values in the stream.
\end{problem}


We show a lower bound for the expected amortised time per output of
any randomised algorithm that solves either the Hamming distance or convolution problem.
Throughout the paper we let $w$
denote the word size and $\delta=\lfloor\log_{2}{q}\rfloor$.

\begin{thm}
\label{thm:conham} In the cell-probe model with any word size $w$
and positive integer $q$, the expected amortised time per output
of any randomised algorithm that solves either the Hamming distance
problem or the convolution problem is
\[
\Omega{\left(\frac{\delta\log n}{w+\log\log n}\right)}.
\]
This lower bound also holds even when outputs are reported modulo~$q$.
\end{thm}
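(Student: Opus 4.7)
The plan is to prove this lower bound via an information-theoretic encoding argument. I would draw the fixed array $F$ uniformly at random from $[q]^n$, so $H(F) = n\delta$, and design a stream $S$ of length $\Theta(n)$ that, together with a suitable preamble, forces the sequence of outputs to determine $F$ (or a constant fraction of its entropy). For convolution this is immediate because each output is a known linear function of $F$, so $F$ can be recovered by choosing $S$ to be a sequence of indicator-like inputs; for Hamming distance, either work with $q$ a prime power and use a singleton stream that reveals positions of $F$ one at a time, or use a random stream and a linear-algebraic averaging argument, taking a little more care in the modular case.

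The core of the proof is a hierarchical information-transfer decomposition in the style of \Patrascu--Demaine. I would partition the timeline of arrivals into $\Theta(\log n)$ levels of geometrically varying granularity. At level $\ell$, distinguish an earlier interval $A_\ell$ of size $\Theta(2^\ell)$ from a later interval $B_\ell$, and define $\textup{IT}_\ell$ to be the number of probes made in $B_\ell$ that touch cells most-recently written in $A_\ell$. By a standard entropy-counting argument, since the outputs in $B_\ell$ at each level carry information about the stream symbols of $A_\ell$, which interact with $F$, the sum $\sum_\ell \textup{IT}_\ell$ is essentially forced to account for the full entropy $n\delta$, split roughly equally over the $\log n$ levels.

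The second ingredient, the \emph{new encoding scheme}, is what ultimately replaces the $\log n$ one would lose per probe by $\log \log n$. Given the algorithm as a black box, I would construct an encoder of $F$ that writes down, for each level $\ell$: a one-time sorted list of the cells referenced during a super-interval containing $A_\ell$ and $B_\ell$; the contents of each information-transfer probe at the time it is made ($O(w)$ bits per probe); and an index into the sorted list identifying which cell is being probed. Under the contrapositive hypothesis that the amortised cost is sublogarithmic, the list at level $\ell$ has polynomially few entries in the epoch length, so each index fits in $O(\log\log n)$ bits using a prefix-free code, and the one-time sorted list is absorbed into lower-order terms when summed over levels. Combining with the decoder, which simulates the algorithm on the candidate $F$ and outputs, verifies this scheme produces $F$ uniquely. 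Equating encoding length $O(\sum_\ell \textup{IT}_\ell \cdot (w+\log\log n))$ with $\Omega(n\delta)$ and using that the total probe count is $tn$ where $t$ is the amortised per-output cost gives $t = \Omega(\delta \log n / (w+\log\log n))$ after averaging over the $\log n$ levels.

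The main obstacle will be implementing the compact address scheme so that it is simultaneously valid at all $\Theta(\log n)$ levels, because the sorted cell lists at different levels are nested and the decoder must unambiguously reconstruct the sequence of probes in time order while reusing shared prefix information. A subtle secondary issue is ensuring the argument survives in the regime $w=1$, where one cannot afford even an additive $O(\log n)$-bit header per probe; this is where the amortisation over all probes in a level, rather than per-probe accounting, becomes essential. Finally, handling the modular output variant requires showing that the mod-$q$ reductions retain enough information about $F$, which for convolution follows from a simple Chinese-remainder-style aggregation over shifts of the stream, and for Hamming distance from the fact that the count itself is already in $\{0,\ldots,n\}$.
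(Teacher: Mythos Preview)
Your high-level architecture---Yao's principle, a Patrascu--Demaine information-transfer tree, and an encoding argument---matches the paper. But the core step, your ``new encoding scheme,'' does not actually deliver $O(w+\log\log n)$ bits per information-transfer probe as you claim.

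Consider your proposal: store a sorted list of all cells referenced in a super-interval containing $A_\ell$ and $B_\ell$, then encode each information-transfer probe as an index into that list. At level $\ell$ the interval has length $\Theta(2^\ell)$, so even under a contrapositive bound of $t$ amortised probes per step, the list has $\Theta(t\cdot 2^\ell)$ entries and an index costs $\Theta(\ell + \log t)$ bits, not $O(\log\log n)$. Moreover, the list itself depends on the unknown inputs in $A_\ell$, so it must be part of the encoding; writing out the addresses costs $\Theta(t\cdot 2^\ell\cdot\log n)$ bits per interval, and summed over all intervals at a level that is $\Theta(tn\log n)$, which swamps the signal rather than being ``absorbed into lower-order terms.'' Your remark about nesting lists across levels gestures at a fix but does not supply one.

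The paper's mechanism is genuinely different and is what makes the bound go through. Instead of addressing cells at all, it identifies each information-transfer cell by the \emph{time step at which it is first read} during the output interval: the decoder simulates the algorithm, and the encoding specifies which of the $R_v$ reads are information-transfer reads (costing $\log\binom{R_v}{I_v}$ bits) together with their $w$-bit contents. This ``counting-based'' encoding is then combined with the old address-based encoding via a threshold $r_v=\ell_v\delta\log^2 n$ on $R_v$: when $R_v\ge r_v$ use addresses, otherwise use counting. A short Markov-inequality argument (using the ``fast node'' hypothesis $\mathbb{E}[R_v]\le \ell_v\delta\log n$) controls the address-based contribution, and a bootstrapping step---first getting a crude lower bound on $\mathbb{E}[I_v]$, then feeding $\log\mathbb{E}[I_v]$ back into the denominator---is needed to squeeze the $\log\binom{R_v}{I_v}$ term down to $O(w+\log\log n)$ per probe. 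None of these ingredients appear in your proposal. Separately, the paper makes the \emph{stream} random against a carefully chosen fixed $F$ (citing prior constructions for the high-entropy property), rather than drawing $F$ at random as you propose; your Hamming-distance sketch in particular would need substantial work to yield high conditional entropy at every node of the tree.
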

These lower bounds can be compared with the known $O((\delta/w)\log{n})$
time upper bounds for both the online Hamming distance and convolution in
the cell-probe model~\cite{CEPP:2011,CJS:2013}. To obtain these results we first provide a new method that gives a
straightforward and clean unifying framework for streaming lower bounds
in the cell-probe model with small word sizes. This itself marks a
methodological advance in the development of lower bounds for streaming
problems.

Cell-probe lower bounds of $\Omega((\delta/w)\log{n})$ time per output
for the convolution problem~\cite{CJ:2011} and Hamming distance
problem~\cite{CJS:2013} had previously been shown only when the
word size $w\in\Omega(\log{n})$. These bounds were therefore meaningful
only when $\delta$, the number of bits needed to represent a symbol,
was sufficiently large, typically $\Theta(\log{n})$. From a practical
point of view it is arguable that such very large input alphabets
are rarely seen. Our improved lower bound gives a smooth trade-off
that holds for word and symbol sizes as small as a single bit. In
the perhaps most interesting case where $\delta=w$ we therefore show
an $\Omega(\log{n})$ lower bound for $w\in\Omega(\log{\log{n}})$.
In the bit-probe model~(i.e. $w=1$), our results imply an $\Omega(\log^{2}{n}/\log{\log{n}})$
lower bound when $\delta=\Omega(\log{n})$. We are also able to obtain
interesting lower bounds for smaller values of $\delta$, in particular
an $\Omega(\log{n}/\log{\log{n}})$ lower bound for binary inputs
($\delta=1$) in the bit-probe model. The bit-probe model is
considered theoretically appealing due to its machine independence
and overall cleanliness~\cite{M:1993,PP:2007:dynamicbitprobe}.

In the unit-cost word-RAM model with $w, \delta \in\Theta(\log n)$, the Hamming distance problem
can be solved in $O(\sqrt{n\log{n}})$ time per arriving symbol, and
the convolution problem can be solved in $O(\log^{2}n)$ time per
arriving symbol~\cite{CEPP:2011}. The lower bounds we give therefore might appear distant from the known RAM model upper bounds in the case of Hamming distance. However, an online-to-offline reduction of~\cite{CEPP:2011} also tells us that any improvement to the online lower bound in the RAM model would imply a new super-linear and offline RAM model lower bound. As no such lower bound is known for any problem in NP this seems a not inconsiderable barrier.

\paragraph{Edit distance and LCS} Our next set of results concern edit distance and longest common subsequence
(LCS), for which no previous, non-trivial cell-probe bounds were known.
For the edit distance and LCS problems we define $S(k)$ to be the
$k$-length string representing the most recent $k$ symbols in the
stream and $\ED{A,B}$ to be the minimal number of single symbol edit
operations (replace, delete and insert) required to transform string
$A$ into string $B$. Analogously, $\LCS(A,B)$ is defined to be the length of an LCS of $A$ and $B$.  

In our online setting, the natural way of defining
edit distance is by minimising the distance over all suffixes of the
stream. For the related LCS problem we take a slightly different
approach to avoid the LCS rapidly converging to the entire fixed string
$F$. As a result we consider a fixed length sliding window of the
stream instead, as in the Hamming distance problem.

\begin{problem}
[\textbf{Online edit distance and LCS}]\label{problem:editdistance}
For a fixed string $F\in[q]^{n}$ of length~$n$, we consider a stream
in which symbols from $[q]$ arrive one at a time. In the \emph{edit~distance~problem},
for each arriving symbol we output $\min_{k}\ED{F,S(k)}$.
In the \emph{longest~common~subsequence (LCS) problem}, for each arriving symbol
we output $\LCS(F,S(n))$.
\end{problem}

We show a lower bound for the expected amortised time per output of
any randomised algorithm that solves either the edit distance problem
or the LCS problem.

\begin{thm}
\label{thm:lcs} In the cell-probe model, where the word size $w=1$
and the alphabet size is at least~4, the expected amortised time
per output of any randomised algorithm that solves either the edit
distance problem or the LCS problem is
\[
\Omega{\left(\frac{\sqrt{\log{n}}}{(\log{\log{n}})^{3/2}}\right)}.
\]
This lower bound holds even when the outputs are succinctly encoded.
\end{thm}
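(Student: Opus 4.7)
The plan is to establish Theorem~\ref{thm:lcs} by an information-theoretic encoding argument in the spirit of \Patrascu-style cell-probe lower bounds, combining the improved low-word-size encoding that already underlies Theorem~\ref{thm:conham} with a new hard input distribution tailored to the online edit distance / LCS operator. The high-level goal is to exhibit a distribution on $(F,\text{stream})$ and a set of designated ``query'' output times such that the sequence of algorithm outputs, together with a short bit-probe transcript, can be used by a decoder to recover a large amount of the random input bits. Balancing the encoding length against the input entropy then forces a lower bound on the expected number of probes per output.

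First I would design the hard distribution. The 4-symbol alphabet is split into two \emph{content} symbols, used to carry random bits, and two \emph{anchor/filler} symbols, used to rigidly pin the alignment under the $\min_k \ED{F,S(k)}$ operator. The fixed string $F\in[4]^n$ is built with an anchor skeleton so that the minimising suffix in the edit distance formulation (and analogously the optimal LCS matching) is almost surely forced into one specific alignment; changing a single content symbol of the stream should then shift the output by a controlled additive amount, so that the edit distance output behaves like a Hamming count over a selected set of positions. The stream is then organised into geometrically spaced \emph{information epochs} of lengths $\ell_i\approx n\cdot 2^{-i}$, each consisting of a random content block bracketed by anchor padding, followed by a later \emph{query phase} whose outputs sequentially extract the blocks from the algorithm's memory.

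The encoding and decoding themselves reuse the machinery from Theorem~\ref{thm:conham}. Given the bit-probe transcript of the query phase, each cell probed is specified by its address together with an $O(\log\log n)$-bit descriptor drawn from the new encoding scheme (this is precisely where the $w=1$ improvement over the naive $\Theta(\log n)$-bit address is used). Fixing an appropriate snapshot of memory at the boundary between the write phase and the query phase lets the decoder simulate the algorithm forwards through the query phase, read the outputs, and invert the predictable alignment structure of the hard distribution to recover the content blocks. A standard averaging over the random snapshot shows that, if the amortised probe count per output is $t$, then the total encoding length is $O(tk\cdot \log\log n)$ times the number of query outputs; comparing with the entropy of the $k$ content blocks yields the lower bound on $t$.

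The main obstacle, and the source of the $\sqrt{\log n}/(\log\log n)^{3/2}$ rate rather than the $\log n/\log\log n$ rate achieved for Hamming distance, lies in the rigidity requirement on the hard distribution. Edit distance and LCS tolerate local insertions and deletions, so without substantial anchor padding around each content block a single stream change can be absorbed rather than revealed in the output. I expect the padding length required to isolate the $i$th content block from its neighbours to scale like a small power of $\ell_i$, which in turn caps the number of geometrically spaced epochs that fit into a stream of length $O(n)$ at roughly $k\approx \sqrt{\log n}/\log\log n$. A secondary but technical obstacle will be verifying that the alignment is in fact pinned with high probability under the hard distribution, so that the decoder's inversion step is well defined; this likely needs a combinatorial argument about optimal edit-distance alignments between $F$ and anchor-padded content blocks, and is where a careful choice of the anchor skeleton in $F$ becomes critical.
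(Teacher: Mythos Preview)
Your proposal correctly identifies the two essential ingredients the paper uses --- a hard distribution with anchor/filler padding that forces the LCS (and, via a squeezing argument, the edit distance) to track Hamming distance, and the low-word-size mixed encoding already underlying Theorem~\ref{thm:conham} --- but both your decomposition and your account of where the $\sqrt{\log n}$ loss originates diverge from the paper.

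The paper does \emph{not} use a chronogram-style layout with designated write epochs followed by a query phase. It uses the information transfer tree $\calT$: a single, \emph{uniform} random stream is fixed (every $(\rho+1)$-th symbol is an independent $\{\bchar,\schar\}$ coin flip, all others are the filler $\pchar$), and for \emph{every} internal node $v$ one lower-bounds the information transfer $\vit$ from the left half of $v$'s interval to the right half. One hard distribution must serve all nodes of the tree simultaneously; the device that makes this possible is the placement of the $\bchar$-symbols in $F$ at power-of-two distances from the right end, so that for every node $v$ with $\ell_v\ge 2\sqrt{n}$ exactly one $\bchar$ of $F$ slides across $\Xv$ during the interval where $\Yv$ is produced. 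There is no separate query phase: the outputs at every arrival are the queries.

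Correspondingly, the $\sqrt{\log n}$ loss does \emph{not} come from the number of epochs being capped. It comes from the padding length $\rho$. One needs $\rho$ large enough that with probability at least $9/10$ the optimal LCS uses only $\bchar$-symbols from the middle column of the random matrix $M_t$ (so that $\LCSt=n-\HAMt$); a Hoeffding and union bound over the $O(\log^3 n)$ width-one submatrices of $M_t$ forces $\rho=\Theta(\sqrt{\log n\cdot\log\log n})$. This makes each relevant node only \emph{medium-entropy}, namely $H(\Yv\mid\Xvknown=\Xvfix)\ge k\,\ell_v/\rho$, rather than high-entropy. Plugging this into the mixed address-based/counting-based encoding gives $\expected{\vit}=\Omega\bigl(\ell_v/(\rho\log\log n)\bigr)$, and summing over the $\Theta(\log n)$ levels of $\calT$ yields the amortised bound $\Omega\bigl((\log n)/(\rho\log\log n)\bigr)=\Omega\bigl(\sqrt{\log n}/(\log\log n)^{3/2}\bigr)$.

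Your chronogram sketch also has an arithmetic slip: if the padding around a block of length $\ell_i$ scales like a small power of $\ell_i$, geometrically spaced epochs still number $\Theta(\log n)$ inside an $O(n)$-length stream, not $\sqrt{\log n}/\log\log n$, so that cannot be the source of the loss. The real bottleneck is the per-node entropy being depressed by the padding factor $\rho$, not the epoch count.
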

A property of both these problems is that any two consecutive outputs
differ by at most one. This allows each output to be specified in
a constant number of bits. The restriction on the word and input
alphabet size in Theorem~\ref{thm:lcs} derives directly from the
ability to encode the output succinctly and its necessity will become
clearer when we describe the proof technique.

It is at first tempting to believe that there may be a simple reduction
between the edit distance, LCS and Hamming distance problems which
will allow us to derive lower bounds without requiring further work.
Although such direct reductions appear elusive in our streaming setting,
we are able to exploit more subtle and indirect relationships between
the distance measures to obtain our lower bounds.

We complement our edit distance lower bound with a new cell-probe
algorithm which runs in polylogarithmic time per arriving symbol. 
\begin{thm}
\label{thm:edit-upper} In the cell-probe model with any word size
$w$ and alphabet size that is polynomial in $n$, the edit distance
problem can be solved in 
\[
O{\left(\frac{\log^{2}n}{w}\right)}
\]
amortised time per output.
\end{thm}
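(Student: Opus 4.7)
The plan is to maintain a compact representation of the edit-distance DP column of $F$ against the current stream and to update it via Tiskin's semi-local (unit-Monge / seaweed) representation, exploiting that in the cell-probe model all off-line preprocessing and in-word computation are free.

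\emph{Preprocessing.} I preprocess $F$ freely to build: a suffix tree with an LCA structure answering longest-common-extension queries in $O(1)$ probes; a balanced binary tree over positions of $F$ whose internal nodes store the seaweed (unit-Monge) representation of the semi-local edit distance of the corresponding $F$-substring; and Four-Russians lookup tables indexed by $\Theta(w)$-wide blocks of difference-encoded DP columns, so that merging two packed blocks costs $O(1)$ probes. None of this is charged in the cell-probe model.

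\emph{State and output.} After $t$ symbols have arrived I maintain the seaweed permutation $\pi_t$ that encodes the edit-distance profile of $F$ against the current stream, stored in a balanced BST whose leaves are $\Theta(w)$-entry packed blocks and whose internal nodes carry subtree range-minima and difference sums. The answer $\min_k \ED{F,S(k)}$ is then a prefix-minimum query on the BST, read off in $O(\log n / w)$ cell probes by walking root-to-leaf and reading one packed word per level.

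\emph{Update and amortisation.} When a new symbol $c$ arrives, $\pi_{t+1}$ is obtained from $\pi_t$ by composing with a simple permutation determined by the positions of $c$ in $F$; by Tiskin's braid-composition this composition is realised by $O(\log n)$ local swaps in the BST, each touching one packed leaf block plus its $O(\log n)$ ancestors, giving the claimed $O(\log^{2} n / w)$ cell probes per update. The principal obstacle is that in the worst case a single character can move $\Theta(n)$ seaweeds at once: I would handle this with a potential argument whose potential counts pending inversions in the seaweed representation and charges each individual swap either against the current output change (which is always in $\{-1,0,+1\}$) or against future savings, leveraging the non-crossing structure of unit-Monge matrices; the remaining $1/w$ factor is absorbed by the Four-Russians word-packing at the BST leaves.
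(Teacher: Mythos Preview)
Your proposal is not the paper's approach, and it has a genuine gap at the point you yourself flag as ``the principal obstacle.''

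The central unproved step is the claim that composing the current seaweed permutation with the single-column braid for a new symbol costs $O(\log n)$ local swaps. As you note, a new symbol that matches $F$ in many positions can move $\Theta(n)$ seaweeds. Your proposed fix is a potential argument, but you give no potential function, no bound on it, and no reason the amortisation closes. Charging swaps ``against the current output change (which is always in $\{-1,0,+1\}$)'' cannot work: a single $\pm 1$ of output credit cannot pay for $\Theta(n)$ swaps, and there is no structural fact about unit-Monge matrices that turns this into a telescoping sum across arrivals. So as written the update step is unjustified and the bound does not follow. A secondary issue: preprocessing \emph{is} charged in the cell-probe model; you cannot simply declare it free. (The paper handles this explicitly; see below.)

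The paper's proof takes a completely different route and never attempts a step-by-step seaweed update. It modifies the classical DP as follows. For arrival $i$ it reads not column $i-1$ but column $\pred(i)$, where $\pred(i)$ is obtained by clearing the least significant $1$ in the binary expansion of $i$ (capped at $i-n$). It observes that the column $\Dinter(\cdot,i)$, read bottom-to-top, decomposes into maximal decreasing-by-one \emph{blocks}, and it \emph{truncates} to the first $3(i-\pred(i))$ blocks, storing each block in $O(\log n)$ bits. The correctness argument (their Lemmas~12--13) shows that a shortest path to $(n-1,i)$ can never pass through the discarded region, using the arithmetic of $\pred$ to guarantee $\pred_p(i)-\pred_{p+1}(i)\geq i-\pred_p(i)$. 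The cost at step $i$ is then $O\big((i-\pred(i))\log n\big)$ bits, and $\sum_{i}(i-\pred(i))=O(n\log n)$ over any window of $n$ arrivals, giving $O((\log^2 n)/w)$ amortised. The dependence on $F$ is removed either by hard-coding $F$ into the algorithm or by an exponential-size lookup table built once in preprocessing. None of this uses seaweeds, BSTs, or potential arguments; the amortisation is entirely in the bit-structure of $\pred$.
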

The fastest RAM algorithm for online edit distance runs in $O(n)$ time by simply
adding a new column to the classic dynamic programming matrix for each new arrival. Despite the computation of edit distance being a widely studied topic,  it is not at all clear how one can do significantly better than this naive approach given the dependencies which seem to be inherent in the standard dynamic programming formulation of the problem.  We believe it is therefore of independent interest for those studying the edit distance that there is now an exponential gap between
the known RAM and cell-probe complexities. It is still unresolved
whether a similarly fast cell-probe algorithm exists for the online LCS problem.

\subsection{Prior work}

The field of streaming algorithms is well studied and the specific
question of how efficiently to find patterns in a stream is a fundamental
problem that has received increasing attention over the past few years.
In a classic result of Galil's~\cite{Galil:1981} from the early
1980s, exact matching was shown to be solvable in constant time per
arriving symbol in a stream. Nearly 30 years later a general online-to-offline
reduction was shown which enables many offline algorithms to be made
online with a worst case logarithmic factor overhead in the time complexity
per arriving symbol in the stream~\cite{CEPP:2008}. For some problems,
this reduction gives us the best time complexity known, but in other
cases it is possible to do even better. One example where a more efficient
online algorithm exists is the $k$-mismatch problem, which has an
$O(n\sqrt{k\log k})$ complexity offline but $O(\sqrt{k}\log{k}+\log{m})$
time per new arriving symbol online~\cite{CS:2010}, where $n$ is
the length of the text and $m$ is the length of the pattern.

The online-to-offline reduction of~\cite{CEPP:2008} does not however
lend itself to problems that can be described as non-local. The distance
function between two strings of the same length is said to be local
if it is the sum of the disjoint contribution from each aligned pair
of symbols. For example, Hamming distance is a local distance function
whereas edit distance is non-local. Streaming pattern matching upper
bounds have also been developed for a range of non-local problems,
including function matching, parameterised matching, swap distance,
$k$-differences as well as others~\cite{CS:2009}. These algorithms
have necessarily been particular to each distance function.

Given the rarity of constant time streaming algorithms, it is therefore
natural to ask for lower bounds; what are the limits of how fast a
streaming pattern matching problem can be solved? The first steps
towards an answer to this question were given in~\cite{CJ:2011},
where a lower bound for convolution, or equivalently the cross-correlation,
was given. Computing cross-correlations is an important component
of many of the fastest pattern matching algorithms. By giving a lower
bound for the time to perform the cross-correlation, a lower bound
is therefore provided for a whole class of pattern matching algorithms.
However the question of whether a particular pattern matching problem
could be solved by some other faster means remained open. In~\cite{CJS:2013},
the three authors of this paper gave the first lower bound for a pattern
matching problem. The streaming Hamming distance problem was shown
to have a logarithmic lower bound when the input alphabet size is
sufficiently large. This provided the first separation between two
pattern matching problems: exact matching, which can be solved in
constant time, and Hamming distance which cannot.


\subsection{The cell-probe model}

Our bounds hold in the \emph{cell-probe model} which is a particularly
strong model of computation, introduced originally by Minsky and Papert~\cite{MP:1969}
in a different context and then subsequently by Fredman~\cite{Fredman:1978}
and Yao~\cite{Yao1981:Tables}. The generality of the cell-probe
model makes it attractive for establishing lower bounds for dynamic
data structure problems, and many such results have been given in
the past couple of decades. The approaches taken had historically
been based only on communication complexity arguments and the chronogram
technique of Fredman and Saks~\cite{FS1989:chronogram}, which until
recently were able to prove $\Omega(\log{n}/\log{\log{n}})$ lower
bounds at best. There remains however, a number of unsatisfying gaps
between the lower bounds and known upper bounds. However, in 2004,
a breakthrough led by \Patrascu and Demaine gave us the tools to
seal the gaps for several data structure problems~\cite{PD2006:Low-Bounds}
as well as giving the first $\Omega(\log{n})$ lower bounds. This
new technique is based on information theoretic arguments that we
also employ here.


In the cell-probe model there is a separation between the computing
unit and the memory, which is external and consists of an (unbounded)
array of cells of $w$ bits each. The computing unit has no internal
memory cells of its own. Any computation performed is free and may
be non-uniform. The cost of processing an update or query (in our
case outputting the answer when a value in the stream arrives) is
the number of distinct cells accessed (cell-probes) during that update,
or query. This general view makes the model very strong. In particular,
any lower bounds in the cell-probe model hold in the word-RAM model
(with the same cell size). In the word-RAM model, certain operations
on words, such as addition, subtraction and possibly multiplication,
take constant time (see for example~\cite{Hagerup:1998} for a detailed
introduction). Although in our case we place no minimum size restriction
on the size of a word, much of the previous work has required that
words are sufficiently large to be able to store the address of any
cell of memory. When the word size $w=1$ then our new lower bounds
also hold, for example, for the weaker multi-head Turing machine model
with a constant number of heads.




\subsection{Technical contributions}

\label{sec:contributions}


\paragraph{New lower bounds} One of the most important techniques for online lower bounds is based
on the \emph{information transfer method} of \Patrascu and Demaine~\cite{PD2006:Low-Bounds}.
For a pair of time intervals, the information transfer is the set
of memory cells that are written during the first interval, read in
the next and not overwritten in between. These cells must contain \emph{all} the
information  from
the updates during the first interval that the algorithm needs in order 
to produce correct outputs in the next interval. If
one can prove that this quantity is large for many pairs of intervals
then the desired lower bounds follow. To do this we relate the size
of the information transfer to the conditional entropy of the outputs
in the relevant time interval. The main task of proving lower bounds
reduces to that of devising a hard input distribution for which outputs
have high entropy conditioned on selected previous values of the input.

Previous applications of the information transfer technique have required
that the word size $w$ is $\Omega(\log n)$~\cite{PD2006:Low-Bounds,CEPP:2011,CJS:2013}.
To circumvent this limitation we have developed a new encoding of
the information transfer that is efficient for arbitrarily small values
of $w$, in particular $w=1$ as in the bit-probe model. The overall
method is to combine an encoding based on cell addresses
with a new encoding that identifies a cell with the time step at which
it is read. This combination of two encodings enables us to prove
new lower bounds for the convolution and Hamming distance problems.
Moreover it is a crucial first step in developing our new lower bounds
for the edit distance and LCS problems where we restrict our attention
to constant sized alphabets.

The edit distance and LCS problems raise a number of challenges not
presented by either of the other two problems we consider. These distance
measures are what we call \emph{non-local}. Focusing on the LCS problem
for the moment we can see that whether position $i$ of the fixed
array is included in the LCS or not depends not only on the value
in the stream that is aligned with $i$ but also on other values in
the stream. The information transfer technique has previously not
been applied to such non-local streaming problems. The main technical
difficulty that non-local distance measures introduce is a blurring
of the borders between intervals.

We describe a hard input distribution of the LCS problem for which the information
transfer technique is indeed applicable. The idea is to construct
a fixed array and a random input stream such that at many alignments,
from the length of the LCS one can obtain the Hamming distance between
the fixed array and the corresponding portion of the stream. 
In order to then apply the information
transfer technique we must prove that this direct relationship between
the length of the LCS and Hamming distance occurs with sufficiently
large probability. This is one of the more technical parts of the
paper. Once we have obtained a lower bound for the LCS problem we show that
the same lower bound holds also for the edit distance problem. This
follows from our LCS hard distribution combined with a squeezing lemma
that forces the edit distance to equal the Hamming distance at certain
alignments. 

\paragraph{New upper bound} Our cell-probe algorithm for edit distance is a non-trivial modification
of the classic dynamic programming solution which allows us to take
advantage of the fact that computation is free in the cell-probe model. We exploit the relationship
between edit distance and shortest paths in a directed acyclic graph
(DAG).  The nodes in the graph form a lattice  and the current edit distance is the
shortest path from the top-left node to the bottom-right node. Each new symbol
that arrives simply adds a new column of distances. This update operation is however slow even
in the cell-probe model; just writing the new values to memory
requires $\Theta(n/w)$ cell probes.
Our algorithm circumvents this problem.

The first key difference between our new method and the naive
approach is that instead of maintaining only the values of the latest column,
we maintain values from the $n$ latest columns.
We maintain
values denoted $D(j,i)$, where $D(j,i)$ is the shortest path
from the top-left to the node $(j,i)$ in the DAG over paths that are forced
to go via selected previous nodes. Further, we do not in fact maintain $D(j,i)$
for all rows~$j$, potentially leaving gaps in the table. As a result we can efficiently maintain the $D(j,i)$ values.
Despite the fact that our algorithm does not correctly compute the whole dynamic programming table we are able to show that for all~$i$, the outputted value $D(n-1,i)$ still is the correct edit distance after symbol $S[i]$ has arrived.

\subsection{Organisation}

In Section~\ref{sec:preliminaries} we set up some basic notation
and give problem definitions. In Section~\ref{sec:proofs} we describe
how to obtain the lower bounds. This section contains some key lemmas
which are solved separately in subsequent sections. In Section~\ref{sec:lcs}
we describe the hard distribution for the edit distance and the LCS problems.
In Section~\ref{sec:main} we explain the new encoding scheme that
we use with the information transfer method. Finally, in Section~\ref{sec:edit-upper}
we give the cell-probe algorithm that solves the edit distance problem.
This section can be read in isolation and does not build on previous
sections.

\section{Basic setup for the lower bounds\label{sec:preliminaries}}

In this section we introduce notation and concepts that are used heavily
in the lower bound proofs. We also formally define the streaming problems
in our new notation.

\subsection{Basic notation}

For a positive integer $n$, $[n]$ denotes the set $\{0,\dots,n-1\}$.
For an array $A$ of length $n$ and $i,j\in[n]$, we write $A[i]$
to denote the value at position $i$, and where $j\geq i$, $A[i,j]$
denotes the $(j-i+1)$-length subarray of $A$ starting at position
$i$. All logarithms are in base two and we assume that $n\geq4$
throughout.

We define a \emph{streaming problem} as follows. There is a \emph{fixed~array}
$F$ of length $n$ and an array $S$ of length $3n$, which is referred
to as the \emph{stream}. Both $F$ and $S$ are over the set $[2^{\delta}]$
of integers, referred to as the \emph{alphabet}, where $\delta$ is
a positive integer and is a parameter of the problem. An element of
the alphabet is often referred to as a \emph{symbol}. We let $t\in[n]$
denote the arrival time, or simply \emph{arrival} of the symbol $S[2n+t]$.
That is, for $t=0$, just before the symbol $S[2n]$ arrives, the
stream already contains $2n$ symbols. To capture the concept of a
data stream, not all symbols of $S$ are immediately available. More
precisely, just after arrival $t\in[n]$ only the symbols of $S[0,2n+t]$
are known, and importantly, the symbols $S[(2n+t+1),(3n-1)]$ are
not known. That is one new symbol is revealed at a time. We define
\[
S_{t}=S[(n+1+t),(2n+t)]
\]
to denote latest $n$ symbols of the stream up to arrival $t$. Once
the symbol at arrival~$t$ is revealed, and before the next symbol
at arrival $t+1$ is revealed, a function of $F$ and $S[0,2n+t]$
is computed and its value outputted.
We let the $n$-length array $Y$ denote 
the outputs such that $Y[t]$ is outputted immediately after arrival~$t$.
The outputs depend on which
streaming problem is considered:
\begin{itemize}
\item In the \textbf{Hamming distance} problem, $Y[t]=\HAMt$, which
is number of positions $i\in[n]$ such that $F[i]\neq S_{t}[i]$. 
\item In the \textbf{convolution} problem, $Y[t]=\sum_{i\in[n]}F[i]\cdot S_{t}[i]$. 
\item In the \textbf{edit distance} problem, $Y[t]=\min_{i\in[2n+1+t]}\ED{F,S[i,2n+t]}$. 
\item In the \textbf{longest common subsequence (}\textbf{\emph{LCS}}\textbf{)}
problem, $Y[t]=\LCSt$, which is the length of the LCS of
$F$ and $S_{t}$. 
\end{itemize}

\subsection{Information transfer and more notation \label{sec:more-notation}}

Our lower bounds hold for any randomised algorithm on its worst case
input. The approach to obtain such bounds is by applying \emph{Yao's
minimax principle}~\cite{Yao1977:Minimax}. That is, we show that
the lower bounds of Theorems~\ref{thm:conham} and~\ref{thm:lcs}
hold for any deterministic~algorithm on some random~input. This
means that we will devise a fixed array $F$ and describe a probability
distribution for the stream~$S$. We then show a lower bound on the expected
running time over $n$ symbol arrivals in the stream that holds
for any deterministic algorithm. Due to the minimax principle, the
same lower bound must then hold for any randomised~algorithm on its
worst~case input. The amortised bound is obtain by dividing by $n$.
From this point onwards, we consider an arbitrary deterministic algorithm
running with some fixed array $F$ on a random stream $S$. As it is used to show
a lower bound, such an $F$ and distribution on $S$ is referred to as a 
\emph{hard distribution}.


The \emph{information transfer tree}, denoted $\calT$, is a balanced
binary tree over $n$ leaves. To avoid technicalities we assume that
$n$ is a power of two. For a node $v$ of $\calT$, we let $\ell_{v}$
denote the number of leaves in the subtree rooted at $v$. The leaves
of $\calT$, from left to right, represent the arrival $t$ from $0$
to $n-1$. An internal node $v$ is associated with three arrivals,
$t_{0}$, $t_{1}$ and $t_{2}$. Here $t_{0}$ is the arrival represented
by the leftmost node in subtree rooted at $v$, similarly $t_{2}=t_{0}+\ell_{v}-1$ is the rightmost such node and $t_{1}=t_{0}+\ell_{v}/2-1$ is in the middle. That is, the intervals $[t_{0},t_{1}]$
and $[t_{1}+1,t_{2}]$ span the left and right subtrees of $v$, respectively.
We define the subarray $\Xv=S[2n+t_{0},2n+t_{1}]$ to represent the $\ell_{v}/2$ stream symbols
arriving during the arrival interval $[t_{0},t_{1}]$, and we define the subarray $\Yv=Y[t_{1}+1,t_{2}]$ to represent the $\ell_{v}/2$
outputs during the arrival interval $[t_{1}+1,t_{2}]$. We define $\Xvknown$ to be the concatenation of $S[0,(2n+t_{0}-1)]$
and $S[(2n+t_{1}+1),(3n-1)]$. That is, $\Xvknown$ contains all symbols
of $S$ except for those in $\Xv$.

When $\Xvknown$ is fixed to some constant $\Xvfix$ and $\Xv$ is
random, we write $H(\Yv\mid\Xvknown=\Xvfix)$ to denote the conditional
entropy of $\Yv$ under the fixed $\Xvknown$.

We define the \emph{information transfer} of a node $v$ of $\calT$,
denoted $\vitset$, to be the set of memory cells $c$ such that $c$
is written during the interval $[t_{0},t_{1}]$, read at some time
in $[t_{1}+1,t_{2}]$ and not overwritten in $[t_{1}+1,t_{2}]$ before
being read for the first time in $[t_{1}+1,t_{2}]$. The cells in
the information transfer $\vitset$ therefore contain, by definition,
all the information about the values in $\Xv$ that the algorithm
uses in order to correctly produce the outputs $\Yv$. By adding up
the sizes of the information transfers $\vitset$ over all nodes $v$
of $\calT$, we get a lower bound on the total running time (that is, the number
of cell reads). To see this, it is important to make the observation
that a cell read of $c\in\vitset$ at arrival $t$ will be accounted
for only in the information transfer of node $v$ and not in the information
transfer of a node $v'\neq v$.  As a shorthand for the size of
the information transfer, we define $\vit=|\vitset|$. Our aim is
to show that $\vit$ is large in expectation for a substantial proportion
of the nodes $v$ of $\calT$.

\section{Overall proofs of the lower bounds \label{sec:proofs}}

In this section we give the overall proofs for the main lower bound
results of Theorems~\ref{thm:conham} and~\ref{thm:lcs}. Let $v$
be any node of $\calT$. Suppose that $\Xvknown$ is fixed but the
symbols in $\Xv$ are randomly drawn in accordance with the distribution
on $S$, conditioned on the fixed value of $\Xvknown$. This induces
a distribution on the outputs $\Yv$. If the entropy of $\Yv$ is
large, conditioned on the fixed $\Xvknown$, then any algorithm must
probe many cells in order to produce the outputs $\Yv$, as it is
only through the information transfer $\vitset$ that the algorithm
can know anything about $\Xv$. We will soon make this claim more
precise. We first define a high-entropy node in the information transfer tree.

\begin{defn}
[\textbf{High-entropy node}]\label{def:high-node}A node $v$ in
$\calT$ is a \emph{high-entropy~node} if there is a positive constant
$k$ such that for \emph{any} fixed $\Xvfix$, 
$$
H(\Yv\mid\Xvknown=\Xvfix)\,\geq\, k\cdot\delta\cdot\ell_{v}.
$$
\end{defn}

To put this bound in perspective, note that the maximum conditional
entropy of $\Yv$ is bounded by the entropy of $\Xv$, which is at
most $\delta\cdot(\ell_{v}/2)$ and obtained when the values of $\Xv$
are independent and uniformly drawn from $[2^{\delta}]$. Thus, the
conditional entropy associated with a high-entropy node is the highest
possible up to some constant factor. As we will see, this is good
news for proving lower bounds, and for the Hamming distance and convolution
problems we have many high-entropy nodes. The following fact tells us that there are inputs with many high-entropy nodes.

\begin{lem}
\label{lem:high-entropy} For both the Hamming distance and convolution
problems, where outputs are given modulo $2^{\delta}$, there exists
a hard distribution and a constant $c>0$ such that $v\in\calT$ is a
high-entropy node if $\ell_{v}\geq c\cdot\sqrt{n}$.
\end{lem}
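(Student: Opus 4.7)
The plan is to describe a hard distribution in which the stream $S$ is drawn i.i.d.\ uniformly from $[2^{\delta}]$, and the fixed array $F$ is chosen deterministically once so that every node $v$ with $\ell_{v}\geq c\sqrt{n}$ simultaneously enjoys the high-entropy property. Since $\Xvknown$ fixes all symbols of $S$ outside $\Xv$, conditioning on $\Xvknown=\Xvfix$ leaves $\Xv$ uniform on $[2^{\delta}]^{\ell_{v}/2}$, so it suffices to argue, for both problems, that $\Yv$ (viewed as a function of $\Xv$ whose coefficients depend on $F$ and whose constant term depends on $\Xvfix$) has image entropy at least $k\cdot\delta\cdot\ell_{v}$.

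For the convolution problem, I would split the defining sum $Y[t]=\sum_{i}F[i]\,S[n+1+t+i]\pmod{2^{\delta}}$ into the part whose arguments lie in $\Xv$ and the part determined by $\Xvfix$. Since $\ell_v<n$, for every $t\in[t_1+1,t_2]$ the sliding window fully contains $\Xv$, and a short re-indexing yields
\[
Y[t]\;\equiv\; c_{t}+\sum_{j=0}^{\ell_{v}/2-1}F\bigl[j+n-1-\ell_v/2-(t-t_1-1)\bigr]\,\Xv[j]\pmod{2^{\delta}},
\]
so that $\Yv\equiv M_{v}\Xv+c_{v}\pmod{2^{\delta}}$, where $M_{v}$ is an $(\ell_{v}/2)\times(\ell_{v}/2)$ Toeplitz matrix whose $\ell_v-1$ generating entries are the consecutive values $F[n-\ell_{v}],\dots,F[n-2]$. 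The conditional entropy $H(\Yv\mid\Xvknown=\Xvfix)$ thus equals the logarithm of the image of $M_{v}$ on $(\mathbb{Z}/2^{\delta}\mathbb{Z})^{\ell_v/2}$. I would then pick $F$ via the probabilistic method: a uniformly random $F\in[2^{\delta}]^{n}$ makes the $\mathbb{F}_{2}$-rank of $M_{v}$ at least $\ell_v/4$ with probability at least $1-O(1/n)$ whenever $\ell_v$ exceeds a logarithmic threshold, and a union bound over the $O(n)$ relevant nodes delivers a single deterministic $F$ that works for every $v$ with $\ell_v\geq c\sqrt{n}$. This yields $H(\Yv\mid\Xvknown=\Xvfix)\geq(\delta/4)\,\ell_{v}$, so Definition~\ref{def:high-node} is satisfied with $k=\tfrac{1}{4}$.

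For the Hamming distance problem modulo $2^{\delta}$ the output $Y[t]=\sum_{i}\mathbf{1}[F[i]\neq S_{t}[i]]\pmod{2^{\delta}}$ is no longer linear in $\Xv$, but a parallel argument still goes through. I would expand each equality indicator via the additive characters of $\mathbb{Z}/2^\delta\mathbb{Z}$, writing $\mathbf{1}[F[i]=S_{t}[i]]=2^{-\delta}\sum_{\chi}\chi(F[i]-S_{t}[i])$, which rewrites $Y[t]$ as a constant plus a linear combination of character values of the entries of $\Xv$. Choosing $F$ so that the induced character-valued Toeplitz matrix has full image for every relevant $v$ (again by a probabilistic argument together with derandomisation via union bound) then gives conditional entropy $\Omega(\delta\,\ell_v)$. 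An alternative, if cleaner, route would be to reduce Hamming distance mod $2^\delta$ directly to convolution through an online simulation that exploits the shared modular structure, so that the convolution bound transfers without a new analysis.

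The main technical hurdle is the Hamming distance case: the indicator destroys linearity over $\mathbb{Z}/2^{\delta}\mathbb{Z}$, so the character expansion (or a reduction to convolution) is needed to recover a quasi-Toeplitz structure, and one must verify that the outputs $Y[t]$ do not collapse into only $O(\log\ell_v)$ bits of entropy each as a naive "sum of Binomials" analysis would suggest. The $c\sqrt{n}$ cut-off is the natural threshold under which a single random $F\in[2^\delta]^n$ meets the required rank conditions at every relevant node simultaneously with positive probability; the loss in depth from $\log n$ to $\log n-\log\sqrt n=\tfrac12\log n$ is immaterial for the $\Omega(\delta\log n/(w+\log\log n))$ lower bound of Theorem~\ref{thm:conham}, since we still retain $\Theta(\log n)$ levels of $\calT$ on which the information transfer can be charged.
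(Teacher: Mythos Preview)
Your convolution argument is essentially the right one and matches what the paper invokes (via~\cite{CJ:2011}): the outputs $\Yv$ are an affine image $M_{v}\Xv+c_{v}\pmod{2^{\delta}}$ of the uniform input through a Toeplitz matrix built from a contiguous block of $F$, and a random $F$ gives $M_{v}$ large $\mathbb{F}_{2}$-rank (hence large image over $\mathbb{Z}/2^{\delta}\mathbb{Z}$) at every relevant node. One small point: for all nodes at a fixed depth the block of $F$ used, and hence $M_{v}$, is identical, so your union bound only needs to run over $O(\log n)$ depths rather than $O(n)$ nodes; this also means your stated $1-O(1/n)$ failure probability is more than enough.

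The Hamming distance part, however, has a genuine gap. Neither of your two routes works as written. The character expansion rewrites $Y[t]$ as a complex-valued sum of exponentials $\omega^{a(F[i]-\Xv[j])}$; there is no meaningful ``character-valued Toeplitz matrix'' acting linearly on $\Xv$ whose image size you can bound, and you still have to reduce a complex identity back to an integer modulo $2^{\delta}$. The alternative ``online simulation'' reducing Hamming distance to convolution is not specified, and the standard reduction (via one convolution per alphabet symbol) does not preserve a single hard input distribution. More fundamentally, with a uniformly random $F$ and uniform stream the Hamming distance at each alignment is concentrated in a window of width $O(\sqrt{n})$, so each output carries $O(\log n)$ bits; you cannot hope to extract $\Theta(\delta\ell_{v})$ bits from $\Yv$ without a carefully engineered $F$.

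The construction the paper relies on (from~\cite{CJS:2013}) is of exactly this engineered kind: $F$ and the stream distribution are arranged so that, at each output time, the variable contribution to the Hamming distance comes from a controlled set of positions and the outputs lie in a range of width $2^{\delta}$ (which is also why the modulo does not hurt). This structure, not a generic rank argument, is what pushes the conditional entropy up to $\Theta(\delta\ell_{v})$ and is the missing idea in your proposal. The $c\sqrt{n}$ threshold is an artefact of that specific construction rather than of a union bound over random Toeplitz ranks.
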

\begin{proof}
For the convolution problem, the lemma is equivalent to Lemma~2 of
Clifford and Jalsenius~\cite{CJ:2011}, although notations differ
and here we only consider nodes $v$ such that $\ell_{v}\geq c\cdot\sqrt{n}$. 

For the Hamming distance problem, the statement of the lemma is equivalent
to Lemma~2.2 of Clifford, Jalsenius and Sach~\cite{CJS:2013} with
the only difference that in our lemma above we give outputs modulo
$2^{\delta}$. In the previous work of \cite{CJS:2013}, $2^{\delta}\in\Theta(n)$,
but here we consider any arbitrary $\delta$. For this reason it is
not obvious that Lemma~2.2 of~\cite{CJS:2013} applies under the
modulo constraint. However, by inspection of the details in~\cite{CJS:2013},
we see that every output is given within a range of size~$2^{\delta}$,
hence the lemma is indeed applicable also with the modulo constraint.
\end{proof}

For the edit distance and LCS problems on the other hand, the maximum conditional
entropy of $\Yv$ is at
most $O(\ell_{v})$, independent of $\delta$. This
is because the outputs can be encoded succinctly. Therefore we cannot 
expect to obtain high-entropy nodes for these problems in general. Moreover, it is not
even clear whether high-entropy nodes can be obtained for constant $\delta$. For these two problems
we therefore rely on what we call medium-entropy nodes.
\begin{defn}
[\textbf{Medium-entropy node}]\label{def:medium-node}A node $v$
in $\calT$ is a \emph{medium-entropy~node} if there is a positive
constant $k$ such that for \emph{at~least~half} of the values $\Xvfix$
of $\Xvknown$ that have non-zero probability in the distribution
for $S$, 
$$
H(\Yv\mid\Xvknown=\Xvfix)\,\geq\,\frac{k\cdot\ell_{v}}{\sqrt{\log n\cdot\log\log n}}.
$$ 
\end{defn}

We show that there exists a hard distribution for the edit distance and LCS problems for which most nodes in the information transfer tree are medium-entropy.  This is the main technical result needed to establish our edit
distance and LCS lower bound. The proof of the following lemma is discussed in Sections~\ref{sec:lcs} and~\ref{sec:lcs-ham-relationship}.
\begin{lem}
\label{lem:medium-entropy} For the edit distance and LCS problems
with $\delta=2$, there exists a hard distribution such that $v\in\calT$
is a medium-entropy node if $\ell_{v}\geq2\sqrt{n}$.
\end{lem}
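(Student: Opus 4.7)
The plan is to design a hard distribution under which the LCS and (via a squeezing step) edit distance outputs in the right subinterval of a node~$v$ carry enough information to essentially recover Hamming-distance values between $F$ and the corresponding windows of the stream. Once this is set up, Lemma~\ref{lem:high-entropy} for Hamming distance can be invoked on the ``hidden'' structure to furnish the medium-entropy lower bound on $H(\Yv\mid\Xvknown=\Xvfix)$.

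First I would construct $F$ and the distribution on $S$ over the four-letter alphabet $\{\bchar,\schar,0,1\}$, using $\bchar$ and $\schar$ as separator/padding symbols and the bits $\{0,1\}$ to carry random payloads. Both $F$ and $S$ are composed of equal-length blocks; each block contains a short binary payload of some length $B$ framed by separators, where the payload symbols of $F$ are chosen once and for all and those of $S$ (including the portion in $\Xv$) are i.i.d.\ uniform bits. The separator pattern must be designed so that any non-aligned matching of a block of $F$ against a block of $S$ loses symbols in the LCS compared to an aligned one, which locally forces any optimal LCS of $F$ against $S_t$ to match blocks against blocks. At such ``clean'' alignments the LCS length is then a simple additive function of the per-block payload agreements, from which the Hamming distance $\HAMt$ can be read off.

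Next I would show that for at least half of the conditioning strings $\Xvfix$, the clean-alignment property holds for a large fraction of arrivals $t\in[t_1+1,t_2]$. The argument is a union bound over the randomness of $\Xv$: a spurious block-match requires a random payload of a block of $\Xv$ to coincide in many positions with a distant block of $F$, which happens with probability $2^{-\Omega(B)}$. Choosing $B=\Theta(\sqrt{\log n \cdot \log\log n})$ makes this probability small enough to survive a union bound over all $\poly(n)$ relevant alignment pairs, while still leaving $\Theta(\ell_v/B)$ Hamming measurements whose joint entropy can be lower-bounded via Lemma~\ref{lem:high-entropy}. This balancing of per-block failure probability against output resolution is where the factor $\sqrt{\log n \cdot \log\log n}$ in the denominator of the medium-entropy bound enters, and I expect it to be the main technical obstacle: each LCS output, being constrained to change by at most one per arrival, has very limited ``bandwidth'', so $B$ must be chosen just large enough to kill spurious matches but no larger, lest too much Hamming information be discarded.

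Finally, I would transfer the bound from LCS to edit distance using a squeezing lemma. The block structure, with distinct separators $\bchar$ and $\schar$ positioned so that no local edit script can profitably insert or delete across a block boundary, forces the optimal script at clean alignments to consist entirely of substitutions, giving $\ED{F,S[i,2n+t]}=\HAMt$ at the critical offsets and guaranteeing that this offset achieves the minimum in the definition of the online edit distance. Combining this identification with the LCS entropy bound yields the desired inequality $H(\Yv\mid\Xvknown=\Xvfix)\geq k\ell_v/\sqrt{\log n\cdot \log\log n}$ for both problems and for at least half of the fixed values $\Xvfix$, for every $v$ with $\ell_v\geq 2\sqrt n$, which is exactly the medium-entropy condition of Definition~\ref{def:medium-node}.
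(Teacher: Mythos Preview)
Your high-level architecture---blocks of period $\Theta(\sqrt{\log n\log\log n})$, an LCS-to-Hamming identification at structured alignments, an averaging step to pass to ``at least half of $\Xvfix$'', and a squeezing inequality for edit distance---matches the paper's. But two of your core technical steps fail as written. First, the union bound cannot close: with $B=\Theta(\sqrt{\log n\log\log n})$ you have $2^{-\Omega(B)}\cdot\poly(n)\to\infty$, so you would be forced to take $B=\Omega(\log n)$ and lose the target entropy $\Omega(\ell_v/\sqrt{\log n\log\log n})$. Second, the per-event probability itself is unjustified: the LCS of two uniformly random length-$B$ binary strings is a constant fraction of $B$, so ``coincide in many positions with a distant block'' is the typical behaviour, not a rare one. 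Relatedly, once a block of $F$ is aligned with a block of $S$, the LCS of their length-$B$ payloads is \emph{not} the number of position-wise agreements, so ``from which $\HAMt$ can be read off'' already fails at clean alignments. Finally, Lemma~\ref{lem:high-entropy} is a statement about a \emph{particular} hard distribution for Hamming distance; it is not a black box you can apply to whatever Hamming values your construction happens to expose.

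The paper sidesteps all of this by making the matchable part of $F$ extremely sparse. Each period is $\pchar^{\rho}z$ with a \emph{single} random $z\in\{\bchar,\schar\}$ and $\rho=4\sqrt{\log n\log\log n}$; the fixed string $F$ has all non-$\pchar$ positions equal to a symbol $\fchar$ that never occurs in $S$, except for $\Theta(\log n)$ positions carrying $\bchar$, placed at distances roughly $2^j$ from the right end for $\sqrt{n}\le 2^j\le n$. Because at most $\log n$ symbols of $F$ can match anything in $S$ besides padding, any LCS that drops even one $\pchar^{\rho}$ run loses more than it can gain, which forces alignment; the remaining optimisation is over a $(\Theta(\log n))\times(2\rho{+}1)$ random $\{\bchar,\schar\}$-matrix, and a Hoeffding bound union-bounded over its $O(\log^3 n)$ column-subintervals (not $\poly(n)$ events) yields $\LCSt=n-\HAMt$ with probability at least $9/10$. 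The power-of-two spacing is exactly where the hypothesis $\ell_v\ge 2\sqrt{n}$ enters: it guarantees that precisely one $\bchar$ of $F$ sits opposite $\Xv$ at each well-aligned arrival, so each Hamming value reveals one fresh independent coin from $\Xv$, and the entropy lower bound is a direct biased-coin count rather than an appeal to Lemma~\ref{lem:high-entropy}. Your squeezing step for edit distance is correct in spirit and is precisely what the paper does via $n-\LCSt\le\min_j\ED{F,S[j,2n+t]}\le\HAMt$.
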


\begin{defn}
[\textbf{Fast node}]
Let $\vtime$ denote the number of cell reads that take
place during the interval $[t_{1}+1,t_{2}]$ represented by the right
subtree of a node~$v$. That is, $\vtime$ is the total number of
cell reads performed by the algorithm while outputting the values
in~$\Yv$. We say that the node $v$ of $\calT$ is \emph{fast} if
$$\expected{\vtime}~\leq~\ell_{v}\cdot\delta\cdot\log n.$$
\end{defn}
For nodes that are both fast and high or medium entropy, the following result gives us a lower bound on the expected information transfer for our different problems.  This is also  one of the main technical
contributions of the paper.
The proof is outlined in Section~\ref{sec:main}.

\begin{lem}
\label{lem:encoding} For the hard distributions of both the Hamming distance
and convolution problems with $\delta\geq1$ and $w\geq1$, for any fast high-entropy node
$v$ of $\calT$, 
$$
\expected{\vit} \in \Omega{\left(\frac{\delta\cdot\ell_{v}}{w+\log\log n}\right)}.
$$
For the hard distributions of both the edit distance and LCS problems with
$\delta=2$ and $w=1$, for any fast medium-entropy node $v$ of $\calT$, 
$$
\expected{\vit} \in \Omega{\left(\frac{\ell_{v}}{\sqrt{\log n}\cdot(\log\log n)^{3/2}}\right)}.
$$

\end{lem}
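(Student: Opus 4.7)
The plan is to bound $\expected{\vit}$ via Shannon's source-coding inequality applied to an encoding of $\Yv$ that extracts all of its information from the information-transfer cells. The standard \Patrascu--Demaine encoding costs $\Theta(w+\log n)$ bits per cell because it stores a full cell address, which would only give a $\log n$ denominator. To obtain the $\log\log n$ denominator I would identify each cell in $\vitset$ not by its address but by the \emph{index} of its first read within the sequence of all $\vtime$ cell reads during $[t_1+1,t_2]$; the entire set of $\vit$ such indices can then be specified using only $\log\binom{\vtime}{\vit}$ bits.

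Concretely, for any fixed $\Xvknown=\Xvfix$ I would encode $\Yv$ by writing: (i) an $O(\log n)$-bit prefix-free preamble giving $\vit$ and $\vtime$; (ii) a $\log\binom{\vtime}{\vit}$-bit subset specifier marking which of the $\vtime$ reads performed in $[t_1+1,t_2]$ are first reads of cells of $\vitset$; and (iii) the $w$-bit contents of those $\vit$ cells, listed in first-read order. The decoder, equipped with $\Xvfix$ and the deterministic algorithm, simulates the algorithm on the prefix $S[0,2n+t_0-1]\subseteq\Xvfix$, skips $[t_0,t_1]$, and then simulates $[t_1+1,t_2]$ using the symbols in $\Xvfix$ corresponding to that window. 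Whenever a read occurs it is either flagged by the subset specifier as a first read of an IT cell (and its content is copied from (iii)) or unflagged, in which case the definition of $\vitset$ forces the cell to have been either (over)written earlier in the decoder's own simulation of $[t_1+1,t_2]$ or last written at or before time $t_0-1$; both cases are resolved by the decoder, so $\Yv$ is uniquely recovered. Using $\binom{\vtime}{\vit}\leq(e\vtime/\vit)^{\vit}$ together with the joint concavity on $x,y>0$ of $f(x,y)=x\log(ey/x)$ (its Hessian is negative semi-definite), Jensen's inequality combined with Shannon's source-coding bound yields
\[
H(\Yv\mid\Xvknown=\Xvfix)\,\leq\,O(\log n)\,+\,\expected{\vit\mid\Xvfix}\bigl(w+\log(e\,\expected{\vtime\mid\Xvfix}/\expected{\vit\mid\Xvfix})\bigr).
\]

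Applying Markov to the fast-node hypothesis $\expected{\vtime}\leq\ell_v\delta\log n$ shows $\expected{\vtime\mid\Xvfix}\leq C:=2\ell_v\delta\log n$ for at least half of the $\Xvfix$. In the Hamming/convolution case Definition~\ref{def:high-node} gives $H(\Yv\mid\Xvfix)\geq H^\star=k\delta\ell_v$ for \emph{every} $\Xvfix$, so a full half of $\Xvfix$ satisfies both bounds; in the edit/LCS case Definition~\ref{def:medium-node} supplies $H(\Yv\mid\Xvfix)\geq H^\star=k\ell_v/\sqrt{\log n\log\log n}$ on a half of $\Xvfix$, so the two halves intersect in constant measure. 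For such $\Xvfix$, writing $I=\expected{\vit\mid\Xvfix}$, the displayed inequality becomes $I(w+\log(eC/I))\geq\Omega(H^\star)$. Substituting the claimed target $I^\star=\Omega(H^\star/(w+\log\log n))$ gives $\log(C/I^\star)=O(\log\log n)$ in both cases (using $\delta\leq\log q$, $w\geq1$, and the specific values of $H^\star$), so $I^\star$ saturates the inequality up to constants. Since $g(I)=I(w+\log(eC/I))$ is strictly increasing on $I<C$, this gives $\expected{\vit\mid\Xvfix}\geq\Omega(I^\star)$, and averaging over the constant-measure set of good $\Xvfix$ transfers the bound to $\expected{\vit}$. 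The main obstacle is proving that the encoding is valid and efficient at once: one must argue that every unflagged cell read during $[t_1+1,t_2]$ is reconstructible by the decoder, which relies precisely on the definition of $\vitset$ (overwritten cells are re-established by the decoder's own simulation; older cells by the prefix simulation on $\Xvfix$), and one must simultaneously carry the randomness of $\vit$ and $\vtime$ through Jensen while preserving the joint concavity of $f$. A secondary subtlety is the double-Markov argument needed to produce a constant-measure set of $\Xvfix$ on which both the entropy and the fast-node bounds hold in the medium-entropy case.
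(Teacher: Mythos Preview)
Your approach is essentially correct and is a genuinely different route from the paper's, with one small gap.

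\textbf{Comparison.} The paper does \emph{not} use the joint concavity of $x\log(ey/x)$. Instead it combines two encodings via a deterministic threshold $\vcap=\ell_v\delta\log^2 n$ on $\vtime$: for $\vtime\geq\vcap$ it falls back to the address-based \Patrascu--Demaine encoding, while for $\vtime<\vcap$ it uses your counting-based encoding but replaces the random $\vtime$ by the constant $\vcap$ inside the binomial, so that Jensen is applied only to the single-variable concave term $-\vit\log\vit$. The fast-node hypothesis is then used as $\Pr[\vtime\geq\vcap]\leq1/\log n$ (Markov on $\vtime$, not on $\Xvknown$), which kills the address-based contribution. Finally the paper needs a two-step ``bootstrap'': it first derives a crude lower bound $L$ on $\expected{\vit}$, substitutes $\log L$ back for $\log\expected{\vit}$ in the denominator, and only then obtains the $w+\log\log n$ denominator. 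The paper also handles $w\geq\log n$ by a separate appeal to earlier work. Your joint-concavity observation collapses all of this: no mixed encoding, no threshold, no bootstrap, and no case split on $w$. That is a real simplification.

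\textbf{The gap.} In the medium-entropy case your sentence ``the two halves intersect in constant measure'' is not justified: two sets of measure $\geq 1/2$ can have empty intersection. The fix is trivial---apply Markov with factor~$4$ rather than~$2$, so that $\expected{\vtime\mid\Xvfix}\leq 4\ell_v\delta\log n$ on a set of $\Xvfix$ of measure $\geq 3/4$, and intersect that with the measure-$\geq 1/2$ set from Definition~\ref{def:medium-node} to get measure $\geq 1/4$---but as written it is an error. A second, smaller point: your ``$O(\log n)$-bit preamble giving $\vit$ and $\vtime$'' is only $O(\log n)$ \emph{in expectation} on the good $\Xvfix$ (via $\mathbb{E}[\log\vtime\mid\Xvfix]\leq\log\mathbb{E}[\vtime\mid\Xvfix]$), not deterministically; this is harmless since you only use the displayed inequality on those $\Xvfix$, but you should say so.
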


\subsection{Obtaining the cell-probe lower bounds}

We now prove the main lower bound Theorems~\ref{thm:conham} and~\ref{thm:lcs} 
using
Lemmas~\ref{lem:high-entropy}, \ref{lem:medium-entropy} and~\ref{lem:encoding} from above.
Consider a hard distribution 
of the Hamming distance or convolution problems that satisfies 
Lemma~\ref{lem:high-entropy}.
Let $\calA$ be any deterministic algorithm that solves the problem.
Let $T$ denote the total running time of $\calA$ over $n$ arriving
values. The proof continues under the assumption that
$$\expected{T}\,\leq\,\frac{1}{2}\cdot\delta\cdot n\cdot \log n,$$
otherwise the lower bound of Theorem~\ref{thm:conham} is already established.

\global\long\def\dlow{\ensuremath{d_{\textup{low}}}}

Let $\dlow\in[\log n]$ be the smallest distance from the root of
the tree $\calT$ such that $\ell_{v}\leq c\cdot\sqrt{n}$, where
$c$ is the constant in the statement of Lemma~\ref{lem:high-entropy}
and $v$ is any node at depth $\dlow$. 
Let $V_{d}$ denote the set of nodes at distance $d$ from the root in~$\calT$.
Thus, by Lemma~\ref{lem:high-entropy}, for $d\in[\dlow]$, every node
$v\in V_{d}$ is a high-entropy node. Since the cell reads are disjoint over the nodes $v$ in $V_{d}$, we have
that $\sum_{v\in V_{d}}\vtime\leq T$. 
It follows from the linearity of expectation and the definition of a
fast node  that \emph{at least
half} of the nodes of $V_{d}$ are fast, otherwise $\expected{T}$
exceeds $\frac{1}{2}\delta n \log n$.

We can now sum the information transfer sizes $I_{v}$ over all fast
nodes in $V_{d}$ for every $d\in[\dlow]$. By applying Lemma~\ref{lem:encoding}
and linearity of expectation we get a lower bound of
$$
\frac{k'\cdot\delta\cdot n\cdot\log n}{w+\log\log n}
$$
on the expected total number of cell reads, where $k'$ is a constant
that depends on the constants from Lemmas~\ref{lem:high-entropy}
and~\ref{lem:encoding}, respectively. We divide by $n$ to get the
amortised lower bound of Theorem~\ref{thm:conham}. This concludes the lower bound proofs
for Hamming distance and convolution.

To prove the edit distance and LCS lower bounds of Theorem~\ref{thm:lcs},
we use the same argument but replace Lemma~\ref{lem:high-entropy}
with Lemma~\ref{lem:medium-entropy}, use the second half of 
Lemma~\ref{lem:encoding} and of course assume that $\delta=2$
and $w=1$.

\section{A hard distribution for the edit distance and LCS problems \label{sec:lcs}}

In this section and the next we prove Lemma~\ref{lem:medium-entropy} which says
that for the LCS and edit distance problems with $\delta=2$, there
exists a hard distribution such that a node $v\in\calT$ is a medium-entropy
node if $\ell_{v}\geq2\sqrt{n}$. 

\subsection{The hard distribution \label{sec:hard-instance}}

We begin by defining the hard distribution which is the same for
the edit distance and LCS problems. The alphabet has four symbols: $\pchar$, $\fchar$, $\bchar$ and
$\schar$. The symbol $\pchar$ is abundant in both $F$ and $S$
and always occurs in contiguous stretches of length $\pad$, where
we define
\[
\pad=4\sqrt{\log n\cdot\log\log n}.
\]
The symbols $\bchar$ and $\schar$ (short for \emph{heads} and \emph{tails})
represent coin flips, and $\fchar$ only occurs in $F$. 

We define $S$ to be of the form 
$$S=\pchar^{\pad}z_{1}\pchar^{\pad}z_{2}\pchar^{\pad}z_{3}\cdots,$$
where $\pchar^{\pad}$ denotes a stretch of $\pad$ $\pchar$-symbols, and
each $z_{i}$ is chosen independently and uniformly at random
from $\{\bchar,\schar\}$. That is one can obtain $S$ by flipping
$3n/(\pad+1)$ coins. For brevity, we assume that $\pad+1$ divides $n$.

We define $F$ to be of the form
$$F~=~\pchar^{\pad}\!\fchar\pchar^{\pad}\!\fchar\pchar^{\pad}\!\fchar\cdots,$$
 with the only exception that $\Theta(
\log n)$ of the $\fchar$-symbols
are replaced with the $\bchar$-symbol as follows. For every $j\in\{\sqrt{n},
\dots,n\}$
that is a power of two, identify the $\fchar$ in $F$ that is closest
to index $(n-j)$, breaking ties arbitrarily, and replace it with an $\bchar$-symbol. This concludes the
description of the hard distribution.

The purpose of repeated $\pchar^{\pad}$ substrings is to ensure a 
good probability that an LCS of $F$ and $S_t$ (the most recent $n$ symbols of $S$) 
omits no $\pchar$-symbols, 
enforcing a structure on the LCS. When this is the case we are able to use this 
structure to lower bound $H(\Yv\mid\Xvknown=\Xvfix)$. The value of $\pad$ has
been chosen carefully; a larger value of $\pad$ decreases the entropy of $S$, 
hence also the entropy of $\Yv$, and
a smaller value of $\pad$ increases the probability of the LCS omitting
$\pchar$-symbols.

\subsection{LCS and medium-entropy nodes \label{sec:LCS-medium}}

We now prove the LCS part of Lemma~\ref{lem:medium-entropy}. The edit distance part is proved in Section~\ref{sec:edit-medium}.
In the following we consider an arbitrary node $v\in\calT$ with 
$\ell_{v}\geq2\sqrt{n}$ and only consider the arrivals $t$ during which 
$Y_v$ is to be outputted. We now show that, using our hard distribution for the 
LCS problem with $\delta=2$, node $v$ has medium-entropy,
proving the LCS part of Lemma~\ref{lem:medium-entropy}.

We say that an arrival $t$ is \emph{\PA} if  $S_{t}[i]=\pchar$
whenever $F[i]=\pchar$. Hence $S_{t}[i]\in\{\bchar,\schar\}$
whenever $F[i]\in\{\fchar,\bchar\}$. These \PA arrivals are regular, 
occurring once in every $(\pad+1)$.  Let $\pav\subseteq[n]$ be the set of all \PA arrivals $t$ such
that $t$ is an arrival in the second~half of the arrival interval
during which $\Yv$ is outputted. More precisely, using notation from
Section~\ref{sec:more-notation} where the information transfer tree
$\calT$ was defined, $\pav$ is the set of \PA arrivals $t$ such
that $t\in[(t_{1}+1+\ell_{v}/4),t_{2}]$, where $v$ is a node in
the tree $\calT$. Hence $|\pav|=(\ell_{v}/4)/(\pad+1)$.

In the following lemma we will see that if
we know the Hamming distance, $\HAMt$ when $t$ is \PA then we can infer symbols from
the unknown inputs in $\Xv$. This fact follows from the observation that 
there is exactly one $\bchar$-symbol in $F$
which slides across $\Xv$ as $t$ increases.

\begin{lem}
\label{lem:recover-symbol} Consider a node $v$ of $\calT$ such
that $\ell_{v}\geq2\sqrt{n}$, and further that $\Xvknown=\Xvfix$ is known.
In the LCS hard distribution, for at least half of the \PA $t$, the value $\HAMt$ reveals
a non-$\pchar$ symbol in $\Xv$. No two distinct arrivals
reveal the same symbol in $\Xv$.\end{lem}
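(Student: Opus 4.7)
\textbf{Proof plan for Lemma~\ref{lem:recover-symbol}.}
For a \PA arrival $t$, the $\pchar^{\pad}$-blocks of $F$ and $S_t$ are perfectly aligned, so every $\pchar$-position of $F$ contributes $0$ to $\HAMt$; every $\fchar$-position contributes $1$ (because $\fchar\notin S$ and the aligned $S_t$-symbol lies in $\{\bchar,\schar\}$); and every $\bchar$-position $p$ of $F$ contributes $1$ iff $S_t[p]=\schar$. Hence $\HAMt = C + M_t$, where $C$ is the fixed number of $\fchar$-positions in $F$ and $M_t$ counts those $\bchar$-positions of $F$ whose aligned coin flip equals $\schar$. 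Write $q_a$ for the $\bchar$-position of $F$ coming from $j=2^a$, where $a\in\{\tfrac12\log n,\ldots,\log n\}$; by construction $|q_a-(n-2^a)|\leq\pad$.

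Since $S_t[p]=S[n+1+t+p]$, position $q_a$ aligns with a symbol of $\Xv=S[2n+t_0,2n+t_1]$ iff $t$ lies in the interval
$$
T_a \;:=\; [\,t_0+2^a-1,\;t_1+2^a-1\,],
$$
up to an $O(\pad)$ shift coming from the approximation $q_a\approx n-2^a$. Because $\calT$ is balanced over $n$ a power of two, $\ell_v=2^b$, and the hypothesis $\ell_v\geq 2\sqrt{n}$ gives $a^*:=b-1\geq\tfrac12\log n$, so $q_{a^*}$ exists. Using $\pav\subseteq[\,t_0+3\ell_v/4,\,t_0+\ell_v-1\,]$, together with the explicit computations $T_{a^*}=[t_0+\ell_v/2-1,\,t_0+\ell_v-2]$, $T_b=[t_0+\ell_v-1,\,t_0+3\ell_v/2-2]$, and $T_{b-2}=[t_0+\ell_v/4-1,\,t_0+3\ell_v/4-2]$, I would check that $\pav\subseteq T_{a^*}$ except for one boundary arrival, that $T_b\cap\pav$ contains at most one arrival, and that $T_a\cap\pav=\emptyset$ for $a\leq b-2$ (right endpoint strictly below $\pav$) and for $a\geq b+1$ (left endpoint strictly above $t_2$). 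The geometric spacing of the $q_a$'s places the nearest $T_a$ ($a\neq a^*$) at distance at least $2^{a^*}=\ell_v/2\geq\sqrt{n}\gg\pad$ from $T_{a^*}$, so the slack cannot cause additional overlaps.

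It follows that for all but $O(1)$ of the $|\pav|$ arrivals---hence for at least half, since $|\pav|=(\ell_v/4)/(\pad+1)\to\infty$---the position $q_{a^*}$ is the unique $\bchar$-position of $F$ whose aligned symbol of $S$ lies in $\Xv$. Every other $\bchar$-position of $F$ aligns with a position of $\Xvknown$, whose value is fixed to $\Xvfix$ and therefore known; subtracting $C$ and these known contributions from $\HAMt$ leaves $0$ or $1$, revealing whether $S[n+1+t+q_{a^*}]$ equals $\bchar$ or $\schar$. This symbol is non-$\pchar$ by the \PA property and lies in $\Xv$ by construction. For distinctness, the position $n+1+t+q_{a^*}$ is strictly increasing in $t$, so distinct qualifying arrivals reveal distinct positions (and hence distinct coin flips) of $\Xv$.

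The main technical obstacle is the endpoint bookkeeping for $T_a\cap\pav$: one must verify that the $O(\pad)$ slack in the locations of the $q_a$'s cannot inflate $T_a\cap\pav$ for $a\neq a^*$ enough to spoil the ``at least half'' count. Since $\pad=4\sqrt{\log n\cdot\log\log n}$ is far smaller than the minimum inter-$q_a$ gap $2^{a^*}\geq\sqrt{n}$, each such intersection can grow by at most $O(\pad/(\pad+1))=O(1)$ \PA arrivals, which is negligible against $|\pav|$.
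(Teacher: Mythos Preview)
Your proposal is correct and follows the same underlying idea as the paper's proof: for $t\in\pav$ there is a unique $\bchar$-position of $F$ whose aligned stream symbol lies in $\Xv$, and since all other contributions to $\HAMt$ are determined by $\Xvfix$, the value $\HAMt$ reveals that one unknown coin flip; distinctness follows because the aligned stream position moves with $t$.

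The difference is one of level of detail. The paper's proof simply asserts, ``from the definition of the hard distribution it follows that there is \emph{exactly one} index $i\in[n]$ such that $i$ is an index of $S_t$ included in $\Xv$ and $F[i]=\bchar$,'' and leaves the verification implicit. You instead carry out that verification explicitly: you identify the relevant $\bchar$-position as $q_{a^*}$ with $a^*=b-1$, compute the intervals $T_a$ of arrivals for which $q_a$ aligns with $\Xv$, and check that only $T_{a^*}$ covers $\pav$ (up to $O(1)$ boundary arrivals absorbed by the $\pad$-slack analysis). Your bookkeeping is accurate; in particular your observation that $T_{a^*}=[t_1,t_2-1]$ misses $\pav$ only at $t=t_2$, and that $T_b$ can only meet $\pav$ at $t=t_2$, is exactly the boundary phenomenon the paper glosses over. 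Your weaker conclusion ``all but $O(1)$ of $|\pav|$'' is amply sufficient for ``at least half'' and is arguably more honest than the paper's unqualified ``exactly one'' for every $t\in\pav$.
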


\begin{proof}
Suppose that $\ell_{v}\geq 2\sqrt{n}$ and consider any $t\in\pav$.
From the definition of the hard distribution it follows that there is
\emph{exactly one} index $i\in[n]$ such that  $i$
is an index of $S_{t}$ included in the substring $\Xv$ and $F[i]=\bchar$. Since $t$
is \PA, $S_{t}[i]\in\{\bchar,\schar\}$ and every other position
of $\Xv$ that holds a non-$\pchar$ symbol is aligned with a $\fchar$-symbol
of $F$. Since all elements of $S_{t}$ except for those in $\Xv$
are known, from the value of $\HAMt$ we can uniquely determine the
value of $S_{t}[i]$.

The second part of the lemma follows immediately as any two elements
of $\Xv$ that are determined at distinct \PA arrivals must
be at distinct positions of $\Xv$.\end{proof}

We can therefore directly infer that for the Hamming distance problem, under the LCS hard distribution,  $H(\Yv\mid\Xvknown=\Xvfix) \in \Omega(\ell_v/\rho)$ for all $\Xvfix$. This is because each of the $\Omega(\ell_{v}/\rho)$ non-$\pchar$ symbols in $\Xv$ corresponds to an independent coin-flip. 
In order to get the LCS lower bound we show in the following lemma that we can in fact often infer $\HAMt$ from $\LCSt$. The proof forms the technical core of the lower bound and Section~\ref{sec:lcs-ham-relationship} is devoted to~it.

\begin{lem}
\label{lem:LCSruns} In the LCS hard distribution, at any \PA arrival
$t$, with probability at least $9/10$, $\LCSt=n-\HAMt$. 
\end{lem}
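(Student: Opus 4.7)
The inequality $\LCSt \geq n - \HAMt$ follows trivially from the common subsequence induced by the direct alignment of $F$ and $S_t$, so I focus on establishing the upper bound $\LCSt \leq n - \HAMt$ with probability at least $9/10$. My starting point is a structural claim: any common subsequence that uses every $\pchar$-position of $F$ has length exactly $n - \HAMt$. Indeed, because $t$ is well-aligned, $F$ and $S_t$ share the same $\pchar$-skeleton, and the only order-preserving bijection between two identical totally ordered sets is the identity; each $F$-$\pchar$ at position $i$ is therefore pinned to $S_t[i]$. Each $F$-$\bchar$ at some position $p$ is then sandwiched between the pinned $\pchar$-matches at $p-1$ and $p+1$ and can only pair with $S_t[p]$, contributing exactly the direct-alignment $\bchar$-match count. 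Consequently, $\LCSt > n - \HAMt$ can occur only when every optimal alignment omits at least one $\pchar$-symbol of $F$.

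Next I would argue, via an exchange/straightening argument on an optimal LCS, that it suffices to consider uniform block-shift alignments matching $F$'s $k$-th $\pchar^\pad$-block to $S_t$'s $(k+s)$-th block for a single integer shift $s$. The rigid $\pchar^\pad$ barriers enforce a sharp trade-off: shifting by $|s|$ blocks sacrifices $|s|\pad$ $\pchar$-matches, while the maximum achievable gain in $\bchar$-matches is bounded by $K = \Theta(\log n)$, the number of $\bchar$-symbols in $F$. Consequently only shifts with $|s| \leq K/\pad = O(\sqrt{\log n/\log\log n})$ could conceivably outperform direct alignment.

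For each such candidate shift, the event that shift $s$ is strictly better than direct is equivalent to
\[
X^{(s)} - X^{(0)} > |s|\,\pad, \qquad X^{(s)} := \sum_{i=1}^{K} \mathbf{1}\!\left[\, S_t[p_i + s(\pad+1)] = \bchar \,\right],
\]
where $p_1,\dots,p_K$ are the $\bchar$-positions of $F$. Their exponential spacing (powers of two below $n$) forces the positions $p_i$ and $p_i + s(\pad+1)$ to be pairwise distinct over all $i \in [K]$ and all $|s| \leq K/\pad$, so $X^{(s)}$ and $X^{(0)}$ are independent sums of Bernoulli$(1/2)$ random variables, giving $X^{(s)} - X^{(0)}$ mean zero and variance $O(K)$. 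Chebyshev yields $\Prob\bigl[X^{(s)} - X^{(0)} > |s|\pad\bigr] = O\bigl(K/(s^2\pad^2)\bigr) = O\bigl(1/(s^2 \log\log n)\bigr)$, and summing over $1 \leq |s| \leq K/\pad$ (and symmetrically for negative $s$) yields a total failure probability of $O(1/\log\log n)$, which is below $1/10$ for $n$ sufficiently large.

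The main obstacle I anticipate is the structural reduction in the second step: carefully showing that any piecewise or mixed-shift alignment cannot produce more total matches than the best uniform shift, given that each $\pchar$-block omission costs $\pad$ matches while the $\bchar$-match gain pool is only $K$. Once that reduction is in hand, the concentration bound above is essentially routine.
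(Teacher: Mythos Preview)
Your first structural step is correct: at a well-aligned arrival, any common subsequence that uses every $\pchar$ of $F$ has length at most $n-\HAMt$, so $\LCSt>n-\HAMt$ forces some $\pchar$ of $F$ to be dropped. The trivial lower bound $\LCSt\geq n-\HAMt$ is also fine.

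The genuine gap is your second step. The deterministic claim ``any piecewise or mixed-shift alignment cannot produce more total matches than the best uniform shift'' is \emph{false}, so no exchange/straightening argument can establish it. A concrete obstruction: take a matrix whose top half of rows has $\bchar$ only in column $c_1$ and whose bottom half has $\bchar$ only in column $c_2$, both adjacent to the middle. Each uniform shift picks up at most $K/2$ heads at cost $\pad$, while the two-segment mixed choice picks up all $K$ heads at cost $(\pad/2)(|c_1-\textup{mid}|+|c_1-c_2|+|c_2-\textup{mid}|)=O(\pad)$; since $K=\Theta(\log n)\gg\pad$, the mixed choice strictly dominates every uniform one. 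Thus your Chebyshev bound over the $O(K/\pad)$ uniform shifts, even if correct, does not control the LCS.

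The paper does not attempt this reduction. Instead it parametrises every candidate LCS by a per-row column choice $\pi$ in a $(\Theta(\log n))\times(2\pad+1)$ random matrix $M_t$, derives an exact length formula
\[
\len(\pi)=n_\pchar+|\pi|-\tfrac{\pad}{2}\sum_i |\operatorname{col}(i)-\operatorname{col}(i+1)|,
\]
and then shows that with probability at least $9/10$ the matrix is \emph{balanced}: every contiguous $(d\times 1)$ sub-column has its number of $\bchar$-entries within $\pad/4$ of $d/2$. Balancedness (obtained by Hoeffding plus a union bound over all $O(\log^3 n)$ sub-columns) is exactly what lets you compare an arbitrary mixed $\pi$ to the middle-column choice $\pi^*$ segment by segment: each segment of $\pi$ can contribute at most $d_i/2+\pad/4$ heads, while each column change costs $\pad/2$, so every deviation from $\pi^*$ is a net loss. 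This is the missing idea in your sketch --- you need a condition on \emph{all} column sub-intervals, not just on column totals, to kill the mixed alignments.
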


Lemma~\ref{lem:LCSruns} is not sufficient, even in conjunction with Lemma~\ref{lem:recover-symbol}, to show that there are many medium-entropy nodes for the LCS problem. However we can use it to establish the following fact which says that for most $\Xvknown$ there is a \emph{fixed} subset of the \PA arrivals 
of size $\Omega(\ell_{v}/\rho)$ such that $\LCSt=n-\HAMt$ with probability at least $4/7$. We will see that this will then be sufficient to prove Lemma~\ref{lem:medium-entropy}.

\begin{lem}
\label{lem:lcs-ham} Suppose that $v$ is a node of $\calT$ such
that $\ell_{v}\geq2\sqrt{n}$. In the LCS hard distribution, for at least
half of the values of $\Xvknown$ there is a fixed set $\pav^{*}\subseteq\pav$
of \PA arrivals, where $|\pav^{*}|\geq|\pav|/15$, such that for
any $t\in\pav^{*}$, the probability that $\LCSt=n-\HAMt$ is at least
$4/7$. 
\end{lem}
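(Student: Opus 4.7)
The plan is to derive Lemma~\ref{lem:lcs-ham} from Lemma~\ref{lem:LCSruns} by a two-stage averaging argument. First I would average over $\Xvknown$ to isolate the ``good'' values $\Xvfix$ (there will be at least half of them); then, within each good $\Xvfix$, I would average over $t \in \pav$ to extract the set $\pav^{*}$. For each $\Xvfix$ and each $t\in\pav$, write
$$p(\Xvfix, t) \;=\; \Prob\!\left(\LCSt = n - \HAMt \,\middle|\, \Xvknown = \Xvfix\right),$$
so that the remaining randomness is only in $\Xv$. Lemma~\ref{lem:LCSruns} asserts that the unconditional probability of this event is at least $9/10$ for every $t\in\pav$, so by the law of total probability $\expected{p(\Xvknown,t)} \geq 9/10$.

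For the first stage, set $q(\Xvfix) = \tfrac{1}{|\pav|}\sum_{t\in\pav} p(\Xvfix, t)$. Linearity of expectation gives $\expected{q(\Xvknown)} \geq 9/10$. Applying Markov's inequality to the non-negative quantity $1 - q(\Xvknown) \in [0,1]$, whose expectation is at most $1/10$, yields
$$\Prob\!\left(q(\Xvknown) \geq \tfrac{4}{5}\right) \;\geq\; \tfrac{1}{2}.$$
The values of $\Xvfix$ satisfying $q(\Xvfix) \geq 4/5$ are the good ones; by the above they constitute at least half of the distribution on $\Xvknown$.

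For the second stage, fix any good $\Xvfix$ and set $\pav^{*}(\Xvfix) = \{t \in \pav : p(\Xvfix, t) \geq 4/7\}$, with $B = \pav \setminus \pav^{*}$. Bounding the sum defining $q(\Xvfix)$ by $p \leq 1$ on $\pav^{*}$ and $p < 4/7$ on $B$,
$$\tfrac{4}{5}|\pav| \;\leq\; \sum_{t\in\pav} p(\Xvfix, t) \;\leq\; |\pav^{*}| + \tfrac{4}{7}|B| \;=\; |\pav| - \tfrac{3}{7}|B|,$$
so $|B| \leq \tfrac{7}{15}|\pav|$ and hence $|\pav^{*}| \geq \tfrac{8}{15}|\pav| \geq |\pav|/15$, as the lemma requires. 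By construction $\pav^{*}$ depends only on $\Xvfix$ and not on the random $\Xv$, so it is ``fixed'' in the sense of the statement, and the lower bound $4/7$ on $\Prob(\LCSt = n - \HAMt \mid \Xvknown = \Xvfix)$ for $t \in \pav^{*}$ is literally the defining property of $\pav^{*}$.

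There is no real technical obstacle here once the conditional probability $p(\Xvfix, t)$ is isolated: all the hard work of relating $\LCSt$ to $\HAMt$ is already packaged in Lemma~\ref{lem:LCSruns}. The only care point is bookkeeping with the constants $9/10$, $4/5$, $4/7$ and $15$, which are chosen so that the two Markov-style steps chain cleanly; any change in one would propagate through the others.
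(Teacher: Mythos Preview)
Your argument is correct and follows the same two-stage averaging strategy as the paper: first a Markov step over $\Xvknown$ to isolate the good $\Xvfix$, then a second averaging over $t\in\pav$ to extract $\pav^{*}$. Your execution is in fact cleaner than the paper's: the paper first applies the $9/10$ bound to \emph{all} \PA arrivals in $[t_1+1,t_2]$ (there are $2|\pav|$ of them, since $\pav$ covers only the second half), and then subtracts off the $|\pav|$ arrivals outside $\pav$ to get an expected good fraction of only $3/5$ inside $\pav$; from $3/5$ and the threshold $4/7$ the paper then recovers $|\pav^{*}|\geq|\pav|/15$. You instead apply Lemma~\ref{lem:LCSruns} directly to the arrivals in $\pav$, keep the full $4/5$, and obtain the stronger $|\pav^{*}|\geq 8|\pav|/15$. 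The paper's detour buys nothing, so your streamlining is a genuine (if minor) improvement.
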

\begin{proof}
Let $v$ be a node in the tree $\calT$ such that $\ell_{v}\geq2\sqrt{n}$.
First we claim that for at least half of the values of $\Xvknown$,
$\LCSt=n-\HAMt$ for a fraction of at least $4/5$ of all \PA arrivals
during the interval where $\Yv$ is outputted. Notice that these \PA
arrivals are not fixed. In particular, many of them might not be in
$\pav$. We may assume that they depend arbitrarily on the values
of $S$, that is both $\Xvknown$ and $\Xv$. We show the claim by
contradiction. Under the assumption that the claim is false we will
maximise the total number of arrivals at which the LCS output equals
$n$ minus the Hamming distance and see that this will contradict
Lemma~\ref{lem:LCSruns}.

So, suppose that the claim is false. This means that fewer than half
of the $\Xvknown$ values have the property that $\LCSt=n-\HAMt$
for any arbitrary number of \PA arrivals $t$, and for the remaining
values of $\Xv$, $\LCSt=n-\HAMt$ for less than a fraction of $4/5$
of the \PA arrivals $t$. Thus, over all values of $S$, the fraction
of \PA arrivals~$t$ during the interval where $\Yv$ is outputted,
for which $\LCSt=n-\HAMt$, is less than $(1/2)\cdot1+(1/2)\cdot(4/5)=9/10$.
Observe that for any fixed \PA arrival $t$, any two values of the
substring $S_{t}$ are counted the same number of times over all values
of $S$. That is, a particular substring $S_{t}$ does not occur more
frequently than any other substring. Thus, assuming the claim is not
true contradicts Lemma~\ref{lem:LCSruns}.

We now know that for at least half of the values of $\Xvknown$, a
fraction of at least $4/5$ of all \PA arrivals $t$ have the property
that $\LCSt=n-\HAMt$. Let $\pav'\subseteq\pav$ be the set of all
arrivals $t\in\pav$ such that $\LCSt=n-\HAMt$. Recall that during the interval where $\Yv$ is outputted there is
a total of $2|\pav|$ \PA arrivals. Thus, 
\[
|\pav'|\,\geq\,\frac{4}{5}\cdot2|\pav|-|\pav|\,=\,\frac{3}{5}|\pav|.
\]
We will now argue that there must be a fixed choice $\pav^{*}\subseteq\pav$
of arrivals such that for every $t\in\pav^{*}$, the probability that
$\LCSt=n-\HAMt$ is at least $4/7$, and $|\pav^{*}|\geq(1/15)|\pav|$.
This would conclude the proof of the lemma. Note that $\pav^{*}$
does not depend on $S$, as opposed to $\pav'$ which may depend on
$S$. Again we will use proof by contradiction.

Let $\Xvknown$ be a value such that at $|\pav'|\geq(3/5)|\pav|$
and suppose that there is no $\pav^{*}$ with the above property.
To show contradiction we will show that $|\pav'|<(3/5)|\pav|$. Under
the assumption that there is no $\pav^{*}$ with the above property,
we may suppose that just under $(1/15)|\pav|$ fixed arrivals $t\in\pav$
have the property that $\LCSt=n-\HAMt$ with probability~$1$, and
for the remaining arrivals $t\in\pav$, the probability that $\LCSt=n-\HAMt$
is just below $4/7$. In the hard distribution, the induced distribution
for $\Xv$ conditioned on any fixed $\Xvknown$ is uniform, hence
\[
|\pav'|\,<\,\frac{1}{15}|\pav|\cdot1+(1-\frac{1}{15})|\pav|\cdot\frac{4}{7}\,=\,\frac{3}{5}|\pav|,
\]
which is the contradiction we were looking for.
\end{proof}

We can now complete the proof of the LCS part of Lemma~\ref{lem:medium-entropy}.
\begin{proof}
[Putting the pieces together to complete the proof of Lemma~\ref{lem:medium-entropy}]Let $v$
be a node in the tree $\calT$ such that $\ell_{v}\geq2\sqrt{n}$.
Let $\Xvknown$ be  such that there is a set $\pav^{*}\subseteq\pav$
of arrivals satisfying Lemma~\ref{lem:lcs-ham}. For each
arrival $t$ in $\pav^{*}$ we can infer $\HAMt$ from the output
with probability at least $4/7$, hence by applying Lemma~\ref{lem:recover-symbol}
we can determine an element of $\Xv$ with probability at least $4/7$.
We may not know when an element is correctly identified, but nevertheless,
in $4$ out of $7$ cases the output at arrival $t$ will reflect
some element of $\Xv$, which was picked according to a coin flip.
To assume the worst, we may assume that whenever the element is an
$\bchar$-symbol, the output reflects the value of the element. Further,
whenever the output does not reflect the value of the element, we
may assume that it is an $\bchar$-symbol as well. Thus, the contribution
to the conditional entropy of $\Yv$ from the output at arrival $t\in\pav^{*}$
is at least the entropy of a biased coin for which one side has probability
$4/7-1/2=1/14$. The (binary) entropy of such a coin is bounded from below
by $0.37$.

Thus, the total contribution to the conditional entropy of $\Yv$
from all arrivals in $\pav^{*}$ is at least 
\[
0.37\cdot|\pav^{*}|~\geq~0.37\cdot\frac{1}{15}|\pav|~=~\frac{0.37}{15}\cdot\frac{\ell_{v}}{4(\pad+1)}~=~\frac{0.37}{15}\cdot\frac{\ell_{v}}{4(4\sqrt{\log n\cdot\log\log n}+1)}.
\]
This value matches the definition of a medium-entropy node for a suitable
constant~$k$, which concludes the proof of the LCS part of Lemma~\ref{lem:medium-entropy}.
\end{proof}

\subsection{Edit distance and medium-entropy nodes \label{sec:edit-medium}}

To show the edit distance part of Lemma~\ref{lem:medium-entropy} we use the 
same argument and hard distribution as for the LCS problem, coupled with the following squeezing property. 

\begin{lem}
\label{lem:squeeze} For any $F$ and $S_t$, 
$$n-\LCSt \,\leq\, \min\limits_{j\in[2n+1+t]}\ED{F,S[j,2n+t]} \,\leq\, \HAMt.$$
\end{lem}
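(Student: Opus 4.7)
The statement has two inequalities, and I will treat them separately. The plan is to prove the right inequality directly from the definition, and for the left inequality to reduce every choice of $j$ to a statement about common subsequences of $F$ and $S_t$.

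For the right inequality $\min_{j}\ED{F,S[j,2n+t]} \leq \HAMt$, I would just plug in $j=n+1+t$ so that $S[j,2n+t]=S_t$. Since $|F|=|S_t|=n$, transforming $F$ into $S_t$ by substituting each mismatched position (and leaving matches alone) yields an edit sequence of length $\HAMt$, giving $\ED{F,S_t}\leq \HAMt$, which dominates the minimum.

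For the left inequality, fix an arbitrary $j$ and let $B=S[j,2n+t]$. I would split into two cases according to whether $|B|\leq n$ or $|B|>n$. In the first case, $B$ is a suffix of $S_t$, so every common subsequence of $F$ and $B$ is automatically a common subsequence of $F$ and $S_t$; hence $\LCS(F,B)\leq \LCSt$. Combined with the standard bound $\ED(F,B)\geq n-\LCS(F,B)$ (in any alignment, every position of $F$ is either matched, substituted, or deleted, so the number of unmatched positions is at least $|F|-\LCS(F,B)$, and each unmatched position costs at least one edit), we obtain $\ED(F,B)\geq n-\LCSt$.

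In the harder case $|B|>n$, write $|B|=n+p$ with $p\geq 1$, so that $S_t$ occupies positions $p,\dots,n+p-1$ of $B$. Fix an optimal edit sequence $F\to B$ of cost $k=\ED(F,B)$ and let $m,s,d,i$ denote its numbers of matches, substitutions, $F$-deletions, and $B$-insertions respectively, so $m+s+d=n$, $m+s+i=n+p$, and $k=s+d+i$. The matches define a set of aligned index pairs whose restriction to $B$-positions $\geq p$ is a common subsequence of $F$ and $S_t$; since at most $p$ matches can lie in the first $p$ positions of $B$, this restricted set has size at least $m-p$. Hence $\LCSt\geq m-p$, and substituting $m=n-s-d$ together with $i=p+d$ (from the two counting identities) gives $m-p=n-s-d-p=n-k+d\geq n-k$, so $\ED(F,B)=k\geq n-\LCSt$, as required.

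The main obstacle I expect is the bookkeeping in the case $|B|>n$: one needs to be careful that matches "wasted" on the prefix of $B$ outside $S_t$ are compensated by the extra insertions they force elsewhere in the alignment. The clean way to make this precise is through the two linear identities on $m,s,d,i$ above, which collapse the seemingly lossy restriction step into the exact bound $m-p\geq n-k$.
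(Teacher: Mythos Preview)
Your proof is correct and follows essentially the same approach as the paper: a trivial right inequality, and for the left inequality a case split on whether the chosen suffix is shorter or longer than $n$, bounding $\LCSt$ from below by the matches of an optimal alignment restricted to the $S_t$ window. The only cosmetic difference is bookkeeping---the paper works with ``untouched'' indices when transforming the longer string into $F$, while you track $(m,s,d,i)$ when transforming $F$ into the longer string and close via the linear identities; the underlying counting is identical.
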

\begin{proof}
The second inequality follows immediately since the Hamming distance is a restricted version of edit distance.
We now focus on the first inequality.

Since the lemma makes no assumption on the strings $F$ and $S$,
we will prove the following equivalent statement where we align $F$
with prefixes of a longer string instead of suffixes. Let $G$
be any string of length $2n$. We will show that 
\[
n-\LCS(F,G[0,n-1])~\leq~\min_{j\in[2n]}\ED{F,G[0,j]}
\]

Let $j^{*}$ be a $j$ that minimises the right hand side of the inequality.
Thus, we want to show that $n-\LCS(F,G[0,n-1])\leq\ED{F,G[0,j^{*}]}$.
We consider two cases, depending on the value of $j^{*}$.

In the first case, suppose that $j^{*}<n-1$. Here we think of the
edit distance as the number of edit operations (replace, insert, delete)
required to transform $F$ into $G[0,j^{*}]$. We say that an index~$i\in[n]$
of $F$ is \emph{untouched} if $F[i]$ is not subject to an edit operation,
that is, neither replaced nor deleted. Let $u$ be the number of untouched
indices. The number of edit operations required to transform $F$
into $G[0,j^{*}]$ is at least $n-u$, where we have equality if there
are no insertions. The symbols at the untouched indices of $F$ make
a common subsequence of $F$ and $G[0,j^{*}]$, hence $n-\LCS(F,G[0,n-1])$
is at most $n-u$. This concludes the first case.

In the second case, suppose that $j^{*}\geq n-1$. Similarly to above,
let $u$ be the number of untouched indices in $[j^{*}+1]$ when transforming
$G[0,j^{*}]$ into $F$. Let $u'$ be the number of untouched indices
$i$ such that $i\leq n-1$, hence $\LCS(F,G[0,n-1])\geq n-u'$. Thus,
\begin{align*}
n-\LCS(F,G[0,n-1])\leq n-u' & =(j^{*}+1)-(u'+j^{*}-(n-1))\\
 & \leq(j^{*}+1)-u\\
 & \leq\ED{F,G[0,j^{*}]}.\tag*{\qedhere}
\end{align*}

\end{proof}


We can now give the proof of the edit distance part of Lemma~\ref{lem:medium-entropy}.

\begin{proof}
[Proof of the edit distance part of Lemma~\ref{lem:medium-entropy}]By
combining Lemmas~\ref{lem:LCSruns} and~\ref{lem:squeeze}, we have
that with probability at least $9/10$, the edit distance equals $\HAMt$
at \PA arrivals $t$. Thus, Lemma~\ref{lem:lcs-ham} holds also
for the edit distance problem. Finally we use the same argument as
for the LCS part of the proof of Lemma~\ref{lem:medium-entropy}
from the previous section to conclude the proof of the edit distance
part of the lemma.
\end{proof}

\section{The relationship between LCS and Hamming distance \label{sec:lcs-ham-relationship}}

In this section we prove Lemma~\ref{lem:LCSruns} which says that
in the hard instance for the LCS problem, at any \PA arrival $t$,
$\LCSt=n-\HAMt$ with probability at least $9/10$.

The overall approach is to show that for a \PA arrival~$t$ there
is a high probability that any LCS of $F$ and $S_t$ (the latest $n$ symbols
of the stream) includes all $\pchar$-symbols from~$F$. When the LCS indeed includes all $\pchar$-symbols, 
showing that $\LCSt=n-\HAMt$ follows straightforwardly.

Let $t \in [n]$ be a \PA arrival.  Figure~\ref{fig:matrix}(a) illustrates an example of the alignment of $F$ and $S$.   For each index $i$ so that $F[i] = \bchar$, the rectangular box illustrates the subarray $S_t[i-(\pad + \pad^2),  i+(\pad + \pad^2)]$.  As the distance between two consecutive $\bchar$-symbols is at least $\sqrt{n}$ these boxes cannot overlap.  Now let us define $\calC_t$ to be the set of common subsequences of $F$ and $S_t$ so that each subsequence includes at most one $\bchar$-symbol from each rectangular box and no other $\bchar$-symbols. Observe that any subsequence of $F$ and $S_t$ is over the alphabet $\{\pchar, \bchar\}$. The set $\calC_t$ has the following very useful property.

%

\begin{lem}
\label{lem:LCS} In the hard distribution, for any \PA arrival $t$, the set of subsequences $\calC_t$ contains all longest common subsequences of $F$ and $S_t$.
\end{lem}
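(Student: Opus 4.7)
Let $\Pi$ be any LCS of $F$ and $S_t$. Since $t$ is \PA, $F$ and $S_t$ share a common block structure: both are $m := n/(\pad+1)$ concatenated blocks of the form $\pchar^{\pad}$ followed by a single non-$\pchar$ symbol, with non-$\pchar$ positions at the same indices; in $F$ these positions hold $\fchar$ or $\bchar$, and in $S_t$ they hold $\bchar$ or $\schar$. Hence any common subsequence uses only $\pchar$ and $\bchar$. The plan is to list the $\bchar$-matches of $\Pi$ as $(i_1, j_1), \dots, (i_s, j_s)$ (in monotone order), track their block-level shifts $d_r := (j_r - i_r)/(\pad+1)$, and show that large shifts force too many $\pchar$-matches to be lost.

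The key counting step is to upper bound the $\pchar$-contribution to $\Pi$ in terms of $\max_r |d_r|$. Letting $k_r$ and $l_r$ denote the block indices of $i_r$ in $F$ and $j_r$ in $S_t$ (with boundary conventions $k_0 = l_0 = 0$, $k_{s+1} = l_{s+1} = m$, $d_0 = d_{s+1} = 0$), the $r$-th gap between consecutive $\bchar$-matches contains $(k_{r+1}-k_r)\pad$ $\pchar$-positions in $F$ and $(l_{r+1}-l_r)\pad$ in $S_t$, so monotonicity bounds the $\pchar$-matches inside this gap by $\pad \cdot \min(k_{r+1}-k_r, l_{r+1}-l_r)$. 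Summing over $r$, writing $2\min(x,y) = x + y - |x-y|$, telescoping $\sum_r (k_{r+1}-k_r) = \sum_r (l_{r+1}-l_r) = m$, and using the triangle inequality together with $d_0 = d_{s+1} = 0$ yields
\begin{equation*}
|\Pi| \;\leq\; \pad m \;-\; \pad \cdot \max_r |d_r| \;+\; s.
\end{equation*}

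The complementary lower bound $|\Pi| \geq \pad m$ is immediate: just match $\pchar$ at position $p$ of $F$ with $\pchar$ at position $p$ of $S_t$ for every $\pchar$-position $p$. Combining gives $\max_r |d_r| \leq s/\pad$. By construction $F$ contains at most $\tfrac{1}{2}\log n + 1$ occurrences of $\bchar$, so $s \leq \tfrac{1}{2}\log n + 1$; combined with $\pad = 4\sqrt{\log n \cdot \log \log n}$, this forces $\max_r |d_r| < \pad$ for $n$ sufficiently large, and hence $|j_r - i_r| = |d_r|(\pad+1) \leq \pad + \pad^2$ for every $r$. Consequently every $\bchar$-match of $\Pi$ lies in the rectangular box around its $\bchar$-position in $F$; since consecutive $\bchar$-positions in $F$ are $\Omega(\sqrt{n})$ apart while the boxes have width $O(\pad^2) \ll \sqrt{n}$, the boxes are disjoint, so $\Pi$ uses at most one $\bchar$-symbol per box and no others, i.e.\ $\Pi \in \calC_t$.

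The main obstacle is the block-level telescoping that produces the $\pad \cdot \max_r |d_r|$ loss term. Once one sees that each unit of $\bchar$-shift is paid for by losing $\pad$ $\pchar$-matches, while the total gain from extra $\bchar$-matches is capped at the $\Theta(\log n)$ $\bchar$-symbols available in $F$, the carefully chosen value of $\pad$ makes the inequality $\max_r |d_r| \leq s/\pad$ strong enough to confine every $\bchar$-match to its box.
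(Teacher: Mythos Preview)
Your proof is correct and rests on the same idea as the paper's: a $\bchar$-match shifted by $d$ blocks forces the omission of roughly $\pad\cdot d$ $\pchar$-symbols, and since the total $\bchar$-gain is capped by the $\Theta(\log n)$ $\bchar$'s in $F$, no LCS can afford a shift of $\pad$ or more. The paper's proof simply asserts that an out-of-box $\bchar$ costs at least $\pad^2$ $\pchar$'s and stops there; your telescoping over the block-level shifts $d_r$ (with boundary values $d_0=d_{s+1}=0$) makes this loss precise, yielding the global inequality $\max_r|d_r|\le s/\pad$, and also handles the ``two $\bchar$'s in one box'' case automatically rather than by the paper's separate ad hoc remark.
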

\begin{proof}
  We now argue that a common subsequence that includes an $\bchar$-symbol from outside the rectangular boxes would have to omit more $\pchar$-symbols than the total number of $\bchar$-symbols in $F$, hence it cannot be an LCS. Since all $\bchar$-symbol of $F$ are aligned with the middle element of a rectangular box, including an $\bchar$-symbol from outside a box means that at least $\pad\cdot\pad$ $\pchar$-symbols must be omitted. Now, $\pad^2>\log n$ and there are no more than $\log n$ $\bchar$-symbols in $F$.
  First observe there is always a common subsequence in $\calC_t$ which includes all the $\pchar$-symbols.
  
  Similarly, if a common subsequence includes more than one $\bchar$-symbol from the same rectangular box, one of them can be seen as being picked from outside another box, hence the same argument as above applies.
\end{proof}


\begin{figure}[t]
\centering{}\includegraphics{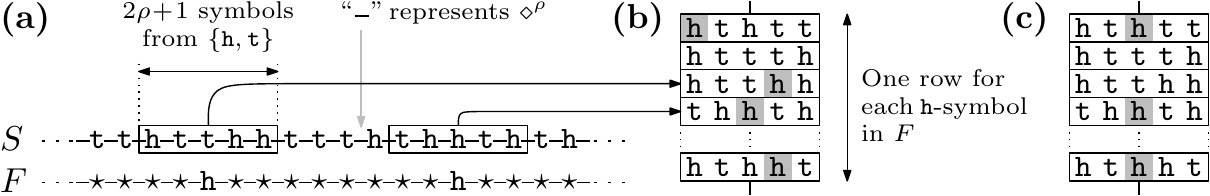} \protect\caption{An example of the 
alignment of $F$ and $S_t$ at some \PA arrival
$t$. Only a portion of the strings are shown, containing two occurrences
of the $\bchar$-symbol in $F$. Every substring $\pchar^{\pad}$ is illustrated
with a short line segment.\label{fig:matrix}}
\end{figure}


Consider now the rectangular boxes stacked on top of each other
as in Figure~\ref{fig:matrix}(b). The $i$-th box from the top corresponds
to the $i$-th $\bchar$-symbol in $F$. Each box contains $2\pad+1$
elements from $\{\bchar,\schar\}$ and may be regarded as a row of
a matrix with $2\pad+1$ columns. We refer to this matrix as $M_{t}$.
Observe that in the hard distribution, the entries of $M_{t}$ are drawn independently and uniformly at random
from $\{\bchar,\schar\}$.

\global\long\def\row{\ensuremath{\textup{row}}}


Define $\pi$ to be a set of matrix coordinates of $\bchar$-entries of $M_{t}$, containing
at most one entry from each row. Let $\Pi_t$ be the set of all such sets $\pi$. In Figure~\ref{fig:matrix}(b) and~(c)
we have illustrated two examples of $\pi$, where its elements 
are highlighted in grey.  Given some $\pi$, let $h_{1},\dots,h_{|\pi|}$ denote the
elements of $\pi$ in top-to-bottom order. Each set $\pi$ uniquely specifies a subsequence $C_{\pi} \in \calC_t$ by describing which $\bchar$-symbols are chosen from each rectangular box. The subsequence $C_{\pi}$ is then completed greedily by including as many $\pchar$-symbols as possible.

We also define $\operatorname{col}(i)\in \{1,\dots, 2\pad+1\}$ to be the column in which $h_i$ occurs and the value $n_{
  \pchar}$ to be the number of $\pchar$-symbols in $F$. Observe that $n_\pchar$ only depends on $n$. We can now define
\[
\len(\pi) = n_{\pchar} 
+ |\pi|
- \frac \pad 2 \cdot \Big(
|\pad+1- \operatorname{col}(1)|
+ |\pad+1- \operatorname{col}(|\pi|)|
+ \sum_{i=1}^{|\pi-1|} |\operatorname{col}(i) - \operatorname{col}(i+1)|
\Big).
\]
The following lemma tells us that  $\len(\pi)$ is in fact the length of the common subsequence $C_\pi$ implied by $\pi$.  



\begin{lem}
\label{lem:len-LCS} For any \PA arrival $t$ and any $\pi\in\Pi_t$, $\len(\pi)=|C_\pi|$.
 \end{lem}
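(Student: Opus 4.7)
The plan is to compute $|C_\pi|$ directly by partitioning the alignment induced by $\pi$ into $|\pi|+1$ consecutive segments and counting the maximum number of $\pchar$-symbols matchable in each. For each selected $h_i \in \pi$, let $\Delta_i := \operatorname{col}(i) - (\pad+1)$ be its signed column shift, let $p_i$ be the $F$-position of the $\bchar$ selected by $h_i$, and let $q_i := p_i + \Delta_i(\pad+1)$ be the $S_t$-position paired with $F[p_i]$ in $C_\pi$. By the \PA alignment, $F$ and $S_t$ share the same $\pchar$-block layout---blocks of length $\pad$ at common positions, flip positions spaced $\pad+1$ apart---so each $q_i$ is also a flip position. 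Introducing virtual boundary shifts $\Delta_0 = \Delta_{|\pi|+1} = 0$, I would partition $F$ and $S_t$ into segments $(I_j, J_j)_{j=0}^{|\pi|}$ at the $p_i$ and $q_i$ respectively; the gap of at least $\sqrt{n}$ between consecutive $\bchar$-symbols in $F$ vastly exceeds $2\pad(\pad+1)$, so the $q_i$ are automatically increasing and every $J_j$ is non-empty.

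I would then count $\pchar$-contributions segment by segment. If $I_j$ contains $a_j$ full $\pchar$-blocks then $J_j$ contains exactly $a_j + (\Delta_{j+1} - \Delta_j)$ such blocks, the shift discrepancy measuring the surplus of blocks on the $S_t$-side. The maximum number of $\pchar$s that can be order-matched within segment $j$ is therefore $\pad \cdot \min(a_j,\, a_j + \Delta_{j+1} - \Delta_j) = \pad a_j - \pad \cdot \max(0,\, \Delta_j - \Delta_{j+1})$: the upper bound is just the smaller of the two $\pchar$-counts on either side, and it is attained by matching whole blocks one-to-one. Summing over $j$ and using $\pad \sum_j a_j = n_\pchar$, the total $\pchar$-contribution to $|C_\pi|$ equals $n_\pchar - \pad \sum_{j=0}^{|\pi|} \max(0,\, \Delta_j - \Delta_{j+1})$.

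To close the computation, I would invoke the telescoping identity $\sum_{j=0}^{|\pi|}(\Delta_j - \Delta_{j+1}) = \Delta_0 - \Delta_{|\pi|+1} = 0$, which forces the positive and negative parts of the sum to coincide, giving $\sum_j \max(0, \Delta_j - \Delta_{j+1}) = \tfrac{1}{2}\sum_j |\Delta_j - \Delta_{j+1}|$. Adding the $|\pi|$ matched $\bchar$-symbols yields $|C_\pi| = n_\pchar + |\pi| - \tfrac{\pad}{2}\sum_{j=0}^{|\pi|}|\Delta_j - \Delta_{j+1}|$, and substituting $\Delta_j = \operatorname{col}(j) - (\pad+1)$ together with the virtual $\Delta_0 = \Delta_{|\pi|+1} = 0$ reproduces $\len(\pi)$ exactly. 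The main obstacle I expect is organisational rather than substantive: choosing the virtual boundary shifts so that a single telescoping identity absorbs both interior and boundary losses uniformly, and verifying that unmatched $\bchar$-rows of $F$ (those whose row index is not used by $\pi$) truly behave as mere flip symbols inside their enclosing $I_j$ and do not interact with the block-counting argument. The per-segment counting itself is essentially forced, since $\min(a_j \pad,\, b_j \pad)$ is a hard cap on any common subsequence and the rigid uniform block structure on both sides makes that cap attainable.
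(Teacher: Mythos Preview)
Your proposal is correct and follows essentially the same argument as the paper's proof. Both partition $F$ and $S_t$ into $|\pi|+1$ segments at the selected $\bchar$-positions, count the $\pchar$-symbols on each side of each segment (your $\pad a_j$ and $\pad(a_j+\Delta_{j+1}-\Delta_j)$ are exactly the paper's $f_i$ and $s_i$), observe that the omitted $\pchar$-symbols in a segment are the one-sided excess $\max(0,f_i-s_i)$, and then use the telescoping identity $\sum(f_i-s_i)=0$ to convert this into $\tfrac{1}{2}\sum|f_i-s_i|$. Your signed shifts $\Delta_j$ with virtual boundary values $\Delta_0=\Delta_{|\pi|+1}=0$ are a clean repackaging of the paper's $d_i=\pm 1$ indicators, but the substance is identical.
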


\begin{proof}
Let $t$ be any \PA arrival and let $\pi$ be any set from $\Pi_t$.
As described above, $\pi$ uniquely specifies a common subsequence
$C_{\pi}$ of $F$ and $S_{t}$.
Recall that the elements of $\pi$ are denoted $h_{1},\dots,h_{|\pi|}$
in top-to-bottom order with respect to the matrix $M_{t}$.
For $i\in\{0,\dots,|\pi|\}$, we define
\[
\dist(i,i+1)=
\begin{cases}
  |\pad+1- \operatorname{col}(1)| & \textup{if $i=0$,}\\
  |\pad+1- \operatorname{col}(|\pi|)| & \textup{if $i=|\pi|$,}\\
|\operatorname{col}(i) - \operatorname{col}(i+1)| & \textup{otherwise.}
\end{cases}
\]
We can now write $\len(\pi)$ as
\[
\len(\pi) = n_{\pchar} 
+ |\pi|
- \frac \pad 2 \sum_{i=0}^{|\pi|} \dist(i,i+1).
\]
Referring to the above formula for $\len(\pi)$,
and starting with the number $n_\pchar$ of $\pchar$-symbols, we will show that adding the number of $\bchar$-symbols in $C_\pi$, that is adding the term $|\pi|$, and subtracting the
other terms of the equation will indeed equal $|C_{\pi}|$.
That is, we will show that the number of omitted $\pchar$-symbols in $C_\pi$ is exactly 
\[
\frac{\pad}{2}\cdot\sum_{i=0}^{|\pi|}\dist(i,i+1).
\]

Similarly to the definition of $\operatorname{col}(i)$, we define for $i\in\{1,\dots,|\pi|\}$,
$\row(i)$ to be the row of $M_t$ in which $h_i$ occurs.

Let $s_{1}$
be the number of $\pchar$-symbols in $S_{t}$ to the left of the $\bchar$-symbol in $S_t$ that correspond to $h_{1}$.
Similarly, let $f_{1}$ be the number of $\pchar$-symbols
in $F$ to the left of the $\row(h_{1})$-th $\bchar$-symbol.
Thus, 
\[
|f_{1}-s_{1}|=\pad\cdot\dist(0,1).
\]

Now, let 
\[
d_{1}=\begin{cases}
\phantom{-}1 & \mbox{if \ensuremath{f_{1}>s_{1}}},\\
-1 & \mbox{otherwise.}
\end{cases}
\]
Thus, $d_{1}=1$ whenever $|s_{1}-f_{1}|$ $\pchar$-symbols of $F$,
up to the $\row(h_{1})$-th $\bchar$-symbol, are omitted from $C_{\pi}$.
Otherwise, all $\pchar$-symbols of $F$ up to the $\row(h_{1})$-th
$\bchar$-symbol are included in $C_{\pi}$, and $d_{1}=-1$. The
purpose of $d_{1}$ will be clear shortly.

Now consider $h_{i}\in\pi$ for $i\in\{2,\dots,|\pi|\}$. We will
define $s_{i}$, $f_{i}$ and $d_{i}$ similarly to $s_{1}$, $f_{1}$
and $d_{1}$.
That is, let $s_{i}$ be the number of $\pchar$-symbols in $S_{t}$
between the $\bchar$-symbols that correspond to $h_{i-1}$ and $h_{i}$, respectively.
Let $f_{i}$ be the number
of $\pchar$-symbols in $F$ between the $\row(i-1)$-th and $\row(i)$-th
$\bchar$-symbols in $F$. Thus, 
\[
|f_{i}-s_{i}|=\pad\cdot\dist(i-1,i).
\]
Similarly to $d_{1}$, let 
\[
d_{i}=\begin{cases}
\phantom{-}1 & \mbox{if \ensuremath{f_{i}>s_{i}}},\\
-1 & \mbox{otherwise.}
\end{cases}
\]
We have $d_{i}=1$ if $|f_{i}-s_{i}|$ $\pchar$-symbols between the
$\row(i-1)$-th and $\row(i)$-th $\bchar$-symbols of $F$ are omitted
from $C_{\pi}$, and $d_{i}=-1$ if no such $\pchar$-symbols are
omitted.

Finally, in order to capture $\pchar$-symbols to the right of the
last $\bchar$-symbol in $C_\pi$, let $s_{|\pi|+1}$ be the number of
$\pchar$-symbols in $S_{t}$ to the right of the $\bchar$-symbol that corresponds to $h_{\pi}$, and let
$f_{|\pi|+1}$ be the number of $\pchar$-symbols in $F$ to the right
of the $\row(h_{\pi})$-th $\bchar$-symbol in $F$. We have 
\[
|f_{|\pi|+1}-s_{|\pi|+1}|=\pad\cdot\dist(|\pi|,|\pi|+1),
\]
and we let 
\[
d_{|\pi|+1}=\begin{cases}
\phantom{-}1 & \mbox{if \ensuremath{f_{|\pi|+1}>s_{|\pi|+1}}},\\
-1 & \mbox{otherwise.}
\end{cases}
\]

The number $n_\pchar$ of $\pchar$-symbols is the same in both $F$ and $S_{t}$
and is exactly 
\[
n_\pchar=\sum_{i=1}^{|\pi|+1}f_{i}=\sum_{i=1}^{|\pi|+1}s_{i}.
\]
Hence, 
\[
\sum_{i=1}^{|\pi|+1}(f_{i}-s_{i})=\sum_{i=1}^{|\pi|+1}d_{i}\cdot|f_{i}-s_{i}|=0,
\]
where we have used the definition of $d_{i}$ from above. Separating
into positive and negative terms, we have that 
\[
\sum_{i\,|\, d_{i}=1}|f_{i}-s_{i}|=\sum_{i\,|\, d_{i}=-1}|f_{i}-s_{i}|,
\]
which we use in the next equation. The total number of $\pchar$-symbols
that are omitted in $C_{\pi}$ is 
\begin{align*}
\sum_{i\,|\, d_{i}=1}|f_{i}-s_{i}| & =\frac{1}{2}\left(\sum_{i\,|\, d_{i}=1}|f_{i}-s_{i}|+\sum_{i\,|\, d_{i}=-1}|f_{i}-s_{i}|\right)\\
 & =\frac{1}{2}\sum_{i=1}^{|\pi|+1}|f_{i}-s_{i}|\\
 & =\frac{1}{2}\sum_{i=1}^{|\pi|+1}\pad\cdot\dist(i-1,i)\\
 & =\frac{\pad}{2}\sum_{i=0}^{|\pi|}\dist(i,i+1),
\end{align*}
which is what we wanted to show.
\end{proof}


Let $\pi^{*}\in\Pi_t$ be the set of coordinates of all $\bchar$-symbols
that appear in the middle column of $M_t$. See Figure~\ref{fig:matrix}(c) for an example.  The following probabilistic fact tells us that we can simply choose these symbols and still maximise $\len(\pi)$ with constant probability.
The proof follows by first showing that if every $(d \times 1)$-submatrix of $M_t$ contains between $d/2 - \pad/4$ and $d/2+\pad/4$ $\bchar$-symbols then for all $\pi \in \Pi_t, \len(\pi) \leq \len(\pi^*)$. At a high level,  when the number of $\bchar$-symbols for all $(d \times 1)$ submatrices is within this bound one can never compensate from the cost of deviating from the middle column. We then show that $M_t$ has this property  with probability at least $9/10$. 
 

\begin{lem}
\label{lem:random-matrix} Let $M_t$ be a random matrix whose
elements are chosen independently and uniformly at random from $\{\bchar,\schar\}$.
 With probability at least $9/10$,
$\len(\pi)\leq\len(\pi^{*})$ for all $\pi\in\Pi_t$.\end{lem}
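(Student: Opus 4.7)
The plan is to identify a high-probability ``balanced'' event $E$ on the random matrix $M_t$ under which $\len(\pi)\leq \len(\pi^*)$ holds deterministically for every $\pi\in\Pi_t$, and then to bound $\Prob[E^c]\leq 1/10$ by standard concentration. Specifically, let $E$ be the event that for every column $c\in\{1,\ldots,2\pad+1\}$, every starting row $r$, and every length $d$, the $(d\times 1)$-submatrix of $M_t$ consisting of rows $r,\ldots,r+d-1$ of column $c$ contains a number of $\bchar$-symbols in $[d/2-\pad/4,\,d/2+\pad/4]$.

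For the deterministic step, fix $\pi\in\Pi_t$ and decompose it into maximal \emph{runs}: contiguous subsequences of elements of $\pi$ (in row-order) sharing the same column. Writing the runs as $R_1,\ldots,R_k$ with $R_j$ in column $c_j$ spanning rows $[r_j,r'_j]$, and setting $c_0=c_{k+1}=\pad+1$ (the middle column $M$), the length identity becomes $\len(\pi^*)-\len(\pi)=(|\pi^*|-|\pi|)+(\pad/2)P$ with $P=\sum_{j=0}^{k}|c_j-c_{j+1}|$. Under $E$, every non-middle run contributes at most $\pad/2$ to $|\pi|-|\pi^*|$ on its row range: $E$ caps the column-$c_j$ $\bchar$-count in $[r_j,r'_j]$ by $(r'_j-r_j+1)/2+\pad/4$ and lower-bounds the middle-column $\bchar$-count there by $(r'_j-r_j+1)/2-\pad/4$, while middle-column runs and between-run gaps contribute nonpositively. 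Summing gives $|\pi|-|\pi^*|\leq (\pad/2)\cdot\#\{j:c_j\neq M\}$. For the path length, assign to each non-middle run $R_j$ its incoming transition $|c_{j-1}-c_j|$, which is at least $1$ since maximality forces $c_{j-1}\neq c_j$; these assignments are injective into the $k+1$ transitions summed in $P$, so $P\geq \#\{j:c_j\neq M\}$. Substituting into the length identity yields $\len(\pi)\leq \len(\pi^*)$.

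For the probabilistic step, the $\bchar$-count in a single $(d\times 1)$-submatrix is a sum of $d$ independent fair coin flips, so Hoeffding bounds its failure probability by $2\exp(-\pad^2/(8d))$. Since $d$ is at most $R=\Theta(\log n)$ (the number of $\bchar$-symbols in $F$) and $\pad^2=\Theta(\log n\cdot\log\log n)$, this is $O(1/\log^4 n)$ per submatrix. A union bound over the $O(\pad\cdot R^2)=\mathrm{polylog}(n)$ valid triples $(c,r,d)$ gives $\Prob[E^c]=o(1)$, well below $1/10$ for $n$ sufficiently large.

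The main obstacle is the deterministic bookkeeping: the per-run bounds on row-gain and path-length cost must cancel across an arbitrary decomposition of $\pi$, with no slack to spare. The match works because both sides carry a $\pad/2$ factor---the width of the submatrix concentration window on the row-gain side and the multiplier on $P$ in the definition of $\len$ on the path-length side---so that the run-lengths $r'_j-r_j+1$ disappear from the comparison entirely.
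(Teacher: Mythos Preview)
Your proof is correct and follows essentially the same route as the paper: both define the same ``balanced'' event on all $(d\times 1)$-submatrices, decompose $\pi$ into maximal same-column runs, and bound the failure probability via Hoeffding plus a union bound over the $O(\pad\cdot n_\bchar^2)$ submatrices.

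The only difference is in the deterministic implication. The paper lower-bounds $|\pi^*|$ globally (using balancedness of the full middle column) and upper-bounds each $|\pi_i|$ on its row span, then performs a case split on the number $m$ of runs ($m=1$, $m=2$, $m\geq 3$) to conclude $\val{\pi}\leq\val{\pi^*}$. Your argument instead compares $|\pi|$ with $|\pi^*|$ row-range by row-range, using balancedness on \emph{both} the run's column and the middle column restricted to $[r_j,r'_j]$; this yields the clean inequality $|\pi|-|\pi^*|\leq(\pad/2)\cdot\#\{j:c_j\neq M\}$, which you pair with the equally clean $P\geq\#\{j:c_j\neq M\}$ to cancel exactly. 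This avoids the case analysis and is arguably tidier, though it uses the lower tail of the balanced event on many short middle-column intervals rather than once on the full column. Both arguments use the same $\pad/4$ slack, so neither is quantitatively stronger.
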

\begin{proof}
We will drop the subscript $t$ from $M_t$ and $\Pi_t$ in the rest of the proof.
Let $M$ be a random binary matrix whose elements are chosen independent
and uniformly at random from $\{\bchar,\schar\}$. For any $\pi\in\Pi$,
let $h_{1},\dots,h_{|\pi|}$ be the elements of $\pi$ in top-to-bottom
order with respect to the matrix $M$. We may write $\len(\pi)$ as
\[
\len(\pi)~=~n_\pchar+\val{\pi},
\]
where 
\begin{equation}
\val{\pi}~=~|\pi|-\frac{\pad}{2}\cdot\sum_{i=0}^{|\pi|}\dist(i,i+1)\label{eq:val}
\end{equation}
and $\dist(i,i+1)$ was defined in the proof of Lemma~\ref{lem:len-LCS}.
Since $n_{\pchar}$ only depends on $n$,
we will show that with probability $9/10$,
$\val{\pi}\leq\val{\pi^{*}}$ for every $\pi\in\Pi$.

Let $\pi$ be any set in $\Pi$. There is a unique partition of $\pi$
into disjoint subsets, which we denote $\pi_{1},\dots,\pi_{m}$, such
that two elements $h_{j},h_{j'}\in\pi$, where $j<j'$, belong to
the same $\pi_{i}$ if and only if
$\operatorname{col}(j)=\operatorname{col}(j+1)=\cdots=\operatorname{col}(j')$, and further, for any
two distinct elements $h_{j}\in\pi_{i}$ and $h_{j'}\in\pi_{i'}$,
where $i<i'$, we have $j<j'$. As en example, the partition of $\pi^{*}$
contains only the set $\pi^{*}$ itself as all elements are from the
same column.

For $i\in\{1,\dots,|\pi|\}$, let $\row(i)$ be the row of $M$ in which $h_i$ occurs.
For any $\pi_{i}$ in the partition of $\pi$, let 
\begin{align*}
\toprow(\pi_{i}) & ~=~\min_{h_j\in\pi_{i}}~\row(j),\\
\bottomrow(\pi_{i}) & ~=~\max_{h_j\in\pi_{i}}~\row(j),\\
\column(\pi_{i}) & ~=~~\textup{the column of \ensuremath{M} from which the elements of \ensuremath{\pi_{i}} are.}
\end{align*}

We say that $M$ is \emph{balanced} if every $(d\!\times\!1)$-submatrix
of $M$ (one column wide and height~$d$) contains at least $d/2-\pad/4$
and at most $d/2+\pad/4$ $\bchar$-symbols. We will first show that
if $M$ is balanced then $\val{\pi}\leq\val{\pi^{*}}$ for every~$\pi\in\Pi$.
Then we will show that a random $M$ is balanced with probability
$9/10$.

So, suppose that $M$ is balanced. Let $n_\bchar$ be the number of $\bchar$-symbols in $F$, that is the height of $M$. By considering the entire middle column of $M$, we have
\begin{equation}
\frac{n_{\bchar}}{2}-\frac{\pad}{4}~\leq~|\pi^{*}|=\val{\pi^{*}}.\label{eq:valstar}
\end{equation}
Let $\pi$ be any set in $\Pi$
and let $\pi_{1},\dots,\pi_{m}$ be the subsets in the partition
of $\pi$.
For $i\in\{1,\dots,m\}$, suppose $\pi_{i}=\{h_{j_{i}},\dots,h_{j'_{i}}\}$.
We define the cost of $\pi_i$ to be
\[
\cost(\pi_{i})~=~\frac{\pad}{2}\cdot\sum_{k=j_{i}}^{j'_{i}}\dist(k,k+1).
\]
The value $\val{\pi}$ in Equation~(\ref{eq:val}) can
be split into $m+1$ terms of which $m$ terms correspond to the contribution
from the $m$ subsets $\pi_{i}$. That is, 
\[
\val{\pi}~=\;-\frac{\pad}{2}\cdot\dist(0,1)+\sum_{i=1}^{m}\big(|\pi_{i}|-\cost(\pi_{i})\big),
\]
Let the height of the subcolumn spanned by elements from $\pi_i$ be 
\[
d_{i}~=~\bottomrow(\pi_{i})-\toprow(\pi_{i})+1.
\]
Then, since $M$ is balanced, 
\[
|\pi_{i}|~\leq~\frac{d_{i}}{2}+\frac{\pad}{4}.
\]
Thus, 
\begin{align}
\val{\pi} & ~\leq\;-\frac{\pad}{2}\cdot\dist(0,1)+\sum_{i=1}^{m}\left(\frac{d_{i}}{2}+\frac{\pad}{4}-\cost(\pi_{i})\right)\notag\\
 & ~\leq~\frac{n_{\bchar}}{2}-\frac{\pad}{2}\cdot\dist(0,1)+\sum_{i=1}^{m}\left(\frac{\pad}{4}-\cost(\pi_{i})\right)\label{eq:valbound}
\end{align}
since $\sum_{i=1}^{m}d_{i}$ is at most the number of rows $n_{\bchar}$
of $M$.

In order to show that $\val{\pi}\leq\val{\pi^{*}}$, first consider
the case where $m=1$. Then there is only one set in the partition
of $\pi$ and it follows immediately by Equation~(\ref{eq:val})
that $\val{\pi}\leq\val{\pi^{*}}$.

In the case $m=2$, suppose first that $\column(\pi_{1})$ is the
middle column of $M$. Then an upper bound on Inequality~\ref{eq:valbound}
is given by 
\[
\frac{n_{\bchar}}{2}-\frac{\pad}{2}\cdot0+\left(\frac{\pad}{4}-\frac{\pad}{2}\right)~=~\frac{n_{\bchar}}{2}-\frac{\pad}{4}~\leq~\val{\pi^{*}},
\]
where the last inequality uses Inequality~(\ref{eq:valstar}). The
case where $\column(\pi_{1})$ is not the middle column of $M$ is
similar as $\dist(0,1)$ is then at least~1.

Finally, in the case where $m\geq3$, first observe that $\cost(\pi_{i}$)
is at least $\pad/2$ for all $\pi_{i}$ except for possibly $\pi_{m}$
if $\column(\pi_{m})$ is the middle column of $M$. An upper bound
on Inequality~\ref{eq:valbound} is given by 
\[
\frac{n_{\bchar}}{2}-\frac{\pad}{2}\cdot0+(m-1)\cdot\left(\frac{\pad}{4}-\frac{\pad}{2}\right)+\left(\frac{\pad}{4}-0\right)~=~\frac{n_{\bchar}}{2}-(m-2)\cdot\frac{\pad}{4}~\leq~\frac{n_{\bchar}}{2}-\frac{\pad}{4}~\leq~\val{\pi^{*}},
\]

We will now show that if $M$ is random, then with probability at
least $9/10$, $M$ is balanced. Consider a arbitrary $(d\!\times\!1)$-submatrix
of $M_{t}$. The number of $\bchar$-entries in this submatrix is
binomially distributed with parameter $(d,\frac{1}{2})$. Thus, by
Hoeffding's inequality we have that the probability that there are
less than $(d/2-\pad/4)$ $\bchar$-entries in this submatrix is at
most 
\[
e^{{\frac{-2(\pad/4)^{2}}{d}}}~<~e^{{\frac{-4\log n\cdot\log\log n}{\log n}}}~\leq~\frac{1}{\log^{4}n},
\]
where the second inequality follows because the height of $M$ is
at most $\log n$ and the value of $\pad=4\sqrt{\log n\cdot\log\log n}$.
By symmetry, this is also an upper bound on the probability that there
are more than $(d/2+\pad/4)$ $\bchar$-entries in the submatrix.
Thus, $2/\log^{4}n$ is an upper bound on the probability that the
number of $\bchar$-entries is not in the range $d/2\pm\pad/4$.

By the union bound over all $(d\!\times\!1)$-submatrices of $M$
it follows that the probability that $M$ is balanced is at least
$9/10$ for sufficiently large $n$. To see this, recall that the
width of $M$ is $2\pad+1\leq\log n$, hence there are at most $\log^{3}n$
submatrices of width~1.
\end{proof}

%


Finally we can prove that in the hard instance for the LCS problem, at any \PA arrival $t$,
$\LCSt=n-\HAMt$ with probability at least $9/10$ and hence establish Lemma~\ref{lem:LCSruns}.

\begin{proof}[Proof of Lemma~\ref{lem:LCSruns}]
 By combining Lemma~\ref{lem:LCS} with Lemma~\ref{lem:len-LCS} we have that $\max_{\pi \in \Pi_t} \len(\pi) = \LCSt$, thus by Lemma~\ref{lem:random-matrix} we have that  $\len(\pi^*) = \LCSt$ with probability at least $9/10$. The desired result follows from the observation that $\len(\pi^*)=|C_{\pi^*}|=n-\HAMt$.
\end{proof}


\section{A lower bound for the information transfer \label{sec:main}}

We are now able to prove Lemma~\ref{lem:encoding} which gives us lower bounds for the expected size of the information transfer of a node. Our approach extends that of \Patrascu and Demaine from~\cite{PD2006:Low-Bounds}.  Let $v$ be a
node of the information transfer tree $\calT$ and recall that the expected length of any encoding of the outputs $\Yv$ is an upper bound on its entropy.  \Patrascu and Demaine showed that it is possible to bound the conditional entropy of $Y_v$ in terms of the expected information transfer size $\mathbb{E}[I_v]$ by using what we will call an \emph{address-based} encoding scheme.  A shortcoming of this encoding, which we will have to overcome, is that storing the address of a cell could require $\Theta(\log n)$
bits, making the length of the encoding too large to be useful for small $w$, that is $w\in o(\log n)$.  In the following lemma we have slightly generalised the original statement of this bound from~\cite{PD2006:Low-Bounds} to make the role that the word size $w$ plays explicit.



\begin{lem}
[\Patrascu and Demaine~\cite{PD2006:Low-Bounds}]\label{lem:H-upper-old}
There is a positive constant $\alpha$ such that for any node $v$
of the information transfer tree $\calT$, the entropy 
$$
H(\Yv\mid\Xvknown=\Xvfix)~\leq~(w+\alpha\cdot\log n)\cdot\expecteddisplay{\vit\bigm|\Xvknown=\Xvfix}.
$$

\end{lem}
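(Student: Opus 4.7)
\medskip
\noindent\textbf{Proof plan.} The approach is the address-based encoding argument of P\v{a}tra\c{s}cu and Demaine, with the role of the word size $w$ made explicit. Fix $\Xvknown=\Xvfix$. I plan to exhibit a uniquely decodable encoding of $\Yv$, depending on $\Xvfix$, whose expected length under the conditional distribution of $\Xv$ given $\Xvfix$ is at most $(w+\alpha\log n)\cdot\expecteddisplay{\vit \mid \Xvknown=\Xvfix}$. Shannon's source coding theorem then upper-bounds $H(\Yv\mid\Xvknown=\Xvfix)$ by the same quantity, giving the lemma.

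The encoder, given a realisation of the stream $S$, runs the algorithm offline and identifies the set $\vitset$ together with the contents of each $c\in\vitset$ at the moment it is first read during $[t_1+1,t_2]$. It outputs a self-delimiting description of $|\vitset|$ in $O(\log n)$ bits, followed by, for each $c\in\vitset$, the $O(\log n)$-bit address of $c$ and its $w$-bit first-read contents. As is standard in cell-probe lower bounds, the addresses of the (polynomially many) cells ever accessed fit in $O(\log n)$ bits after relabelling. The decoder, given $\Xvfix$ and the encoding, first simulates the algorithm on $\Xvfix$ for arrivals $[0,t_0-1]$ and records the resulting memory snapshot at time $t_0-1$; it then skips arrivals $[t_0,t_1]$ and resumes simulation at $t_1+1$. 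Whenever the simulated algorithm probes a cell $c$ during $[t_1+1,t_2]$, the decoder supplies the locally tracked value if $c$ was written earlier in this simulation, otherwise the encoded value if $c\in\vitset$, and otherwise the value of $c$ recorded at time $t_0-1$. The third case is correct because, by the definition of $\vitset$, any cell whose value at time $t_1$ differs from its value at time $t_0-1$ and which is read in $[t_1+1,t_2]$ before being overwritten must lie in $\vitset$. The simulation therefore reproduces $\Yv$ exactly.

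The expected encoding length is at most $(w+\alpha'\log n)\cdot\expecteddisplay{\vit\mid\Xvknown=\Xvfix}+O(\log n)$. The additive $O(\log n)$ is absorbed into the leading term by enlarging the constant slightly to some $\alpha$ whenever $\vit\geq 1$; when $\vit=0$ the output $\Yv$ is a deterministic function of $\Xvfix$ so the conditional entropy vanishes and the bound is trivial. The main point requiring care is the correctness of the decoder's three-case lookup, which I would verify by induction on the time step within $[t_1+1,t_2]$ using the observation that the three cases are mutually exclusive and exhaustive by the definition of $\vitset$. A secondary technicality is the bound on address sizes, handled by the standard cell-probe assumption that the algorithm's address space has size polynomial in $n$.
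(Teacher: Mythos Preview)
Your proposal is correct and is precisely the address-based encoding argument of P\v{a}tra\c{s}cu and Demaine that the paper invokes; the paper itself does not supply a proof for this lemma but simply cites~\cite{PD2006:Low-Bounds}, so there is nothing further to compare against. Your handling of the two technicalities (polynomial address space and the additive $O(\log n)$ header) is appropriate; for the latter, note that the cleanest way to absorb the header is to observe that $\Pr[\vit\geq 1]\leq\expected{\vit}$, so the $O(\log n)$ header contributes at most $O(\log n)\cdot\expected{\vit}$ in expectation, while the event $\vit=0$ for all realisations forces $H(\Yv\mid\Xvknown=\Xvfix)=0$ as you say.
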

Towards a new encoding that circumvents the limitations of Lemma~\ref{lem:H-upper-old}
we propose, as an intermediate step, another encoding which is used
to prove the following lemma. We refer to this encoding as the \emph{counting-based}
encoding.
\begin{lem}
\label{lem:H-upper-new} For any node $v$ of the information transfer
tree $\calT$, the entropy 
$$
H(\Yv\mid\Xvknown=\Xvfix)~\leq~\expecteddisplay{\log{\vtime \choose \vit}+w\cdot\vit+\log\vit~\Bigm|~\Xvknown=\Xvfix}.
$$
\end{lem}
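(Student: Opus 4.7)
The plan is to exhibit, for each realization of the algorithm's execution, a prefix-free code for the outputs $\Yv$ whose length is at most the quantity inside the expectation on the right-hand side; taking expectations and invoking Shannon's source coding theorem then yields the required bound on $H(\Yv\mid\Xvknown=\Xvfix)$.

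\textbf{The code.} Fix $\Xvfix$. An encoder, who sees the full input $S$, sends three pieces of information to a decoder that knows $F$, $\Xvfix$ and the deterministic algorithm: (i) a self-delimiting encoding of the integer $\vit$ in $\log\vit+O(1)$ bits (e.g.\ an Elias-type code); (ii) a description of which $\vit$ of the $\vtime$ reads performed during $[t_1+1,t_2]$ are the first-in-$[t_1+1,t_2]$ reads of cells belonging to $\vitset$, in $\lceil\log\binom{\vtime}{\vit}\rceil$ bits using an online arithmetic code that is consumed bit-by-bit as the decoder simulates; and (iii) the $w$-bit contents of these $\vit$ cells in the order of their first reads, in $w\cdot\vit$ bits.

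\textbf{Decoding.} Because $\Xvfix$ contains $S[0,2n+t_0-1]$, the decoder replays the deterministic algorithm up through time $t_0-1$ and thereby learns the contents of every memory cell just before time $t_0$. Since it lacks $\Xv$ it cannot continue through $[t_0,t_1]$, so it jumps directly to the simulation of $[t_1+1,t_2]$ using the pre-$t_0$ memory state together with the inputs $S[2n+t_1+1,2n+t_2]$, which also lie in $\Xvfix$. At each read it consults piece~(ii): if the read is marked, the decoder installs the value supplied by piece~(iii); otherwise it uses the value its own simulation currently holds. Correctness rests on the invariant that any cell whose first read in $[t_1+1,t_2]$ is \emph{not} marked could not have been written during $[t_0,t_1]$ (else it would belong to $\vitset$ and hence be marked), so its value from the pre-$t_0$ replay is the correct one; non-first reads are of cells either already written by the ongoing simulation or already fetched via piece~(iii). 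The simulation therefore stays in lock-step with the true execution, $\Yv$ is recovered in full, and the decoder halts naturally after processing the $\ell_v/2$ symbols of $[t_1+1,t_2]$.

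\textbf{Finishing up and main obstacle.} The codeword length is $\log\binom{\vtime}{\vit}+w\cdot\vit+\log\vit+O(1)$ bits; the additive constant is absorbed into a slightly tightened encoding in piece~(i). Applying Shannon's source coding theorem to the random codeword produced from an execution drawn under the conditional distribution given $\Xvknown=\Xvfix$ yields the stated bound. The main subtlety I anticipate is in piece~(ii): the decoder does not know $\vtime$ in advance, so one cannot simply hand it $\lceil\log\binom{\vtime}{\vit}\rceil$ bits to parse. This is resolved by using an arithmetic code over size-$\vit$ subsets of $[\vtime]$ that is consumed online during the simulation, where at each read the decoder needs only the number of reads seen so far and the number of marked positions still to place; the codeword is automatically delimited once all $\vit$ marks have been assigned.
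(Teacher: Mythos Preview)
Your proof is essentially the paper's own argument: encode $\vit$, encode which of the $\vtime$ reads in $[t_1+1,t_2]$ are first reads of cells in $\vitset$ as a size-$\vit$ subset of $[\vtime]$, encode the $w\cdot\vit$ bits of cell contents, and have the decoder simulate from the pre-$t_0$ memory state. The correctness invariant you state for the simulation is exactly the one the paper uses.

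One small wrinkle in the part you flag as the ``main obstacle'': the online arithmetic code you describe for piece~(ii) cannot work using \emph{only} ``the number of reads seen so far and the number of marked positions still to place.'' To hit $\lceil\log\binom{\vtime}{\vit}\rceil$ bits the model must be uniform over size-$\vit$ subsets of $[\vtime]$, and the per-step conditional probability $r/(\vtime-t+1)$ requires $\vtime$, which the decoder does not yet know. The paper does not address this either---it simply writes ``under a predefined enumeration of these scenarios, we can specify in exactly $\log\binom{\vtime}{\vit}$ bits when each cell in $\vitset$ is read''---so you are being more careful than the source, not less. Any of the standard fixes (prepending a self-delimiting encoding of $\vtime$, or encoding the marked positions via the combinatorial number system so that decoding needs only $\vit$) costs at most an additive $O(\log\vtime)$, which is immediately absorbed in the proof of Lemma~\ref{lem:encoding} where this bound is actually used.
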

\begin{proof}
We keep a counter of the cell reads performed by any correct algorithm for our online problems. For
each cell $c\in\vitset$, we store the contents of~$c$ and the value
of the cell-read counter for when $c$ is read for the first time
during the time interval in which $\Yv$ is outputted. To see why
this encodes $\Yv$, under a fixed $\Xvknown$, we use the algorithm
as a decoder. That is, first run the algorithm on known inputs until
the first symbol in $\Xv$ arrives. Then skip over all inputs in $\Xv$
and start simulating the algorithm from the beginning of the interval
where $\Yv$ is outputted. For every cell~$c$ being read, use the
cell-read counter to determine whether~$c$ is in $\vitset$ or not.
If it is, read its contents from the encoding of $\vitset$, otherwise
we already have the correct value from running the algorithm on $\Xvknown$.

The cell-read counter and contents of the cells in $\vitset$ can
be encoded succinctly as follows. During the time interval in which
the algorithm outputs $\Yv$, there are ${\vtime \choose \vit}$ possible
scenarios of when the cells in $\vitset$ are read for the first time.
Thus, under a predefined enumeration of these scenarios, we can specify
in exactly $\log{\vtime \choose \vit}$ bits when each cell in $\vitset$
is read. To do so we would also need to encode the size of $\vitset$,
which can be done in $\log\vit$ bits. The contents of the cells in
$\vitset$ is encoded in a total of $w\cdot\vit$ bits and stored
sorted by the time of the cell read.  Observe that it suffices to
store only the first read of any cell in $\vitset$ as the decoder
remembers every cell it has already accessed.
\end{proof}

To obtain our new lower bound for $\mathbb{E}[I_v]$ we will combine these two encodings schemes.
More precisely, let $\vcap = \ell_v \cdot \delta \cdot \log^2 n$ define a threshold value. When $\vtime\geq \vcap$, we use the address-based encoding, and when $\vtime<\vcap$  we use the counting-based encoding.
The threshold has been chosen carefully so that we utilise the strengths of each of the two encodings.
Only one bit of additional information is required to specify which of the two encodings is used.

We now use the fact that Lemma~\ref{lem:encoding} only relates to fast nodes $v$ and use a combination of the two encodings and the fact that $v$ is either a high-entropy or a medium-entropy node. This gives us both upper and lower bounds on the conditional entropy of $\Yv$ which after some manipulation will provide us with our desired lower bounds for $\mathbb{E}[I_v]$.

\begin{proof}
[Proof of Lemma~\ref{lem:encoding}]We begin by proving the Hamming
distance and convolution part of the lemma when $w\geq\log n$. In
this case, the bound in the statement of the lemma simplifies to

\[
\expected{\vit}\in \Omega{\left(\frac{\delta\cdot\ell_{v}}{w}\right)}.
\]
Regardless of whether the node $v$ is fast or not, this lower bound
on $\expected{\vit}$ has already been established in previous work~\cite{CJ:2011,CJS:2013}
by using the address-based encoding. In the rest of the proof we will
therefore focus on the case where $w<\log n$. Here the concept
of a fast node will be important.

The encoding we use when $w<\log n$ is a combination of the address-based
encoding of Lemma~\ref{lem:H-upper-old} and the new counting-based
encoding of Lemma~\ref{lem:H-upper-new}. We refer to this combined
encoding as the \emph{mixed encoding}. Let 
\[
\vcap=\ell_{v}\cdot\delta\cdot\log^{2}n
\]
be a threshold value such that when $\vtime\geq\vcap$ we use the
address-based encoding, and when $\vtime<\vcap$ we use the counting-based
encoding. The threshold has been chosen carefully so that we utilise
the strengths of each one of the two encodings. One single bit of
information is sufficient to specify which encoding is being used.

Suppose $v$ is a high-entropy node when proving the Hamming distance
and convolution part of the lemma, and suppose $v$ is a medium-entropy
node when proving the edit distance and LCS part. The proof is almost
identical for both parts so we will combine them as follows. Let
\[
f(n)=\begin{cases}
\frac{1}{2}\cdot k & \mbox{when proving the Hamming distance and convolution part,}\\
\frac{1}{2}\cdot\frac{k}{\sqrt{\log n\cdot\log\log n}} & \mbox{when proving the edit distance and LCS part,}
\end{cases}
\]
where $k$ is the constant from either Definition~\ref{def:high-node}
of a high-entropy node or Definition~\ref{def:medium-node} of a
medium-entropy node, respectively. Thus,
\begin{equation}
H(\Yv\mid\Xvknown=\Xvfix)\,\geq\,2\cdot f(n)\cdot\delta\cdot\ell_{v},\label{eq:H-fn}
\end{equation}
where $\Xvfix$ is a fixed value. Observe the factor $1/2$ in $f(n)$.
It is included for convenience as will be clear shortly. Recall that
by Definition~\ref{def:medium-node} of a medium-entropy node~$v$,
Inequality~(\ref{eq:H-fn}) is only guaranteed to hold for half of
the fixed values $\Xvfix$.

Let the random variable $\venc$ denote the size of the mixed encoding
of $\Yv$. Since the expected size of any encoding is an upper bound
on the entropy,
\[
\expected{\venc\mid\Xvknown=\Xvfix}\,\geq\, H(\Yv\mid\Xvknown=\Xvfix).
\]
Since $\Xvknown$ is chosen uniformly from all possible instances in the hard distribution for all
our problems, taking expectation over $\Xvknown$, gives 
\begin{equation}
\expected{\venc}\geq f(n)\cdot\delta\cdot\ell_{v}.\label{eq:encoding-lower}
\end{equation}
Here the factor $1/2$ in the definition of $f(n)$ comes in handy
as Inequality~(\ref{eq:H-fn}) is only guaranteed to hold for half
of the values of~$\Xvknown$ in the definition of a medium-entropy
node~$v$.

The goal is to derive a lower bound on $\expected{\vit}$. We condition
on the running time $\vtime$: 
\begin{equation}
\expected{\venc}~=~\Pr[\vtime\geq\vcap]\cdot\expected{\venc\mid\vtime\geq\vcap}~+~\Pr[\vtime<\vcap]\cdot\expected{\venc\mid\vtime<\vcap}.\label{eq:slow-fast}
\end{equation}
We will upper bound $\expected{\venc}$ by upper bounding each term
on the right hand side separately, starting with the first term. By
the definition of a fast node, 
\[
\expected{\vtime}\,\leq\,\ell_{v}\cdot\delta\cdot\log n.
\]
Using Markov's inequality, it follows that 
\[
\Pr[\vtime\geq\vcap]\,\leq\,\frac{1}{\log n}.
\]
Whenever $\vtime\geq\vcap$, the address-based encoding is used, for
which 
\[
\venc\,\leq\,(\alpha+1)\cdot\log n\cdot\vit,
\]
where $\alpha$ is the constant of Lemma~\ref{lem:H-upper-old} and
$w\leq\log n$. From the last two inequalities we conclude that the
first term of Equation~(\ref{eq:slow-fast}) is upper bounded by
$(\alpha+1)\cdot\expected{\vit}$.

We now upper bound the second term of Equation~(\ref{eq:slow-fast}).
Trivially, $\Pr[\vtime<\vcap]\leq1$. When $\vtime<\vcap$, the counting-based
encoding of Lemma~\ref{lem:H-upper-new} is used, hence 
\[
\venc\,=\,\log{\vtime \choose \vit}+w\cdot\vit+\log\vit\,\leq\,2\left(\log{\vtime \choose \vit}+w\cdot\vit\right).
\]
Using the fact ${a \choose b}\leq(a\cdot e/b)^{b}$ and the conditioning
$\vtime<\vcap$, we have 
\begin{align*}
\venc & \,\leq\,2(\vit\log\vcap+\vit\log e-\vit\log\vit+w\cdot\vit)\\
 & \,=\,2\vit\log\vcap+\vit\cdot k_{1}w-2\vit\log\vit,
\end{align*}
where $k_{1}=2+2\log e$ is a constant. Thus, by linearity of expectation,
\[
\expected{\venc\mid\vtime<\vcap}\,\leq\,\expected{\vit}\cdot2\log\vcap+\expected{\vit}\cdot k_{1}w+\expected{-2\vit\cdot\log\vit},
\]
where the last term is concave and therefore, by Jensen's inequality,
upper bounded by 
\[
\expected{\vit}\cdot(-2\log\expected{\vit}),
\]
which gives us 
\[
\mathbb{E}[\venc\mid\vtime<\vcap]\,\leq\,\expected{\vit}\cdot\big(2\log\vcap+k_{1}w-2\log\expected{\vit}\big).
\]

Using the upper bounds on the terms of Equation~(\ref{eq:slow-fast})
that we just derived, combined with the lower bound on $\expected{\venc}$
of Inequality~(\ref{eq:encoding-lower}), as well as substituting
$\vcap$ with its value $\ell_{v}\cdot\delta\cdot\log^{2}n$, we have
\begin{align*}
(\alpha+1)\cdot\expected{\vit}+\expected{\vit}\cdot\big(2\log\ell_{v}+2\log\delta+4\log\log n+k_{1}w-2\log\expected{\vit}\big)\\
\geq~f(n)\cdot\delta\cdot\ell_{v}
\end{align*}
or equivalently, 
\begin{align}
\expected{\vit} & \,\geq\,\frac{f(n)\cdot\delta\cdot\ell_{v}}{(\alpha+1)+2\log\ell_{v}+2\log\delta+4\log\log n+k_{1}w-2\log\expected{\vit}}\notag\\
 & \,\geq\,\frac{\frac{1}{2}f(n)\cdot\delta\cdot\ell_{v}}{\log\ell_{v}+\log\delta+2\log\log n+k_{2}w-\log\expected{\vit}},\label{eq:before-trick}
\end{align}
where $k_{2}=(a+1)+k_{1}$ is a constant. In order to lower bound
$\expected{\vit}$ it might be tempting to set the $-\log\expected{\vit}$
term to zero in the denominator above. Unfortunately, this bound will
be too weak for our purposes. Instead we perform the following arithmetic
manoeuvre.

First observe that if $\expected{\vit}$ is so large that the denominator
of Inequality~(\ref{eq:before-trick}) is negative then the statement
of Lemma~\ref{lem:encoding} follows by a suitable choice of constants.
We therefore continue under the assumption
that the denominator is positive.

In the hard distribution for Hamming distance and convolution, observe
that by Lemma~\ref{lem:high-entropy}, the information transfer $\vitset$ must always contain at least
one cell. Similarly by Lemma~\ref{lem:medium-entropy}, in the hard distribution for edit distance and LCS,
for at least half of the fixed values of $\Xvknown$ the information
transfer contains at least one cell. Thus, 
\[
\expected{\vit}\geq\frac{1}{2}.
\]
By replacing the $\log\expected{\vit}$ term in the denominator of
Inequality~(\ref{eq:before-trick}) with $-1$ we get a lower bound
on $\expected{\vit}$. We use $L$ as a shorthand for this lower bound,
hence 
\[
\expected{\vit}\,\geq\, L\,=\,\frac{\frac{1}{2}f(n)\cdot\delta\cdot\ell_{v}}{\log\ell_{v}+\log\delta+2\log\log n+k_{3}w}\,,
\]
and $k_{3}=k_{2}+1$, implying $k_{3}w\geq k_{2}w+1\geq k_{2}w-\log\expected{\vit}$.

We can now boost the lower bound $L$ of $\expected{\vit}$ by reconsidering
Equation~(\ref{eq:before-trick}) and replacing the $-\log\expected{\vit}$
term in the denominator with $-\log L$. That is, 
\begin{equation}
\expected{\vit}\,\geq\,\frac{\frac{1}{2}f(n)\cdot\delta\cdot\ell_{v}}{\log\ell_{v}+\log\delta+2\log\log n+k_{2}w-\log L}\,.\label{eq:long}
\end{equation}
By substituting $L$ with its value above, the denominator in Inequality~(\ref{eq:long})
can be written as 
\begin{align*}
\textup{} & \log\ell_{v}+\log\delta+2\log\log n+k_{2}w-\log(\frac{1}{2}f(n)\cdot\delta\cdot\ell_{v})\\
 & \phantom{=\,}+\log(\log\ell_{v}+\log\delta+2\log\log n+k_{3}w)\\
 & \leq\log\ell_{v}+\log\delta+2\log\log n+k_{2}w-\log\frac{1}{2}-\log f(n)-\log\delta-\log\ell_{v}\\
 & \phantom{=\,}+\log(\log n+\log n+2\log n+k_{3}\log n)\tag{\textup{recall \ensuremath{w\leq\log n}}}\\
 & \leq3\log\log n+k_{4}w-\log f(n)\\
 & \leq4\log\log n+k_{5}w\,,
\end{align*}
where $k_{4}$ and $k_{5}$ are constants, and where the last inequality
holds for either value of $f(n)$ above. We have also assumed that
$\delta\leq n$, that is we ignore the unnatural case where inputs
are exponential in $n$. Thus, by substituting the upper bound on
the denominator into Equation~(\ref{eq:long}), we have 
\[
\expected{\vit}\,\geq\,\frac{f(n)\cdot k_{6}\cdot\delta\cdot\ell_{v}}{w+\log\log n}\,,
\]
where $k_{6}$ is yet another constant. This inequality concludes the argument we need to prove Lemma~\ref{lem:encoding}. For Hamming distance and convolution the result now follows immediately and for the edit
distance and LCS we simply set $\delta=2$ and $w=1$.
\end{proof}

\section{A cell-probe algorithm for the edit distance problem}\label{sec:edit-upper}

In this section we prove Theorem~\ref{thm:edit-upper},
that is we show that there is a cell-probe algorithm that solves the online edit distance problem and runs in $O((\log^2 n)/w)$ amortised time per arriving symbol.
We want the algorithm to output
\begin{equation}
  d(i)=\min_{h\leq i}\,\ED{F,S[h,i]},\label{eq:minedit}
\end{equation}
where $F$ is the fixed string of length $n$, $S$ is the stream, and
$\ED{F,S[h,i]}$ is the smallest number of edit operations (i.e.,~replace,
delete and insert) required to transform $F$ into the substring $S[h,i]$. Recall
that at arrival $i$, only the symbols in $S[0,i]$ are known to the
algorithm. That is, the algorithm cannot know what symbols will arrive
in the future.

\subsection{Three assumptions}

We make three assumptions about the input in order to make the presentation
of our algorithm cleaner. The first assumption is about the alphabet
size. The symbols in both $F$ and $S$ are from the alphabet $[2^{\delta}-1]$
and we will assume that $2^{\delta}\leq n+1$, hence $\delta\in O(\log n)$.
As there can be at most $n$ distinct symbols in $F$, if the alphabet
contains more than $n+1$ symbols we can map every symbol not in $F$
to one specific symbol that does not occur in $F$. This mapping can easily
be done in $O(\delta)$ time, or $O(\log n$) time, if the alphabet
size is polynomial in $n$.

Secondly we assume that $n$ is a power-of-two. It is easy to extend
our algorithm to allow any $n$. Namely, pad $F$ at the left-hand
end with a new symbol $\sigma$, where $\sigma$ does not occur in
$S$, such that the length increases from $n$ to $n'$, where $n'$
is the smallest power-of-two greater than $n$. Call this new string
$F'$. Adding the symbol $\sigma$ increases the alphabet size by
one, hence increases $\delta$ by at most one. Now observe that 
\[
\text{Edit}(F,S[h,i])=\text{Edit}(F',S[h,i])-(n'-n).
\]
Thus, we solve the online edit distance problem for $F'$ and simply
subtract $n'-n$ from every output.

Lastly we will assume that the fixed array $F$ is known to the algorithm,
that is, for an arbitrarily given $F$ of length $n$ we show that
there is an algorithm that solves the edit distance problem such that
the algorithm performs on average $O((\log^{2}n)/w)$ cell probes per
symbol arrival. By ``hard-coding'' $F$ into the algorithm we avoid
charging for cell probes when accessing elements of $F$. In Section~\ref{sec:preprocessing}
we explain how this constraint can be lifted by adding a preprocessing
stage of $F$ before the symbols start arriving in the stream. Thus,
ultimately we do indeed give an algorithm that takes both $F$ and
the stream $S$ as input. The preprocessing uses a super-polynomial
number of cell probes.

\subsection{Dynamic programming and the underlying DAG}

The offline edit distance problem is traditionally solved with dynamic programming by
filling in a two-dimensional programming table. The dynamic programming
recurrence specifies a directed acyclic graph (DAG) such that the
optimal sequence of edit operations can be obtained by tracing the
edges backwards. We will simply refer to this graph as the \emph{DAG},
where the nodes form a two-dimensional lattice. The nodes are labelled
$(j,i)$, where $j\in\{-1\}\cup[n]$ is the row and $i\in\{-1\}\cup\mathbb{N}$
is the column. The rows are associated with the fixed string $F$
such that row $j$ is associated with $F[j]$. Row $-1$ is not associated
with any symbol of $F$ but is there two allow the empty string. Similarly,
the columns are associated with the stream $S$ such that column $i$
is associated with $S[i]$. Column $-1$ is not associated with any
symbol of $S$. Observe that the width of the DAG is unlimited and
grows as new symbols arrive.

The edges of the DAG have weights from the set $\{0,1\}$ to reflect
the dynamic programming recurrence. Node $(j,i)$ has the following
three edges leaving it:
\begin{lyxlist}{00.00.0000}
\item [{$\qquad\rightarrow$}] Edge $((j,i),(j,i+1))$. Weight is always~1
except when $j=-1$ in which case the weight is~0.
\item [{$\qquad\downarrow$}] Edge $((j,i),(j+1,i))$. Weight is always~1.
\item [{$\qquad\searrow$}] Edge $((j,i),(j+1,i+1))$. Weight is~0 if
$F[j+1]=S[i+1]$ and~0 otherwise.
\end{lyxlist}
We define $\calD(j,i)$ to be the weight of the smallest-weight path
from $(-1,-1)$ to $(j,i)$ in the DAG. It follows that
\[
d(i)=\calD(n-1,i),
\]
where $d(i)$, defined in Equation~(\ref{eq:minedit}), is the output
after the $i$-th arrival in the stream.

We will use the variable name $j$ exclusively to point to elements
of~$F$ and wherever we write \emph{for~all~$j$} we mean for all
$j\in\{-1\}\cup[n]$. The variable name $i$ will be used to point
to symbols of the stream~$S$.

\subsection{An algorithm for online edit distance}

Towards proving Theorem~\ref{thm:edit-upper} we first present an
unoptimised algorithm that solves the online edit distance problem. In Section~\ref{sec:upper-faster} we describe
an important speedup that enables us to give the final time complexity of $O((\log^2{n})/w)$ time per arriving symbol.

Our first algorithm for computing edit distance online is given in
Algorithm~\ref{alg:upper-first}, and will be explained below.  The algorithm fills in values from a dynamic programming table, denoted $D$, by reading in values from previous columns. In this respect the overall structure is similar to the classic offline dynamic programming solution for edit distance. In the standard dynamic programming solution new values are computed using values from the immediately preceding column, which is equivalent to setting $\pred(i)=i-1$ in Algorithm~\ref{alg:upper-first}.   Unfortunately, doing this would lead to a complexity of
$\Omega(n/w)$ per arriving symbol. To achieve a much faster solution we will need to choose a more suitable column, modify the definition of the classic dynamic programming table and then show how its values can be succinctly encoded to allow them to be read and written efficiently.

\begin{algorithm}[t]
\protect\caption{~~Online edit-distance in the cell-probe model (unoptimised version) \label{alg:upper-first} }

\vspace{8pt}


When a new symbol $S[i]$ arrives,%
{} 
\begin{itemize}
\item if $i=0$, compute $D(j,0)$ naively for all $j$ and output $D(n-1,0)$. 
\item if $i>0$, do the following.%

\begin{enumerate}
\item Read in the substring $S[\pred(i),i-1]$.\item[~]\emph{We now know
the part of the DAG spanned by the columns $\pred(i)$ through $i$
as the whole of $F$ is always known to the algorithm.}
\item Read in $D(j,\pred(i))$ for all $j$. \emph{These values are stored
as blocks.} 
\item Compute $D(j,i)$ for all $j$ and write these values to memory as blocks.
\item Output $D(n-1,i)$. \end{enumerate}
\end{itemize}
\end{algorithm}

For a positive integer $i$, let $\mask(i)$ be
the number obtained by taking the binary representation of $i$ and
setting the least significant~1 to~0. For example, if $i=212,$ that is
$11010100$ in binary, then $\mask(i)=208$ as this is $11010000$
in binary. Let
\[
\pred(i)=
\begin{cases}
  \mask(i) &\textup{if $\mask(i)>i-n$,}\\
  i-n &\textup{otherwise.}
\end{cases}
\]
When computing entries
of the $i$-th column of the table, values from column $\pred(i)$
will be used. The function $\pred(i)$ specifies a column
in the past and its definition ensures that we never look more
than $n$ columns back. We also define $\pred_{1}(i)=\pred(i)$ and for $k\geq2$, $\pred_{k}(i)=\pred(\pred_{k-1}(i))$. These iterated definitions will be useful for the analysis of our algorithm.

Let $\left\Vert (j,i)\rightsquigarrow(j',i')\right\Vert $ denote
the smallest weight of all paths in the DAG that go from node $(j,i)$
to~$(j',i')$, and if no such path exists, we define $\left\Vert (j,i)\rightsquigarrow(j',i')\right\Vert =\infty$.

We define the value $D(j,i)$ recursively. We first give the base case,
then introduce some supporting notation, and finally give the recursive
part of the definition.
\begin{defn}
[Base case]\label{def:D-base-case}$D(j,0)=\calD(j,0)$ for all $j$.
\end{defn}
In order to define $D(j,i)$ for $i>0$, as an intermediate step we
first define 
\begin{equation}
\Dinter(j,i)=\min_{j'\in\{-1,\dots,n-1\}}\Big(D\big(j',\pred(i)\big)+\left\Vert (j',\pred(i))\rightsquigarrow(j,i)\right\Vert \Big)\label{eq:Dinter}
\end{equation}
and observe that that $\Dinter(j,i)$ is not always finite.

We say that a sequence of positive integers in strictly decreasing order, where the difference between two consecutive numbers is 1, is a \emph{block}. Any sequence of positive integers can be \emph{decomposed} into consecutive blocks, for example, decomposing $9,8,7,7,6,5,4,3,4,3,2$ requires 3 blocks.   The first block is $9,8,7$, the second is $7,6,5,4,3$ and the third is $4,3,2$.

For $j=n-1$ down to~$-1$, the sequence of finite values $\Dinter(j,i)$ can be encoded greedily as a sequence of blocks. For example,
suppose that
\begin{align*}
\Dinter(n-1,i) & ,\dots,\Dinter(-1,i)=\\
 & \underbrace{18,17,16,15}_{1},\underbrace{15,14,13,12,11,10,9}_{2},\underbrace{10,9,8,7}_{3},\underbrace{7,6,5,4}_{4},\infty,\dots,\infty.
\end{align*}
Here we have enumerated the blocks 1, 2, 3 and~4.
We let $\progression(j,i)$ denote the block number
that contains the value~$\Dinter(j,i)$. Thus, in the example above, $\progression(n-1,i)=1$, $\progression(n-11,i)=2$ and
$\progression(n-12,i)=3$. To avoid splitting into two cases, depending
on whether $\Dinter(j,i)$ is finite or infinite, we will refer to a sequence of $\infty$-values as a block
as well. Thus, in the example above, $\progression(j,i)=5$ for all
$j\leq n-20$.
\begin{defn}
[Recursive part]\label{def:D-recursive-part}For $i>0$,
\[
D(j,i)=\begin{cases}
\Dinter(j,i) & \mbox{if \ensuremath{\progression(j,i)\leq3(i-\pred(i))},}\\
\infty & \mbox{otherwise.}
\end{cases}
\]

\end{defn}
Thus, $D(j,i)$ is finite only if $\Dinter(j,i)$ is contained within
the first $3(i-\pred(i))$ blocks. For any~$i$, this allows us to store $D(j,i)$ for all $j$ in $O((i-\pred(i)) \cdot \log n)$ bits. 

Before we analyse Algorithm~\ref{alg:upper-first} we will take a look at Figure~\ref{fig:alg-overview} which illustrates some of the values $D(j,i)$ computed by the algorithm.
\begin{figure}[t]
\centering{}\includegraphics{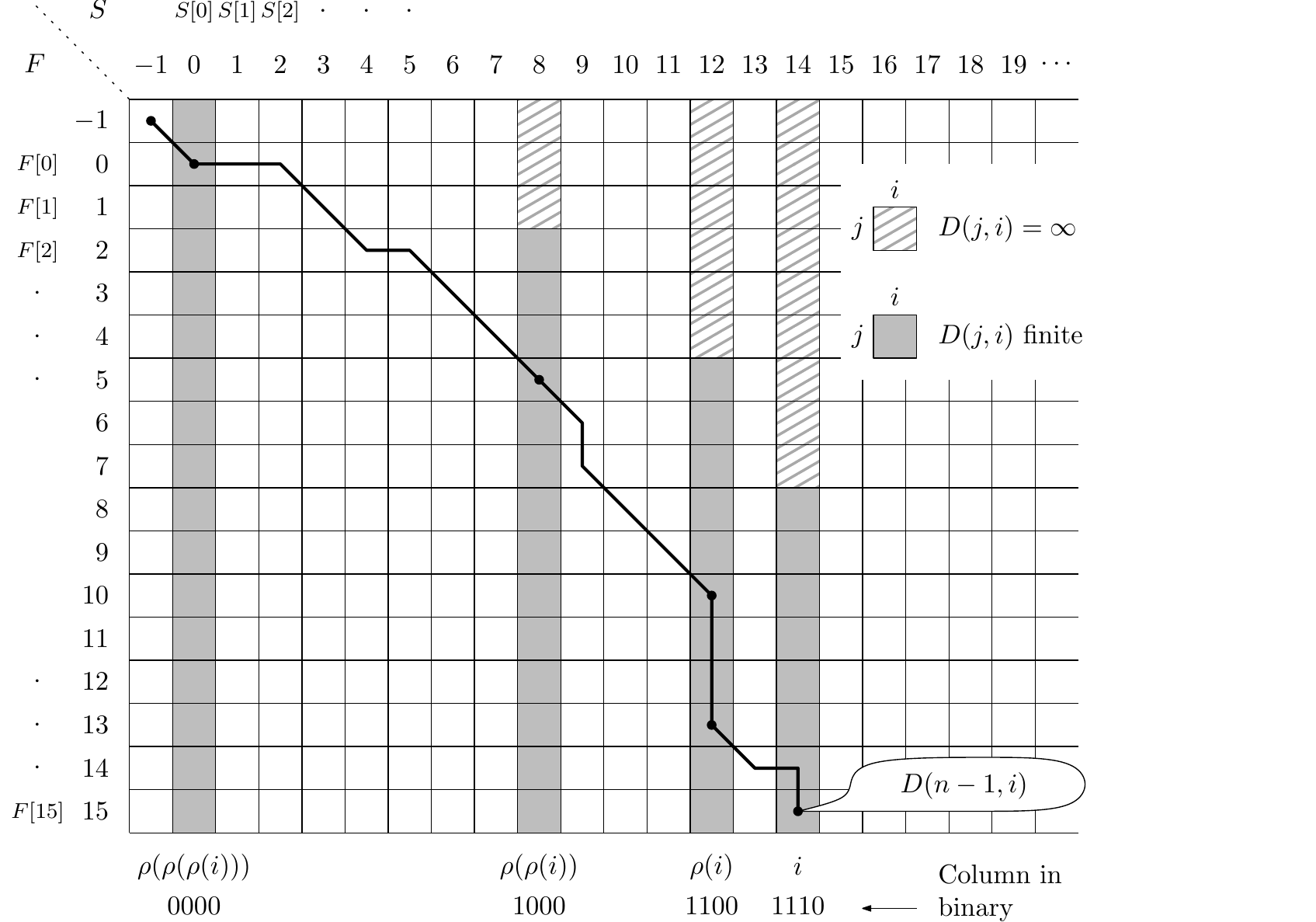} \protect\caption{An example of values $D(j,i)$ computed by Algorithm~\ref{alg:upper-first}. The path from $(-1,-1)$ to $(15,14)$ is a smallest-weight path. \label{fig:alg-overview}}
\end{figure}
Suppose that $i=14$ and symbol $S[i]$ has just arrived. Here $\pred(i)=12$, hence column~12 will be used in order to compute the values $D(j,14)$. Similarly, when computing the values $D(j,12)$, column $\pred(\pred(i))=8$ is used, and so on. By the definition of $D(j,i)$ it follows that all infinite values of a column must occur in sequence at the very top of the column.
The figure illustrates a path from node $(-1,-1)$ to $(n-1,i)$ that yields the value $D(n-1,i)$. As part of the correctness analysis we will demonstrate that this path always coincides with a smallest-weight path from $(-1,-1)$ to $(n-1,i)$ in the standard dynamic programming table that solves the edit distance problem. That is, despite setting some values of the dynamic programming table to infinity, there are always sufficiently many finite values left in order to correctly compute the output.

\subsection{Analysis of Algorithm~\ref{alg:upper-first} for online edit distance \label{sec:upper-first}}

To prove correctness of Algorithm~\ref{alg:upper-first} we need
to show that $D(n-1,i)=\calD(n-1,i)$ for all~$i$. We begin by proving
a useful, stronger fact that holds for certain values of~$i$.
\begin{lem}
\label{lem:complete-column}For any $i$ such that $D(j,i)$ is finite
for all $j$, $D(j,i)=\calD(j,i)$ for all $j$.\end{lem}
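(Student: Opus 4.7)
The plan is to prove the lemma by strong induction on $i$, simultaneously establishing two statements: (i) $D(j,i)\geq\calD(j,i)$ for every $j$ (with the convention that $\infty\geq\calD(j,i)$), and (ii) the lemma itself, namely that if column $i$ is complete then $D(j,i)=\calD(j,i)$ for every $j$. The base case $i=0$ is immediate from Definition~\ref{def:D-base-case}, since there $D(j,0)=\calD(j,0)$ unconditionally.

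For the inductive step fix $i>0$ and assume (i) and (ii) for every $i'<i$. Claim~(i) at $i$ is the easy direction: when $D(j,i)<\infty$ we have $D(j,i)=\Dinter(j,i)$, and every term in~\eqref{eq:Dinter} is of the form $D(j',\pred(i))+\|(j',\pred(i))\rightsquigarrow(j,i)\|$; the induction hypothesis gives $D(j',\pred(i))\geq\calD(j',\pred(i))$, so
\[
D(j,i)\;\geq\;\min_{j'}\Big(\calD(j',\pred(i))+\|(j',\pred(i))\rightsquigarrow(j,i)\|\Big)\;=\;\calD(j,i),
\]
where the final equality uses that every path in the DAG from $(-1,-1)$ to $(j,i)$ must cross column $\pred(i)$ at exactly one row~$j'$.

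Claim~(ii) is the content of the lemma. Assume column $i$ is complete. The $\geq$ direction is just~(i), so the task reduces to showing $D(j,i)\leq\calD(j,i)$ for every $j$. I would fix $j$ and pick an optimal predecessor $j^*$ with $\calD(j,i)=\calD(j^*,\pred(i))+\|(j^*,\pred(i))\rightsquigarrow(j,i)\|$. If one can arrange for $D(j^*,\pred(i))<\infty$, then the induction hypothesis applied at $\pred(i)$ gives $D(j^*,\pred(i))=\calD(j^*,\pred(i))$, and hence $\Dinter(j,i)\leq\calD(j,i)$, finishing the proof.

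The main obstacle is therefore to produce, for every $j$ in the complete column $i$, an optimal predecessor whose $D$-value in column $\pred(i)$ is finite. The plan is to exploit the block decomposition: an infinite entry $D(j^*,\pred(i))$ forces $\Dinter(j^*,\pred(i))$ to lie in a block numbered beyond $3(\pred(i)-\pred_2(i))$, and therefore to be strictly larger than the $\Dinter$-values appearing in earlier blocks at nearby rows of column $\pred(i)$. Using the Lipschitz property that $\calD$-values change by at most $1$ between adjacent rows within a column and by at most $i-\pred(i)$ between columns $\pred(i)$ and $i$, together with the hypothesis that the whole column $i$ is still expressible in at most $3(i-\pred(i))$ blocks, I would argue that any such late $j^*$ can be replaced by a nearby $j^{**}$ sitting in an early block of column $\pred(i)$ without increasing $\calD(j^{**},\pred(i))+\|(j^{**},\pred(i))\rightsquigarrow(j,i)\|$; the vertical detour cost inside column $\pred(i)$ is absorbed by the gap in $\Dinter$-values guaranteed by the block gap. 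Making this swap argument quantitative — and in particular verifying that the constant~$3$ in the threshold is exactly the slack required to cover both the horizontal and vertical detour costs — is the crux of the proof.
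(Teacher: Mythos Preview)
Your inductive step has a genuine gap. You write: ``If one can arrange for $D(j^*,\pred(i))<\infty$, then the induction hypothesis applied at $\pred(i)$ gives $D(j^*,\pred(i))=\calD(j^*,\pred(i))$.'' But your hypothesis~(ii) is conditional on the \emph{entire} column $\pred(i)$ being complete; knowing that a single entry $D(j^*,\pred(i))$ is finite does not let you invoke it. And you cannot simply strengthen~(ii) to ``$D(j,i)=\calD(j,i)$ whenever $D(j,i)$ is finite'': that statement is false in general (indeed, if it were true, Lemma~\ref{lem:cor} would be trivial). So even if your block-swapping argument produced a finite $D(j^{**},\pred(i))$, you would still have no way to conclude $D(j^{**},\pred(i))=\calD(j^{**},\pred(i))$, and the proof would stall.

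The paper sidesteps all of this with a one-line observation you are missing: completeness of column $i$ forces completeness of column $\pred(i)$. Since $D(-1,i)$ is finite, $\Dinter(-1,i)$ is finite, so by~\eqref{eq:Dinter} there is some $j'$ with $D(j',\pred(i))<\infty$ and a path $(j',\pred(i))\rightsquigarrow(-1,i)$. But edges in the DAG never go upward, so $j'=-1$, i.e.\ $D(-1,\pred(i))<\infty$. By the block construction the $\infty$ entries in any column form a prefix (smallest $j$'s), so $D(-1,\pred(i))<\infty$ implies the whole column $\pred(i)$ is complete. Now your hypothesis~(ii) applies at $\pred(i)$, giving $D(j',\pred(i))=\calD(j',\pred(i))$ for \emph{all}~$j'$, and~\eqref{eq:Dinter} immediately yields $\Dinter(j,i)=\calD(j,i)$. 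No block-counting or swap argument is needed.
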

\begin{proof}
Let $i_{r}$ denote the $r$-th column $i$ for which $D(j,i)$ is
finite for all $j$. We use strong induction on $r$. For the base
case $r=1$, the column is~0 and the claim is true by Definition~\ref{def:D-base-case}.

For the induction step, suppose that the claim is true for all $i_{1},i_{2},\dots,i_{r}$
and suppose that $D(j,i_{r+1})$ is finite for all $j$. By Definition~\ref{def:D-recursive-part},
\[
D(j,i_{r+1})=\Dinter(j,i_{r+1})
\]
 for all~$j$. We will show that 
\begin{equation}
\Dinter(j,i_{r+1})=\calD(j,i_{r+1}).\label{eq:Dinter-complete}
\end{equation}

Observe that for any $i$ and $j$, if $D(j,i)$ is finite then so
is $D(j+1,i)$. Since $\Dinter(-1,i_{r+1})$ is finite, by Equation~(\ref{eq:Dinter}),
$D(-1,\pred(i_{r+1}))$ is finite, hence $D(j,\pred(i_{r+1}))$ is
finite for all~$j$. By the induction hypothesis, 
\[
D(j,\pred(i_{r+1}))=\calD(j,\pred(i_{r+1}))
\]
 for all~$j$. Thus, by Equation~(\ref{eq:Dinter}), Equation~(\ref{eq:Dinter-complete})
holds for all $ $$j$.
\end{proof}

Before showing that $D(n-1,i)=\calD(n-1,i)$ for all~$i$ we give
a property of smallest-weight paths in the DAG.
\begin{lem}
\label{lem:passes} For any $i\geq i'$ and $j\geq j'$, no smallest-weight
path from $(-1,-1)$ to $(j,i)$ can go via the node $(j',i')$ if
\begin{equation}
\calD(j',i')~>~\calD(j,i')-(j-j')+2(i-i').\label{eq:passes}
\end{equation}
\end{lem}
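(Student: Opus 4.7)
I would prove Lemma~\ref{lem:passes} by contrapositive: assume that some smallest-weight path $P$ from $(-1,-1)$ to $(j,i)$ passes through $(j',i')$, and deduce the negation of~(\ref{eq:passes}), namely $\calD(j',i') \leq \calD(j,i') - (j-j') + 2(i-i')$.

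The argument would rest on combining three simple ingredients. First, since $P$ is optimal and visits $(j',i')$,
\[
\calD(j,i) \;=\; \calD(j',i') + \|(j',i') \rightsquigarrow (j,i)\|.
\]
Second, I would set up an alternative route that goes through $(j,i')$: a smallest-weight path to $(j,i')$ followed by $(i-i')$ horizontal edges (each of weight at most $1$) gives the upper bound $\calD(j,i) \leq \calD(j,i') + (i-i')$. Third, I would lower-bound the segment weight $\|(j',i') \rightsquigarrow (j,i)\|$: any path from $(j',i')$ to $(j,i)$ uses $d$ diagonal, $b$ down and $c$ right edges, with $d+b=j-j'$ and $d+c=i-i'$; so $d \leq i-i'$ and hence $b \geq (j-j')-(i-i')$. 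Since every down edge has weight $1$, the path weight is at least $b \geq (j-j')-(i-i')$ (trivially when the right-hand side is non-positive). Chaining the first two bounds and then plugging in the third yields
\[
\calD(j',i') \;\leq\; \calD(j,i') + (i-i') - \|(j',i') \rightsquigarrow (j,i)\| \;\leq\; \calD(j,i') + (i-i') - (j-j') + (i-i'),
\]
which is exactly $\calD(j',i') \leq \calD(j,i') - (j-j') + 2(i-i')$, contradicting~(\ref{eq:passes}).

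The only subtlety I anticipate is the row-$-1$ degeneracy, where horizontal edges carry weight $0$ rather than $1$. This only makes the competitor bound $\calD(j,i) \leq \calD(j,i') + (i-i')$ sharper, and it does not affect the lower bound on $\|(j',i') \rightsquigarrow (j,i)\|$ since that argument counts only down edges, which always have weight $1$. Thus no extra case distinction is required, and I do not foresee any substantial obstacle beyond this small sanity check.
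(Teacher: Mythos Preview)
Your proof is correct and follows essentially the same approach as the paper: both combine the lower bound $\|(j',i')\rightsquigarrow(j,i)\|\geq (j-j')-(i-i')$ on the suffix weight with the competing path through $(j,i')$ of weight at most $\calD(j,i')+(i-i')$. The paper phrases it as a direct argument (assume~(\ref{eq:passes}) and exhibit a strictly shorter path) while you take the contrapositive, but the ingredients and logic are identical.
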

\begin{proof}
Let $P$ be any path in the DAG from $(-1,-1)$ to $(j,i)$ that passes
through $(j',i')$. The weight of $P$ is at least
\begin{equation}
\calD(j',i')+(j-j')-(i-i').\label{eq:smallest-path}
\end{equation}
To see this, first observe that the fact is immediately true if $(j-j')\leq(i-i')$.
For $(j-j')>(i-i')$, observe that any path from $(-1,-1)$ to $(j,i)$
via $(j',i')$ can contain at most $(i'-i)$ diagonal edges of weight
zero between $(j',i')$ and $(j,i)$.

Now suppose that Inequality~(\ref{eq:passes}) holds for $P$. We
will show that $P$ cannot be a smallest-weight path from $(-1,-1)$
to $(j,i)$. By combining Inequality~(\ref{eq:passes}) with~(\ref{eq:smallest-path})
we have that the weight of $P$ is strictly greater than
\[
\Big(\calD(j,i')-(j-j')+2(i-i')\Big)+(j-j')-(i-i')\,=\,\calD(j,i')+(i-i').
\]
To see why $P$ cannot be a smallest-weight path, consider the path
$P'$ that goes from $(-1,-1)$ to $(j,i)$ via the node $(j,i')$.
The weight of $P'$ is at most
\[
\calD(j,i')+(i-i')
\]
 as we can take a smallest-weight path from $(-1,-1)$ to $(j,i)$
and then follow $(i-i')$ horizontal edges to $(i,j)$. Thus, the
weight of~$P'$ is less than the weight of~$P$.
\end{proof}
We can now prove that the output from Algorithm~\ref{alg:upper-first} is correct.
\begin{lem}
\label{lem:cor} For all $i$, $D(n-1,i)=\calD(n-1,i)$.\end{lem}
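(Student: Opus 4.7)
The plan is to prove, by strong induction on $i$, the stronger statement that for every $j$, if $D(j,i)$ is finite then $D(j,i) = \calD(j,i)$. This implies the lemma because $D(n-1,i)$ is always finite: the top row satisfies $\progression(n-1,i) = 1 \leq 3(i-\pred(i))$, and $\Dinter(n-1,i)$ is finite because by the inductive hypothesis $D(n-1,\pred(i))$ is finite and thus contributes a finite term to its defining minimum.

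The base case $i=0$ is Definition~\ref{def:D-base-case}. For the inductive step, fix $(j,i)$ with $D(j,i)$ finite, so $D(j,i) = \Dinter(j,i)$. The inequality $\Dinter(j,i) \geq \calD(j,i)$ is routine: by the inductive hypothesis, each finite $D(j',\pred(i))$ in the minimum equals $\calD(j',\pred(i))$, so every term is the weight of a genuine path from $(-1,-1)$ to $(j,i)$ through $(j',\pred(i))$ and is hence at least $\calD(j,i)$.

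For the opposite inequality $\Dinter(j,i) \leq \calD(j,i)$, I would pick any smallest-weight path $P^{*}$ from $(-1,-1)$ to $(j,i)$ and let $(j^{*},\pred(i))$ be its node in column $\pred(i)$, so that $\calD(j^{*},\pred(i)) + \|(j^{*},\pred(i)) \rightsquigarrow (j,i)\| = \calD(j,i)$. It suffices to show $D(j^{*},\pred(i))$ is finite, since by induction it then equals $\calD(j^{*},\pred(i))$, and the $j'=j^{*}$ term in the minimum for $\Dinter(j,i)$ is already $\calD(j,i)$. By the definition of $D$ this reduces to bounding $\progression(j^{*},\pred(i)) \leq 3(\pred(i)-\pred_2(i))$. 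Applying Lemma~\ref{lem:passes} to $P^{*}$ with the node $(j^{*},\pred(i))$ gives
\[
\calD(j^{*},\pred(i)) - \calD(j,\pred(i)) + (j-j^{*}) \leq 2(i-\pred(i)).
\]
Writing $a$, $b$, $c$ for the numbers of $+1$, $0$, and $-1$ transitions of $\calD(\cdot,\pred(i))$ read from row $n-1$ upward to row $j^{*}$, the left-hand side equals $2a+b$, and $\progression(j^{*},\pred(i)) = 1+a+b \leq 1+(2a+b) \leq 2(i-\pred(i))+1$.

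The remaining step is the numerical inequality $3(\pred(i)-\pred_2(i)) \geq 2(i-\pred(i))+1$, proved by a short case analysis on the behaviour of $\pred$: with $d_1 = i-\pred(i)$ and $d_2 = \pred(i)-\pred_2(i)$, whenever $\pred(i) = \mask(i) = i-2^a$ the least significant one of $\pred(i)$ lies at position $\geq a+1$, so $d_2 \geq 2d_1$ and $3d_2 \geq 6d_1 \geq 2d_1+1$; whenever the $i-n$ fallback of $\pred$ is invoked at either step, one verifies directly (using that $n$ is a power of $2$) that $d_2 \geq n \geq d_1$, and $3d_2 \geq 3n \geq 2d_1+1$ again. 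The main obstacle I anticipate is the identification of the $\Dinter$-based block count $\progression(j^{*},\pred(i))$ with the $\calD$-based count $1+a+b$ used above: this requires that $\Dinter(\cdot,\pred(i))$ and $\calD(\cdot,\pred(i))$ agree on every row from $n-1$ down to $j^{*}$, an auxiliary structural fact that fits into the same induction, since the $\infty$-region of $\Dinter(\cdot,\pred(i))$ is a contiguous block at the top of the DAG and, by the Lemma~\ref{lem:passes} bound just derived, $j^{*}$ always lies safely inside the finite portion.
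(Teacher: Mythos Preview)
Your inductive scheme is structurally different from the paper's proof, which argues by contradiction and traces a shortest path to $(n-1,i)$ back through the iterated columns $\pred_1(i),\pred_2(i),\dots$ until it reaches a \emph{complete} column (all $D$-values finite), and only then applies Lemma~\ref{lem:passes} together with the bit-arithmetic inequality $\pred_p(i)-\pred_{p+1}(i)\geq i-\pred_p(i)$. Your approach instead does a single step of induction from $\pred(i)$ to $i$ and strengthens the hypothesis to all $j$.

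The proposal has a genuine gap. You assert that the left-hand side $\calD(j^{*},\pred(i)) - \calD(j,\pred(i)) + (j-j^{*})$ equals $2a+b$, where $a,b,c$ count the $+1,0,-1$ transitions of $\calD(\cdot,\pred(i))$ \emph{from row $n-1$} up to $j^{*}$. But the identity $2a+b = \calD(j^{*},\pred(i)) - \calD(n-1,\pred(i)) + ((n-1)-j^{*})$ involves row $n-1$, not the generic row $j$ appearing in your Lemma~\ref{lem:passes} bound. The two expressions coincide only when $j=n-1$, whereas your stronger inductive hypothesis forces you to treat arbitrary $j$. If instead you count transitions from $j$ to $j^{*}$, you get $a'+b'\leq 2(i-\pred(i))$, which bounds the \emph{increment} $\calD\text{-block}(j^{*})-\calD\text{-block}(j)$ at column $\pred(i)$, not $\progression(j^{*},\pred(i))$ itself; and nothing in the hypothesis ``$D(j,i)$ is finite'' (a block condition at column $i$) controls $\calD\text{-block}(j)$ at column $\pred(i)$. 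So the chain of inequalities leading to $\progression(j^{*},\pred(i))\leq 3(\pred(i)-\pred_2(i))$ does not close for general $j$.

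The paper's argument sidesteps this precisely because it never leaves row $n-1$ as the reference point: it works only with a shortest path to $(n-1,i)$, and the block inequality $\Dinter(n-1,\pred_p(i))-\Dinter(v_p)\leq (n-1)-j_p-(\progression(v_p)-1)$ is stated directly at row $n-1$, so Lemma~\ref{lem:passes} applies with $j=n-1$. The price is that the induction is not on $i$ but on the iterated predecessor chain, and the needed numerical fact becomes $\pred_p(i)-\pred_{p+1}(i)\geq i-\pred_p(i)$ (relating step $p$ to the \emph{cumulative} distance back to $i$), not your single-step $3d_2\geq 2d_1+1$. Your auxiliary observations about $\Dinter$ versus $\calD$ block counts and the case analysis on $\pred$ are fine, but they do not rescue the core miscount.
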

\begin{proof}
Let $P$ be any smallest-weight path from $(-1,-1)$ to $(n-1,i)$
in the DAG. Figure~\ref{fig:correctness} illustrates an example
of $P$ and might be helpful when going through the proof. The proof
is by contradiction. Therefore, suppose that the Lemma is not true
for $i$.
\begin{figure}[t]
\centering{}\includegraphics{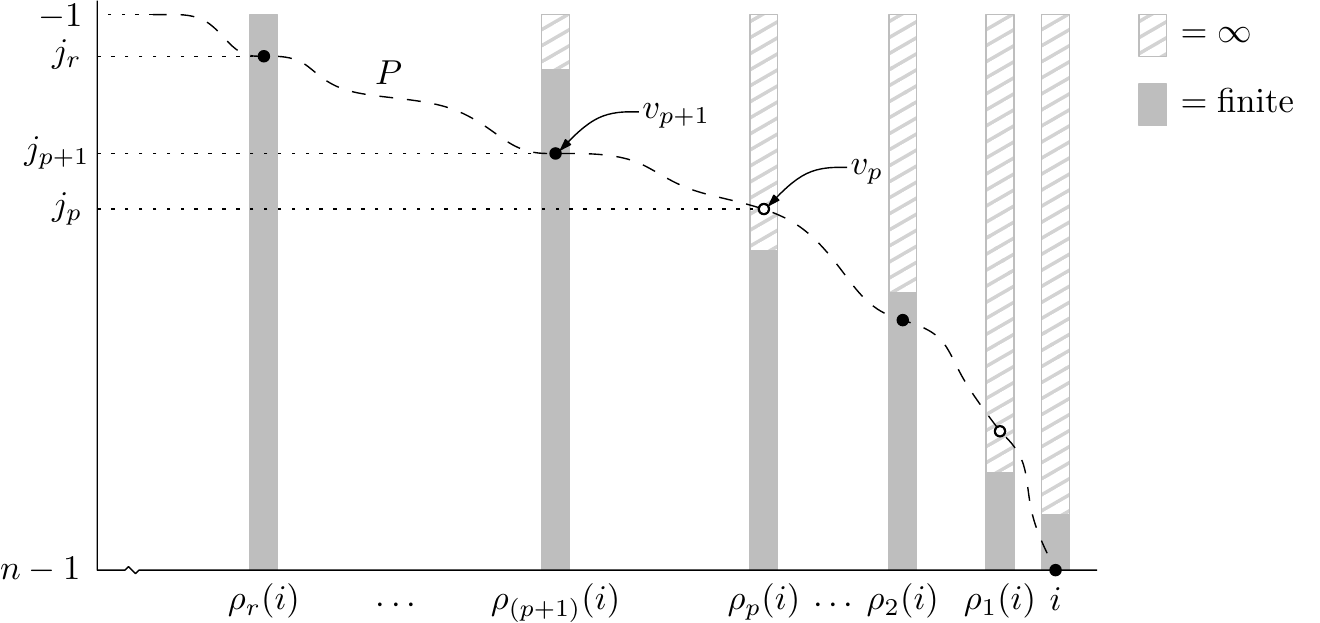} \protect\caption{Here $P$ is a smallest-weight path from $(-1,-1)$ to $(n-1,i)$.\label{fig:correctness}}
\end{figure}

By Lemma~\ref{lem:complete-column}, $D(j,i)$ cannot be finite for
all $j$, otherwise we have a contradiction. Let $r\geq1$ be the
smallest integer such that $D(j,\pred_{r}(i))$ is finite for all~$j$.
For $k\in\{0,\dots,r\}$, let $ $$(j_{k},\pred_{k}(i))$ be the last
node in column~$\pred_{k}(i)$ visited by $P$, where $j_{0}=n-1$
and $\pred_{0}(i)=i$. Let the node $v_{k}=(j_{k},\pred_{k}(i))$
so that we can write $D(v_{k})$ as a shorthand for $D(j_{k},\pred_{k}(i))$.
$ $ By Lemma~\ref{lem:complete-column}, 
\[
D(v_{r})=\calD(v_{r}).
\]
Let $p$ be the largest value in $\{0,\dots,r-1\}$ such that 
\begin{equation}
D(v_{p})\neq\calD(v_{p}).\label{eq:Dp}
\end{equation}
Since 
\[
D(v_{p+1})=\calD(v_{p+1}),
\]
and $P$ is a smallest-weight path, we have by the definition of $\Dinter(v_{p})$
in Equation~\ref{eq:Dinter} that 
\begin{equation}
\Dinter(v_{p})=\calD(v_{p}).\label{eq:Dtildep}
\end{equation}
Combining Equations~(\ref{eq:Dp}) and~(\ref{eq:Dtildep}) with
Definition~\ref{def:D-recursive-part} implies that 
\[
D(v_{p})=\infty.
\]
From Definition~\ref{def:D-recursive-part} it follows that
\[
3(\pred_{p}(i)-\pred_{p+1}(i))\,<\, n+1
\]
as the number of blocks per column can impossibly
exceed $n+1$. Thus, since $\pred_{p}(i)$ and $\pred_{p+1}(i)$ differ
by less than $(n+1)/3$, we have from the definitions of $\pred(i)$ and $\pred_{p}(i)$ that
\[
\pred_{p+1}(i)=\mask(\pred_{p}(i)),
\]
which implies that for all $k\in\{0,\dots,p\}$
\[
\pred_{k+1}(i)=\mask(\pred_{k}(i)).
\]
 In other words, $\pred_{k+1}(i)$ is
obtained from $\pred_{k}(i)$ by flipping the least significant~1
to~0 in the binary representation. We may therefore conclude that
\begin{equation}
\pred_{p}(i)-\pred_{p+1}(i)\,\geq\, i-\pred_{p}(i).\label{eq:pred-dist}
\end{equation}
To see why this is true, the following example might be helpful:
\[
\begin{aligned}i=\, & \texttt{{10010010100101100}}\\
\pred_{p}(i)=\, & \texttt{{10010010100000000}}\\
\pred_{p+1}(i)=\, & \texttt{{10010010000000000}}\\
i-\pred_{p}(i)=\, & \texttt{{00000000000101100}}\\
\pred_{p}(i)-\pred_{p+1}(i)=\, & \texttt{{00000000100000000}}
\end{aligned}
\]

We have assumed that the statement of the lemma is true, so in order
to show contradiction we will now argue that $v_{p}$ cannot be a
node on any smallest-weight path from $(-1,-1)$ to $(n-1,i)$, in
particular not on~$P$.

Since $\D(v_{p})\neq\Dinter(v_{p})$ we have from Definition~\ref{def:D-recursive-part}
that
\[
\progression(v_{p})\,>\,3(\pred_{p}(i)-\pred_{p+1}(i)).
\]
By the construction of the blocks it follows that
\[
\begin{aligned}\Dinter(n-1,\pred_{p}(i))-\Dinter(v_{p}) & \,\leq\,(n-1)-j_{p}-\big(\progression(v_{p})-1\big)\\
 & \,<\,(n-1)-j_{p}-\big(3(\pred_{p}(i)-\pred_{p+1}(i))-1\big)\\
 & \,\leq\,(n-1)-j_{p}-3(i-\pred_{p}(i))+1,
\end{aligned}
\]
where the last inequality follows uses Inequality~(\ref{eq:pred-dist}).
Observe that Equation~(\ref{eq:Dtildep}) implies that 
\[
\Dinter(n-1,\pred_{p}(i))=\calD(n-1,\pred_{p}(i)),
\]
which means that 
\[
\calD(n-1,\pred_{p}(i))-\calD(v_{p})\,<\,(n-1)-j_{p}-3(i-\pred_{p}(i))+1.
\]
Rearranging the terms gives
\[
\calD(v_{p})\,>\,\calD(n-1,\pred_{p}(i))-\big((n-1)-j_{p}\big)+3(i-\pred_{p}(i))-1,
\]
which can be written as
\begin{align*}
\calD(j',i') & \,>\,\calD(j,i')-(j-j')+3(i-i')-1\\
 & \,>\,\calD(j,i')-(j-j')+2(i-i'),
\end{align*}
where $j=n-1$, $j'=j_{p}$ and $i'=\pred_{p}(i)$. By Lemma~\ref{lem:passes}
we have the node $(j',i')$, or equivalently $v_{p}$, cannot
be a node on the smallest-weight path~$P$. The assumption that the
lemma is false is therefore not correct.
\end{proof}
We omit the analysis of the running time of Algorithm~\ref{alg:upper-first} as it is subsumed by that of Algorithm~\ref{alg:upper-second} which we now describe.

\subsection{A faster cell-probe algorithm for online edit distance \label{sec:upper-faster}}
We can speed up Algorithm~\ref{alg:upper-first} by modifying Step~2 as follows: instead of reading in
$D(j,\pred(i))$ for all $j$, only read in the values $D(j,\pred(i))$
covered by the first $8(i-\pred(i))$ blocks. This
modified version is given in Algorithm~\ref{alg:upper-second}. The
change has no impact on the correctness for the reason that any $j'$
in Equation~(\ref{eq:Dinter}) for which $\progression(j,\pred(i))\geq8(i-\pred(i))$
will never minimise $\Dinter(j,i)$. We prove this claim in Lemma~\ref{lem:progression-cap}
below, but first we give a supporting lemma.

\begin{algorithm}[t]
\protect\caption{~~Online edit-distance in the cell-probe model using $O((\log^{2}n)/w)$ probes \label{alg:upper-second} }

\vspace{8pt}

\emph{Time complexity $O((\log^{2}n)/w)$ per arriving symbol.}\vspace{8pt}

The algorithm is identical to Algorithm~\ref{alg:upper-first}, only
that Step~2 is replaced with this step:
\begin{enumerate}
\item [2.]Read in the values $D(j,\pred(i))$ that are covered by the first
$8(i-\pred(i))$ blocks. Any $D(j,\pred(i))$ not
covered is set to~$\infty$.\end{enumerate}
\end{algorithm}

\begin{lem}
\label{lem:one-diff} For any $i$ and $j$ such that $\Dinter(j-1,i)$
is finite, $\Dinter(j-1,i)\leq\Dinter(j,i)+1$.\end{lem}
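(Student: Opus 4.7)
The plan is to prove the lemma by strong induction on $i$. The key observation is the path-theoretic interpretation of $\Dinter(j,i)$: fixing a starting row $j' \leq j$ with $D(j',\pred(i))$ finite, the expression $D(j',\pred(i)) + \|(j',\pred(i))\rightsquigarrow (j,i)\|$ equals the start cost $D(j',\pred(i))$ plus the standard edit distance between $F[j'+1, j]$ and $S[\pred(i)+1, i]$. In the base case $i = 1$ we have $\pred(1) = 0$ and $D(\cdot,0) = \calD(\cdot,0)$, so $\Dinter(j,1) = \calD(j,1)$ reduces to the standard edit distance, which satisfies the desired $1$-Lipschitz property in the row argument.

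For the inductive step, fix $i \geq 2$ for which $\Dinter(j-1,i)$ is finite, and let $j^*$ be a minimizer in the definition of $\Dinter(j,i)$. If $j^* \leq j-1$, I would appeal to the classical $1$-Lipschitz property of edit distance in its first argument---equivalently, rerouting a shortest path from $(j^*,\pred(i))$ to $(j,i)$ by locating the column at which it first enters row $j$ and following row $j-1$ horizontally instead---to obtain $\|(j^*,\pred(i)) \rightsquigarrow (j-1,i)\| \leq \|(j^*,\pred(i)) \rightsquigarrow (j,i)\| + 1$. Combined with the same start cost $D(j^*,\pred(i))$, this immediately yields $\Dinter(j-1,i) \leq \Dinter(j,i) + 1$.

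The case $j^* = j$ is the main obstacle, because the optimal path for $\Dinter(j,i)$ begins in row $j$ itself, so there is no corresponding path to $(j-1,i)$ with the same starting node. Here $\Dinter(j,i) = D(j,\pred(i)) + (i - \pred(i))$, since the only path from $(j,\pred(i))$ to $(j,i)$ consists of horizontal edges. Since $\Dinter(j-1,i)$ is finite, some $j'' \leq j-1$ has $D(j'',\pred(i))$ finite; and because the finite entries of $D$ in any column form an initial segment of blocks starting from row $n-1$ upward (by the block-based definition of $D$), the value $D(j-1,\pred(i))$ is also finite. Taking the horizontal path from $(j-1,\pred(i))$ to $(j-1,i)$ as a candidate gives $\Dinter(j-1,i) \leq D(j-1,\pred(i)) + (i - \pred(i))$, so it suffices to prove $D(j-1,\pred(i)) \leq D(j,\pred(i)) + 1$. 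Since both sides are finite, they coincide with $\Dinter(j-1,\pred(i))$ and $\Dinter(j,\pred(i))$ respectively, and this reduces precisely to the lemma applied at the earlier column $\pred(i) < i$, which is available by the inductive hypothesis.
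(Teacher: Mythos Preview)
Your proof is correct and follows essentially the same route as the paper: strong induction on $i$, splitting on whether the minimiser $j^*$ equals $j$ (horizontal path, reduce to the inductive hypothesis at column $\pred(i)$) or $j^* < j$ (reroute the optimal path at the point it enters row $j$). You are in fact slightly more careful than the paper in justifying that $D(j-1,\pred(i))$ is finite in the $j^*=j$ case; the only wrinkle is that when $\pred(i)=0$ the values $D(\cdot,0)$ are defined directly as $\calD(\cdot,0)$ rather than via $\Dinter$, so the reduction there should appeal to the $1$-Lipschitz property of $\calD$ rather than to the inductive hypothesis---a cosmetic fix.
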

\begin{proof}
The proof is by strong induction on $i$. The lemma is immediately
true for $i=0$. Now assume the lemma is true for all $i<i'$. Let
$j^{*}$ be the value of any $j'$ that minimises the expression of
Equation~(\ref{eq:Dinter}) and let $P$ a minimising path from $(j^{*},\pred(i))$
to $(i,j)$. We consider two cases.

\textbf{Case 1} ($j^{*}=j$). Here $P$ consists entirely of horizontal
edges. By Equation~(\ref{eq:Dinter}), an upper bound on $\Dinter(j-1,i)$
can be obtained by using only horizontal edges from $(j^{*}-1,\pred(i))$.
By the induction hypothesis, 
\[
\Dinter(j^{*}-1,\pred(i))\,\leq\,\Dinter(j^{*},\pred(i))+1,
\]
hence $\Dinter(j-1,i)$ is upper bounded by $\Dinter(j-1,i)+1$.

\textbf{Case 2} ($j^{*}<j$). By tracing the path $P$ backwards,
starting at the end node $(j,i)$, let $(j-1,i')$ be the first node
visited when moving up from row~$j$. An upper bound on $\Dinter(j-1,i)$
can be obtained by using $P$ until the node $(j-1,i')$, after which
only horizontal edges are followed. Thus, $\Dinter(j-1,i)$ is at
most $\Dinter(j,i)+1$, where the $+1$ term applies if the edge on
$P$ from $(j-1,i')$ is a diagonal edge with weight~0.
\end{proof}
We can now prove correctness of Algorithm~\ref{alg:upper-second},
falling back on the correctness of Algorithm~\ref{alg:upper-first}.
\begin{lem}
\label{lem:progression-cap}For any $i$, $j$ and $j'$ such that
\[
D(j,i)=D\big(j',\pred(i)\big)+\left\Vert (j',\pred(i))\rightsquigarrow(j,i)\right\Vert 
\]
is finite, 
\[
\progression(j',\pred(i))\leq8(i-\pred(i)).
\]
\end{lem}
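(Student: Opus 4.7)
My plan is to squeeze $b' := \progression(j', \pred(i))$ between matching upper and lower bounds on $D(j, i)$. Write $\Delta = i - \pred(i)$, $a = \Dinter(n-1, \pred(i))$ and $a_i = \Dinter(n-1, i)$. The hypothesis that $D(j, i)$ is finite immediately gives, via Definition~\ref{def:D-recursive-part} applied in column $i$, that $b_i := \progression(j, i) \leq 3\Delta$.

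The first step will be a single-column arithmetic identity relating $\Dinter$-values to their block numbers. Fixing any column $c$ and any row $j^*$ whose $\Dinter(\cdot, c)$-value is finite with block number $b^*$, I walk from row $n-1$ down to $j^*$ and partition the $n-1-j^*$ row-to-row differences into counts $n^+, n^0, n^-$ of $+1, 0, -1$ jumps (these are the only possibilities, by Lemma~\ref{lem:one-diff} and the block definition). Then $b^* = 1 + n^+ + n^0$ and $\Dinter(j^*, c) = \Dinter(n-1, c) + n^+ - n^-$. Eliminating $n^-$ via $n^+ + n^0 + n^- = n-1-j^*$ and using $0 \leq n^+ \leq b^* - 1$ will yield
\begin{equation}
\Dinter(n-1, c) + b^* + j^* - n ~\leq~ \Dinter(j^*, c) ~\leq~ \Dinter(n-1, c) + 2 b^* + j^* - n - 1. \label{eq:plan-box}
\end{equation}

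I will then apply the right-hand side of (\ref{eq:plan-box}) in column $i$ with $b_i \leq 3\Delta$, combined with the easy bound $a_i \leq a + \Delta$ (which follows from taking $j^{**} = n-1$ with a purely horizontal path of weight $\Delta$ in the definition of $\Dinter(n-1, i)$), to conclude $D(j, i) \leq a + j + 7\Delta - n - 1$. Applying the left-hand side of (\ref{eq:plan-box}) in column $\pred(i)$ will give $D(j', \pred(i)) \geq a + b' + j' - n$, and since any path from $(j', \pred(i))$ to $(j, i)$ has weight at least $\max(0, (j - j') - \Delta)$ (from the unavoidable unit-weight vertical steps when $j - j' > \Delta$), the hypothesis will force $D(j, i) \geq a + b' + j' - n + \max(0, (j - j') - \Delta)$.

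Combining these two bounds gives $b' + j' + \max(0, (j - j') - \Delta) \leq j + 7\Delta - 1$. A short case split then finishes: if $j - j' \leq \Delta$ the maximum vanishes and $b' \leq (j - j') + 7\Delta - 1 \leq 8\Delta - 1$; if $j - j' > \Delta$ the $j'$ terms cancel and again $b' \leq 8\Delta - 1$. Either way $b' \leq 8\Delta$, as required. The main obstacle will be setting up the block-arithmetic identity (\ref{eq:plan-box}) carefully enough that it applies under the peculiar truncation rule of Definition~\ref{def:D-recursive-part} (which only affects the block cap, not individual values within surviving blocks); once it is in place the rest is a short calculation.
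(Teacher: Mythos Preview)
Your proposal is correct and follows essentially the same approach as the paper's proof: both squeeze $D(j,i)$ between an upper bound coming from the block structure in column~$i$ (using $\progression(j,i)\leq 3\Delta$) and a lower bound coming from the block structure in column~$\pred(i)$ plus the path-weight estimate $\lVert(j',\pred(i))\rightsquigarrow(j,i)\rVert\geq (j-j')-\Delta$, together with $\Dinter(n-1,i)\leq \Dinter(n-1,\pred(i))+\Delta$. The only cosmetic differences are that you argue directly while the paper argues by contradiction, and you package the block arithmetic as the clean two-sided inequality~\eqref{eq:plan-box}; one small point to make explicit is that row-to-row $\Dinter$ differences are bounded \emph{below} by $-1$ as well (extend any path by a single vertical edge), since Lemma~\ref{lem:one-diff} only supplies the upper bound.
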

\begin{proof}
Figure~\ref{fig:speedup} illustrates the variables introduced in
the proof. Let $\Delta=i-\pred(i)$ and let node $v=(n-1,i)$ and
$w=(n-1,\pred(i))$. Let~$x=(j,i)$ be the topmost node in column~$j$
such that $D(x)$ is finite. Let $y=(j',\pred(i))$ be any node such
that
\[
D(x)=D(y)+\left\Vert y\rightsquigarrow x\right\Vert .
\]
Thus,
\begin{figure}[t]
\centering{}\includegraphics{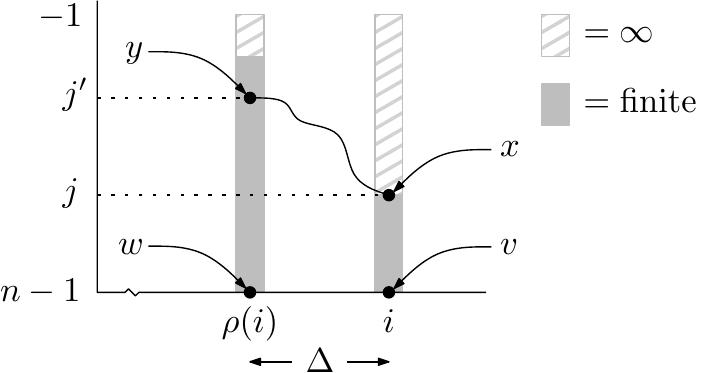} \protect\caption{Diagram supporting the proof of Lemma~\ref{lem:progression-cap}.\label{fig:speedup}}
\end{figure}
\begin{equation}
D(x)\,\geq\, D(y)+(j-j')-\Delta.\label{eq:speedup-contradiction}
\end{equation}

Now assume $\progression(y)>8\Delta$. We will show that this leads
to contradiction. First, observe that
\[
D(y)\,\geq\, D(w)-\big((n-1)-j'\big)+(\progression(y)-1)\,\geq\, D(w)-n+j'+8\Delta.
\]
By Definition~\ref{def:D-recursive-part}, $\progression(x)\leq3\Delta$.
By Lemma~\ref{lem:one-diff}, the start value of a block
is at most the end value of the previous block plus
one. Thus,
\[
D(x)\,\leq\, D(v)-\big((n-1)-j\big)+2(\progression(x)-1)\,\leq\, D(v)-n-1+j+6\Delta.
\]
Plugging the last two inequalities into Inequality~(\ref{eq:speedup-contradiction})
gives
\[
D(v)-n-1+j+6\Delta\,\geq\,\big(D(w)-n+j'+8\Delta\big)+(j-j')-\Delta
\]
which simplifies to
\[
D(v)\,\geq\, D(w)+\Delta+1.
\]
To see why this last inequality cannot be true, observe that $D(v)$
is never more than $D(w)+\Delta$, which is obtained through Equation~(\ref{eq:Dinter})
by using only horizontal edges from $w$ to~$v$. 
\end{proof}
It remains to argue that the running time of Algorithm~\ref{alg:upper-second}
is $O((\log^{2}n)/w)$ per arriving symbol. In Step~1 of the algorithm,
$i-\pred(i)$ symbols of $S$ are read. Each symbol is specified with
$\delta=O(\log n)$ bits. In Step~2, up to $8(i-\pred(i))$ blocks are read. Each one can be specified in $O(\log n)$ bits.
In Step~3, up to $3(i-\pred(i))$ blocks are written
to memory. Thus, when the symbol $S[i]$ arrives, no more than a constant
times $(i-\pred(i))\cdot\log n$ bits are read or written. To answer
the question of how many bits are read or written over a window of
$n$ arriving symbols, starting at any arrival~$i'$, we first give
the following fact.
\begin{lem}
\label{lem:second-bit-1}For any $i'>0$,
\[
\sum_{i=i'}^{i'+n-1}\big(i-\pred(i)\big)=O(n\log n).
\]
\end{lem}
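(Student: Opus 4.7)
The plan is to relate $i - \pred(i)$ to the 2-adic valuation of $i$ and then sum the contributions bucket by bucket.

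First I would observe that, from the definition of $\pred$, one always has $i - \pred(i) \leq \min(i - \mask(i),\, n)$. If we let $v_2(i)$ denote the position of the least significant $1$ in the binary representation of $i$, then $i - \mask(i) = 2^{v_2(i)}$. So the summand is at most $\min(2^{v_2(i)}, n)$. The remaining task is therefore purely arithmetic: bound $\sum_{i=i'}^{i'+n-1} \min(2^{v_2(i)}, n)$.

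Next I would bucket the integers in the range $[i', i'+n-1]$ by their 2-adic valuation. For each $k \geq 0$, the integers with $v_2(i) = k$ are exactly those congruent to $2^k \pmod{2^{k+1}}$, so in any window of $n$ consecutive integers there are at most $\lceil n / 2^{k+1} \rceil \leq n/2^{k+1} + 1$ of them. Recalling the standing assumption that $n$ is a power of two, for $k \geq \log_2 n$ these occurrences are multiples of $2^{\lceil \log_2 n \rceil} \geq n$, so at most one such $i$ falls in the window, and its contribution is at most $n$.

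Putting the two regimes together gives
\[
\sum_{i=i'}^{i'+n-1}(i-\pred(i)) \;\leq\; \sum_{k=0}^{\log_2 n - 1} \left(\frac{n}{2^{k+1}}+1\right)\cdot 2^{k} \;+\; 1 \cdot n \;=\; \frac{n\log_2 n}{2} + (n-1) + n,
\]
which is $O(n\log n)$, as desired. The only step that requires any care is confirming that the high-valuation tail contributes only $O(n)$, and this is where the assumption that $n$ is a power of two makes the bookkeeping clean; otherwise one would simply replace $\log_2 n$ with $\lceil \log_2 n \rceil$ and absorb a factor of $2$ into the constant.
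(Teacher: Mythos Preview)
Your argument is correct and is essentially the same as the paper's: both bound $i-\pred(i)$ by $2^{v_2(i)}$ (capped at $n$) and then bucket the window by the position of the least significant~$1$. The only difference is cosmetic---the paper dispenses with general $i'$ via a one-line ``without loss of generality $i'=1$'' and then writes $\sum_{a=0}^{\log n} 2^a\cdot 2^{(\log n)-a}$, whereas you keep $i'$ arbitrary and do the per-bucket count explicitly; your treatment is slightly more careful but not materially different.
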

\begin{proof}
Since we sum over $n$ consecutive values we may without loss of generality
assume that $i'=1$. Let $\first(i)$ denote the position of the least
significant~1 in the binary representation of $i$. For example,
if $i=110100$ (in binary), $\first(i)=2$. The sum can be written
as
\[
\sum_{i=1}^{n}2^{\first(i)}\,\leq\,\sum_{a=0}^{\log n}\big(2^{a}\cdot2^{(\log n)-a}\big)\,=\, O(n\log n).\tag*{\qedhere}
\]

\end{proof}
We conclude that over a window of $ $$n$ arriving symbols, a total
of $O(n\log n\cdot\log n)$ bits are read or written, hence the algorithm
performs $O(n(\log^{2}n)/w)$ cell probes. Amortised over the $n$
arriving symbols, the number of cell probes per arrival is 
\[
O{\left(\frac{\log^{2}n}{w}\right)}.
\]

\subsection{Preprocessing the fixed string $F$ \label{sec:preprocessing}}

Both Algorithms~\ref{alg:upper-first} and~\ref{alg:upper-second}
require that the fixed string~$F$ is known so that after Step~1,
the relevant part of the DAG can be determined. If $F$ had not been
known, the algorithm would have had to probe cells in order to also
read in a sufficiently large portion of $F$. Unlike the number $i-\pred(i)$
of symbols being read from $S$, the number of symbols needed from
$F$ could potentially span the whole string.

By exploiting the power of the cell-probe model, we may nevertheless
design a generic algorithm that takes $F$ as part of the input and
has a preprocessing step before the first symbol arrives in the stream.
Let 
\[
\Phi=\{0,\dots,n+1\}^{n+1}.
\]
For any~$i$, any sequence $D(-1,i),\dots,D(n-1,i)$ corresponds
to a unique element $\phi\in\Phi$ such that
\[
\phi=\big(D(-1,i),\dots,D(n-1,i)\big),
\]
where $D(j,i)=\infty$ is replaced with the value~$n+1$. When $F$
is part of the input, we may precompute all possible values written
to memory in Step~3 by considering $F$, every $\phi\in\Phi$ and every
string of maximum length~$n$ from
\[
\Gamma=\bigcup_{k=0}^{n}[2^{\delta}]^{k},
\]
where $[2^{\delta}]$ is the alphabet. The precomputed values are
inserted into a large dictionary. Thus, after Step~2, the values
to write to memory in Step~3 are fetched from the dictionary, where
the key is in $\Phi\times\Gamma$ and its value is in~$\Phi$. The
size of the dictionary is of course infeasibly large, and the preprocessing
stage involves an exponential number of cell writes. Nevertheless,
there is no additional cost of looking up a key in the dictionary
to the cost of probing the cells holding the key and the value associated
with it. 

The conclusion is that adding a preprocessing step and replacing Step~3
of Algorithm~\ref{alg:upper-second} with a dictionary lookup gives
us the desired upper bound of Theorem~\ref{thm:edit-upper}.

\printbibliography

%
%




\end{document}